\def\colorful{1}
\def\nnewcolor{1}
\newtheorem{theorem}{Theorem}[section]
\newtheorem{lemma}[theorem]{Lemma}
\newtheorem{informal theorem}[theorem]{Theorem (informal statement)}
\newtheorem{proposition}[theorem]{Proposition}
\newtheorem{corollary}[theorem]{Corollary}
\newtheorem{claim}[theorem]{Claim}
\newtheorem{fact}[theorem]{Fact}
\theoremstyle{definition}
\newtheorem{definition}[theorem]{Definition}
\newtheorem{remark}[theorem]{Remark}
\newtheorem{problem}[theorem]{Problem}
\newcommand{\eqdef}{\stackrel{{\mathrm {\footnotesize def}}}{=}}
  \def\cstar#1{\expandafter\@cstar\csname c@#1\endcsname}
  \def\@cstar#1{\ifcase#1\or $\ast$\or $\ast\ast$\or $\ast\ast\ast$\fi}
  \AddEnumerateCounter{\cstar}{\@cstar}{$\ast\ast\ast$}
\newcommand{\p}{\mathbf{P}}
\newcommand{\R}{\mathbb{R}}
\newcommand{\Z}{\mathbb{Z}}
\newcommand{\N}{\mathbb{N}}
\newcommand{\eps}{\epsilon}
\newcommand{\poly}{\mathrm{poly}}
\newcommand{\var}{\mathbf{Var}}
\newcommand{\dotp}[2]{\langle #1, #2 \rangle}
\newcommand{\cA}{\mathcal{A}}
\newcommand{\cC}{\mathcal{C}}
\newcommand{\cD}{\mathcal{D}}
\newcommand{\cE}{\mathcal{E}}
\newcommand{\cF}{\mathcal{F}}
\newcommand{\cH}{\mathcal{H}}
\newcommand{\cL}{\mathcal{L}}
\newcommand{\cN}{\mathcal{N}}
\newcommand{\cP}{\mathcal{P}}
\newcommand{\cQ}{\mathcal{Q}}
\newcommand{\cS}{\mathcal{S}}
\newcommand{\cZ}{\mathcal{Z}}
\newcommand{\dx}{\mathrm{dx}}
\newcommand{\dy}{\mathrm{dy}}
\newcommand{\snorm}[2]{\left\| #2 \right\|_{#1}}
\DeclareMathOperator*{\E}{\mathbf{E}}
\DeclareMathOperator*{\pr}{\mathbf{Pr}}
\DeclarePairedDelimiter\abs{\lvert}{\rvert}
\title{Statistical Query Lower Bounds for \\ 
List-Decodable Linear Regression}
\author{
Ilias Diakonikolas\thanks{Supported by NSF Award CCF-1652862 (CAREER), a Sloan Research Fellowship, and 
a DARPA Learning with Less Labels (LwLL) grant.}\\
University of Wisconsin-Madison\\
{\tt ilias@cs.wisc.edu}\\
\and
Daniel M. Kane\thanks{Supported by NSF Award CCF-1553288 (CAREER) and a Sloan Research Fellowship.}\\
University of California, San Diego\\
{\tt dakane@cs.ucsd.edu}
\and
Ankit Pensia\thanks{Supported by NSF Award DMS-1749857.}\\
University of Wisconsin-Madison\\
{\tt ankitp@cs.wisc.edu}\\
\and
Thanasis Pittas\thanks{Supported in part by NSF Award DMS-2023239 (TRIPODS).}\\
University of Wisconsin-Madison\\
{\tt pittas@wisc.edu}\\
\and
Alistair Stewart\\
Web 3 Foundation\\
{\tt stewart.al@gmail.com}\\
}
\begin{document}

\maketitle

\begin{abstract}
We study the problem of list-decodable linear regression, where an adversary
can corrupt a majority of the examples. Specifically, we are given a set $T$ of labeled examples
$(x, y) \in \R^d \times \R$ and a parameter $0< \alpha <1/2$ such that an $\alpha$-fraction
of the points in $T$ are i.i.d.\ samples from a linear regression model with Gaussian covariates,
and the remaining $(1-\alpha)$-fraction of the points are drawn from an arbitrary noise distribution.
The goal is to output a small list of hypothesis vectors such that at least one of them is close to the target regression vector. 
Our main result is a Statistical Query (SQ) lower bound of $d^{\poly(1/\alpha)}$ for this problem.
Our SQ lower bound qualitatively matches the performance of previously developed algorithms,
providing evidence that current upper bounds for this task are nearly best possible.
\end{abstract}

\setcounter{page}{0}

\thispagestyle{empty}

\newpage

\section{Introduction} \label{sec:intro}

\subsection{Background and Motivation} \label{ssec:background}

Linear regression is one of the oldest and most fundamental statistical tasks
with numerous applications in the sciences~\cite{Rousseeuw:1987, Dielman01, McD09}.
In the standard setup, the data are labeled examples $(x^{(i)}, y^{(i)})$,  
where the examples (covariates) $x^{(i)}$ are i.i.d.\ samples from a distribution $D_x$ on $\R^d$ 
and the labels $y^{(i)}$ are noisy evaluations of a linear function. More specifically, each label 
is of the form $y^{(i)} = \beta \cdot x^{(i)} + \eta^{(i)}$, 
where $\eta^{(i)}$ is the observation noise, for an unknown target regression vector $\beta \in \R^d$. 
The objective is to approximately recover the hidden regression vector. 
In this basic setting, linear regression is well-understood. For example,
under Gaussian distribution, the least-squares estimator is known to be statistically  
and computationally efficient. %

Unfortunately, classical efficient estimators inherently fail in the presence of even a very small 
fraction of adversarially corrupted data. In several applications of modern data analysis, 
including machine learning security~\cite{Barreno2010,BiggioNL12, SteinhardtKL17, DKK+19-sever} 
and exploratory data analysis, e.g., in biology~\cite{RP-Gen02, Pas-MG10, Li-Science08},
typical datasets contain arbitrary or adversarial outliers. Hence, 
it is important to understand the algorithmic possibilities and fundamental limits 
of learning and inference in such settings. 
Robust statistics focuses on designing estimators tolerant to a small amount
of contamination, where the outliers are the {\em minority} of the dataset. 
Classical work in this field~\cite{HampelEtalBook86, Huber09} developed 
robust estimators for various basic tasks, alas with exponential runtime. 
More recently, a line of work in computer science, 
starting with~\cite{DKKLMS16, LaiRV16}, developed the first computationally efficient robust 
learning algorithms for various high-dimensional tasks. Subsequently, there has been 
significant progress in algorithmic robust statistics by several communities, 
see~\cite{DK19-survey} for a survey on the topic.

In this paper, we study high-dimensional robust linear regression 
in the presence of a {\em majority} of adversarial outliers.
As we explain below, in several applications, 
asking for a minority of outliers is too strong of an assumption. 
It is thus natural to ask what notion of learning can capture the regime 
when the clean data points (inliers) constitute the {\em minority} of the dataset. 
While outputting a {\em single} accurate hypothesis in this regime
is information-theoretically impossible, one may be able to compute a {\em small list}
of hypotheses with the guarantee that {\em at least one of them} is accurate.
This relaxed notion is known as \emph{list-decodable learning}~\cite{BalcanBV08, CSV17},
formally defined below.

\begin{definition}[List-Decodable Learning]\label{def:ld}
Given a parameter $0<\alpha < 1/2$ and a distribution family $\cD$ on $\R^d$, 
the algorithm specifies $n \in \Z_+$ and observes $n$ i.i.d.\ samples from a distribution 
$E = \alpha D + (1{-\alpha}) N$, where $D$ is an unknown distribution in $\cD$ 
and $N$ is arbitrary. We say $D$ is the distribution of inliers, 
$N$ is the distribution of outliers, and $E$ is an $(1{-\alpha})$-corrupted version of $D$. 
Given sample access to an $(1{-\alpha})$-corrupted version of $D$, the goal 
is to output a ``small'' list of hypotheses {$\cL$} at least one of which is (with high probability) close to the target parameter of $D$.
\end{definition}

We note that a list of size $O(1/\alpha)$ typically suffices; an algorithm with a 
$\poly(1/\alpha)$ sized list, or even a worse function of $1/\alpha$ (but independent of the dimension $d$) 
is also considered acceptable.

Natural applications of list-decodable learning include crowdsourcing, 
where a majority of participants could be unreliable~\cite{SVC16-nips,MeisterV18}, 
and semi-random community detection in stochastic block models~\cite{CSV17}. 
List-decoding is also useful in the context of semi-verified learning~\cite{CSV17,MeisterV18}, 
where a learner can audit a very small amount of trusted data.
If the trusted dataset is too small to directly learn from, using a list-decodable learning procedure, 
one can pinpoint a candidate hypothesis consistent with the verified data. 
Importantly, list-decodable learning generalizes
the task of learning mixture models, see, 
e.g.,~\cite{DV89, JJ94, ZJD16, LL18, KC19, CLS20, DK20-ag} 
for the case of linear regression studied here. 
Roughly speaking, by running a list-decodable estimation procedure 
with the parameter $\alpha$ equal to the smallest mixing weight, each true cluster of points 
is an equally valid ground-truth distribution, so the output list must contain candidate parameters 
close to each of the true parameters. 

In list-decodable linear regression (the focus of this paper),
$D$ is a distribution on pairs $(X,y)$, where $X$ is a standard Gaussian on $\R^d$, 
$y$ is approximately a linear function of $x$, and the algorithm 
is asked to approximate the hidden regressor.
The following definition specifies the distribution family $\cD$ of the inliers 
for the case of linear regression with Gaussian covariates. 

\begin{definition}[Gaussian Linear Regression]\label{def:glr} 
Fix $\sigma > 0$. For $\beta \in \R^d$, let $D_{\beta}$ be the distribution over 
$(X,y)$, $X\in \R^d$, $y \in \R$, such that $X \sim \cN(0,I_d)$ and $y = \beta^T X + \eta $, 
where $\eta \sim \cN(0,\sigma^2)$ independently of $X$. 
We define $\cD$ to be the set $\{D_{\beta}: \beta \in S'\}$ for some set $S' \subseteq \R^d$. 
\end{definition}

Recent algorithmic progress~\cite{KKK19-list, RY19-list} has been made on this problem 
using the SoS hierarchy. The guarantees in~\cite{KKK19-list, RY19-list} are very far from the
information-theoretic limit in terms of sample complexity. 
In particular, they require $d^{\poly(1/\alpha)}$ samples and time to obtain non-trivial error guarantees (see Table~\ref{tab:results}): 
\cite{KKK19-list} obtains an error guarantee of $O(\sigma/\alpha)$ 
with a list of size $O(1/\alpha)$, whereas \cite{RY19-list} obtains an error guarantee 
of $O(\sigma/\alpha^{3/2})$ with a list of size $(1/\alpha)^{O(\log(1/\alpha))}$.

On the other hand, as shown in this paper (see Theorem~\ref{thm:sc}), 
$\poly(d/\alpha)$ samples information-theoretically suffice to obtain near-optimal error guarantees. 
This raises the following natural question:
\begin{center}
{\em What is the complexity of list-decodable linear regression? \\ 
Are there efficient algorithms with significantly better sample-time tradeoffs?}
\end{center}

We study the above question in a natural and well-studied restricted model of computation,
known as the Statistical Query (SQ) model~\cite{Kearns:98}.
As the main result of this paper, we prove strong SQ lower bounds for this problem.
Via a recently established equivalence~\cite{brennan2020statistical}, 
our SQ lower bound also implies low-degree testing lower bounds for this task.
Our lower bounds can be viewed as evidence that current upper bounds for this problem may be
qualitatively best possible.

Before we state our contributions in detail, we give some background on
SQ algorithms. SQ algorithms are a broad class of algorithms
that  are only allowed to query expectations of bounded functions of the distribution
rather than directly access samples. 
Formally, an SQ algorithm has access to the following oracle.

\begin{definition}[STAT Oracle] \label{def:stat}
Let $D$ be a distribution on $\R^d$. A statistical query is a bounded function $q : \R^d \to [-1,1]$. 
For $\tau>0$, the $\mathrm{STAT}(\tau)$ oracle responds to the query $q$ with a value $v$ 
such that $|v - \E_{X \sim D}[q(X)] | \leq \tau$.
We call $\tau$ the tolerance of the statistical query.
\end{definition}

The SQ model was introduced by Kearns~\cite{Kearns:98} in the context of supervised learning
as a natural restriction of the PAC model~\cite{Valiant:84}. Subsequently, the SQ model
has been extensively studied in a plethora of contexts (see, e.g.,~\cite{Feldman16b} and references therein).
The class of SQ algorithms is rather broad and captures a range of known supervised learning algorithms. 
More broadly, several known algorithmic techniques in machine learning
are known to be implementable using SQs. These include spectral techniques,
moment and tensor methods, local search (e.g., Expectation Maximization),
and many others (see, e.g.,~\cite{FeldmanGRVX17, FeldmanGV17}).

\subsection{Our Results} \label{ssec:results}

We start by showing that $\poly(d/\alpha)$ samples are sufficient to obtain a near-optimal 
error estimator, albeit with a computationally inefficient algorithm.

\begin{theorem}[Information-Theoretic Bound]\label{thm:sc}
There is a (computationally inefficient) list-decoding algorithm for Gaussian linear regression 
that uses $O(d/\alpha^3)$ samples, returns a list of $O(1/\alpha)$ many hypothesis vectors, 
and has $\ell_2$-error guarantee of  $O((\sigma/\alpha)\sqrt{\log(1/\alpha)})$. 
Moreover, if the dimension $d$ is sufficiently large, any list-decoding algorithm 
that outputs a list of size $\poly(1/\alpha)$ must have $\ell_2$-error at least 
$\Omega((\sigma/\alpha)/\sqrt{\log(1/\alpha)})$.
\end{theorem}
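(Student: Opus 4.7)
The statement has two separate claims: an upper bound realized by an inefficient algorithm, and an information-theoretic lower bound on the achievable error. Both rest on the following integral inequality, which I will establish first: for any $k$ vectors $\beta_1,\ldots,\beta_k \in \R^d$ of the form $\beta_i = (r/\sqrt{2})\,e_i$ with $e_1,\ldots,e_k$ orthonormal, one has $\int \max_i D_{\beta_i}(x,y)\,dx\,dy \leq 1 + O((r/\sigma)\sqrt{\log k})$. The lower bound applies this inequality in the form ``the integral is $\leq 1/\alpha$'', while the upper bound applies it contrapositively; the gap between the two directions of this inequality is precisely what produces the $\log(1/\alpha)$ discrepancy between the upper and lower bounds on the error.

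For the lower bound the plan is to exhibit a single distribution $E$ on $\R^d \times \R$ that is simultaneously $\alpha$-consistent with many different hidden regressors. Set $k := L + 1$ where $L = \poly(1/\alpha)$ is the forbidden list size, pick orthonormal $e_1,\ldots,e_k \in \R^d$ (available since $d$ is sufficiently large, specifically $d \geq k$), let $\beta_i := (r/\sqrt{2})\,e_i$ with $r := c(\sigma/\alpha)/\sqrt{\log(1/\alpha)}$ for a small absolute constant $c$, and define $E := \max_i D_{\beta_i}/Z$ with $Z := \int \max_i D_{\beta_i}$. The integral inequality yields $Z \leq 1/\alpha$, which is exactly the condition needed for $E \geq \alpha D_{\beta_i}$ pointwise; then $N_i := (E - \alpha D_{\beta_i})/(1-\alpha)$ is a valid distribution and $E = \alpha D_{\beta_i} + (1-\alpha) N_i$. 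Thus every $\beta_i$ is a legitimate hidden regressor given only samples from $E$, but the $\beta_i$'s are pairwise $r$-separated, so any list of size $L < k$ must miss some $\beta_i$ by at least $r/2$.

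The key technical step is the bound $Z \leq 1/\alpha$. Factor $D_{\beta_i}(x,y)$ as the product of the $\cN(0,I_d)$ density in $x$ and the $\cN(\beta_i^T x, \sigma^2)$ density in $y$. Fixing $x$ and writing $\mu_i(x) := \beta_i^T x$ and $\Delta(x) := \max_i \mu_i(x) - \min_i \mu_i(x)$, a direct one-dimensional computation gives $\int_y \max_i \phi_\sigma(y - \mu_i(x))\,dy \leq 1 + \Delta(x)/(\sigma\sqrt{2\pi})$, where $\phi_\sigma$ is the $\cN(0,\sigma^2)$ density (the ``$1$'' comes from a single Gaussian integrating to $1$ and the $\Delta/\sigma$ term accounts for the width over which the pointwise maximum shifts). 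Because $\beta_i = (r/\sqrt{2})\,e_i$ with the $e_i$'s orthonormal, the $\mu_i(x)$'s are i.i.d.\ $\cN(0, r^2/2)$, so a standard Gaussian maximal inequality yields $\E_x[\Delta(x)] = O(r\sqrt{\log k})$ with a sharp tail. Integrating over $x$ gives $Z \leq 1 + O(r\sqrt{\log k}/\sigma)$, which is $\leq 1/\alpha$ for our choice of $r$ and $\alpha$ bounded away from $1$.

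For the upper bound the plan is brute-force enumeration. Draw $n = O(d/\alpha^3)$ samples; by a Chernoff bound the inlier set $S^*$ has size $\geq \alpha n(1-o(1))$, and Gaussian covariance concentration ensures that the empirical $x$-moment matrix of any large clean subset is close to $I_d$. Enumerate all subsets of size $\lfloor \alpha n/2 \rfloor$ and compute the least-squares regressor $\hat\beta_S$ on each. Some fully clean subset $S_0 \subseteq S^*$ appears in the enumeration, for which the standard least-squares analysis yields $\|\hat\beta_{S_0} - \beta^*\|_2 = O(\sigma\sqrt{d/(\alpha n)}) = O(\sigma\sqrt{\alpha})$, which is $\ll r$. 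Greedily prune to a maximal $r$-separated sublist; the retained candidate closest to $\beta^*$ lies within $O(r)$ of $\beta^*$, giving the claimed error. The main obstacle is to bound the pruned list size by $O(1/\alpha)$; I would do this by applying the integral inequality contrapositively with $r$ enlarged to $\Theta((\sigma/\alpha)\sqrt{\log(1/\alpha)})$. At this larger $r$ the distributions $D_{\hat\beta_i}$ are pairwise nearly TV-disjoint, so $\int \max_i D_{\hat\beta_i} \geq k(1-o(1))$ for any $k$ candidates passing an appropriately designed empirical ``$\alpha$-plausibility'' test, whereas the condition $E \geq \alpha D_{\hat\beta_i}$ (approximately satisfied by the empirical distribution for each retained candidate) forces $\int \max_i D_{\hat\beta_i} \leq 1/\alpha$, giving $k \leq (1+o(1))/\alpha$.
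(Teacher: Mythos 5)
Your lower bound is correct and takes a genuinely different route for the key step. The paper also builds $P=\max_i D_{\beta_i}$ and reduces everything to showing $\int P\le 1/\alpha$, but it does so by comparing $P$ to a heavy-tailed reference distribution $R(x,y)=\cN(0,I_d)\times\cN(0,\sigma_1^2)$ with $\sigma_1\approx \sigma_y\sqrt{\log(1/\alpha)}$, showing $R\ge 2\alpha D_v$ on the high-probability region of $y$ and handling the tail by a union bound over the $\approx(1/\alpha)^k$ hypotheses (which is why it can use exponentially many nearly-orthogonal directions and only needs $d\ge\mathrm{polylog}(1/\alpha)$). Your computation is more self-contained: the pointwise bound $\int_y\max_i\phi_\sigma(y-\mu_i(x))\,dy\le 1+\Delta(x)/(\sigma\sqrt{2\pi})$ is correct (split at $\min_i\mu_i$ and $\max_i\mu_i$ and bound the middle by $\phi_\sigma(0)$), the $\mu_i(x)$ are indeed i.i.d.\ $\cN(0,r^2/2)$ for orthonormal directions, and the Gaussian maximal inequality gives $Z\le 1+O(r\sqrt{\log k}/\sigma)\le 1/\alpha$ for $r=c(\sigma/\alpha)/\sqrt{\log(1/\alpha)}$. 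The only cost is that you need $d\ge\poly(1/\alpha)$ for the orthonormal family, which is fine since the theorem only assumes $d$ sufficiently large.

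The upper bound, however, has a genuine gap at exactly the step you flag as "the main obstacle": bounding the pruned list by $O(1/\alpha)$. The mechanism you propose --- that the condition $E\ge\alpha D_{\hat\beta_i}$ is "approximately satisfied by the empirical distribution for each retained candidate," forcing $\int\max_i D_{\hat\beta_i}\le 1/\alpha$ --- does not work as stated. Pointwise density domination is a statement about the population mixture $E=\alpha D_{\beta^*}+(1-\alpha)N$ for the one noise distribution $N$ the adversary actually chose; a spurious candidate $\hat\beta_i$ fit to an adversarial subset gives you no such domination, and no finite sample can certify one. What you actually have for each retained candidate is a \emph{witness subset} of $\approx\alpha n/2$ sample points on which the residuals $|y-X^T\hat\beta_i|$ are small, and the argument must be combinatorial: show that for $r$-separated candidates the witness subsets are nearly disjoint, so at most $O(1/\alpha)$ of them fit inside $n$ points. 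Proving that disjointness requires an empirical \emph{anti-concentration} condition built into the plausibility test --- for every direction $v$ and every $\gamma'\ge\gamma$, only an $O(\alpha)$-fraction of the witness set has $|y-X^T\hat\beta-\gamma'v^TX|\le\sigma t$ --- together with uniform convergence (VC bounds for halfspaces, which is where $n=\Omega(d/\alpha^3)$ enters) to guarantee that the true $\beta^*$ passes this test. This is precisely the content of conditions \eqref{EqCond1}--\eqref{EqCond2} and the intersection-counting claim in the paper's proof of Theorem~\ref{thm:sample-ub}; without specifying the test and proving the intersection bound, the $O(1/\alpha)$ list size is not established (a maximal $r$-separated subset of least-squares fits to arbitrary subsets can a priori be exponentially large in $d$).
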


The proof of this result is given in Section~\ref{sec:info-theoretic} 
(see Theorems~\ref{thm:sample-ub} and~\ref{thm:sample-lb}).
Our main result is a strong SQ lower bound for the list-decodable Gaussian linear regression problem.
We establish the following theorem (see Theorem~\ref{thm:main-formal} for a more detailed formal statement).

\begin{theorem}[SQ Lower Bound] \label{thm:sq-lb-inf}
Assume that the dimension $d \in \Z_+$ is sufficiently large and consider the problem of list-decodable linear regression, 
where  the fraction of inliers is $\alpha \in (0,1/2)$, the regression  vector $\beta \in \R^d$ has norm $\|\beta\|_2 \leq 1$, 
and the additive noise has standard deviation $\sigma \leq \alpha$. Then any SQ algorithm that returns a list 
${\cal L} $ of candidate vectors containing a $\widehat{\beta}$ 
such that $\|\widehat{\beta} - \beta \|_2 \leq 1/4$ does one of the following:
\begin{itemize}
    \item it uses at least one query with tolerance at most $d^{-\Omega(1/\sqrt{a})}/ \sigma$,
    \item it makes $2^{d^{\Omega(1)}}$ queries, or
    \item it returns a list of size $|{\cal L}|  = 2^{d^{\Omega(1)}}$.
\end{itemize}
\end{theorem}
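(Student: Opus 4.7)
The plan is to apply the standard \emph{non-Gaussian component analysis} (NGCA) framework for proving SQ lower bounds, specialized to the list-decodable regression setting. I would work on $\mathbb{R}^{d+1}$, viewing a sample as a pair $(X, y)$, and take as the ``null'' reference the standard Gaussian $E_0 := \mathcal{N}(0, I_{d+1})$. For each unit vector $v \in \mathbb{R}^d$, I would construct a distribution $E_v$ such that (i) it can be written as $\alpha D_v + (1-\alpha) N_v$ for some legitimate outlier distribution $N_v$, so it is a valid $(1-\alpha)$-corrupted observation with true regressor $v$, and (ii) it agrees with $E_0$ on every polynomial test function of degree at most $m$, for $m = \Omega(1/\sqrt{\alpha})$. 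Given such a family, a standard sphere-packing argument produces $2^{d^{\Omega(1)}}$ distributions whose pairwise $\chi^2$-correlations relative to $E_0$ are $d^{-\Omega(m)}$, and the generic SQ-dimension lemma then yields the stated trade-off between query count and tolerance. The dependence on list size is handled by the usual pigeonhole argument: since distinct vectors $v$ correspond to well-separated targets, any one hypothesis in the output list can be close to at most one $v$, so a list of size $L$ effectively divides the family size by $L$, and the same lower bound then forces $L \geq 2^{d^{\Omega(1)}}$ to avoid the query/tolerance penalty.

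The core task is to reduce the construction of $E_v$ to a one-dimensional moment-matching problem. For each $v$ I would introduce the unit vector $u_v := (v, -1)/\sqrt{1+\|v\|_2^2}$ in $\mathbb{R}^{d+1}$: projecting the inlier $D_v$ onto $u_v$ gives the narrow centered Gaussian $\mathcal{N}(0, \sigma_0^2)$ with $\sigma_0 := \sigma/\sqrt{1+\|v\|_2^2}$, while the projection onto the $d$-dimensional hyperplane orthogonal to $u_v$ is standard Gaussian. Hence it suffices to produce a one-dimensional distribution $A$ on $\mathbb{R}$ such that (a) $A$ has the same first $m$ moments as $\mathcal{N}(0,1)$, and (b) $A(u) \geq \alpha \phi_{\sigma_0}(u)$ pointwise. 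Indeed, defining $E_v$ as the law of $u_v$-marginal $A$ tensored with a standard Gaussian in the orthogonal hyperplane makes (ii) immediate by rotational symmetry, and the outlier density is recovered as $N_v = (E_v - \alpha D_v)/(1-\alpha) \geq 0$ by (b). A direct Hermite-coefficient calculation then shows that the $\chi^2$-inner-product $\int (E_v - E_0)(E_{v'} - E_0)/E_0$ is bounded by $O(|\langle u_v, u_{v'}\rangle|^{m+1})$, which on generic packing directions is $d^{-\Omega(m)}$.

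The main obstacle is constructing the one-dimensional hider $A$ with $m = \Omega(1/\sqrt{\alpha})$ together with the required pointwise lower bound. I would take $A = \alpha \phi_{\sigma_0} + (1-\alpha) G$, where $G$ is a carefully chosen mixture of centered Gaussians at varying scales whose Hermite moments cancel those introduced by the narrow spike $\alpha \phi_{\sigma_0}$. Because a mixture of $k$ centered Gaussians provides roughly $k$ free parameters and so can eliminate $\Theta(k)$ even Hermite coefficients, a $\Theta(1/\sqrt{\alpha})$-component mixture should match $m$ moments while keeping $G$ nonnegative; this is essentially a Gauss--Hermite quadrature phenomenon, and verifying nonnegativity of $G$ is precisely what limits $m$ to $\poly(1/\alpha)$ and produces the $d^{-\Omega(1/\sqrt{\alpha})}$ exponent in the final bound. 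Once this moment-matching lemma is in hand, the remaining steps are routine: (1) bound the pairwise $\chi^2$-correlations via the Hermite expansion above, (2) pack $2^{d^{\Omega(1)}}$ unit vectors on the sphere with pairwise inner products at most $d^{-\Omega(1)}$ (a standard probabilistic argument), (3) apply a generic SQ-dimension lower bound to the resulting family, and (4) incorporate the list-size penalty via the pigeonhole argument. The additional $1/\sigma$ factor in the tolerance bound arises from rescaling the SQ queries to account for the narrow width of the inlier component along the direction $u_v$.
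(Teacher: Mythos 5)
There is a genuine gap at the very first reduction step, and it is not cosmetic. You assert that under the inlier law $D_v$ the projection of $(X,y)$ onto $u_v=(v,-1)/\sqrt{1+\|v\|_2^2}$ is the narrow Gaussian $\cN(0,\sigma_0^2)$ \emph{and} that the projection onto the hyperplane orthogonal to $u_v$ is standard Gaussian (implicitly, independent of the first). The first claim is correct, but the second is false. The pair $(X,y)$ is jointly Gaussian with covariance $\Sigma$ satisfying $\Sigma (v,-1)^T=\sigma^2(0,\dots,0,-1)^T$, so $u_v$ is not an eigenvector of $\Sigma$ and the two projections are not independent; moreover, the direction $w\propto (v/\|v\|_2,\ \|v\|_2)$, which lies in $u_v^{\perp}$, carries variance $1+\|v\|_2^2+\sigma^2\|v\|_2^2/(1+\|v\|_2^2)>1$. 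Consequently your product-form $E_v$ (a one-dimensional hider $A$ along $u_v$ tensored with a standard Gaussian on $u_v^{\perp}$) can never satisfy $E_v\geq \alpha D_v$ pointwise: along $w$ the density of $D_v$ decays like $e^{-t^2/(2(1+\|v\|_2^2))}$ while that of $E_v$ decays like $e^{-t^2/2}$, so the ratio $D_v/E_v$ is unbounded and $N_v=(E_v-\alpha D_v)/(1-\alpha)$ fails to be a nonnegative measure no matter how $A$ is chosen. The one-dimensional domination condition (b) you impose along $u_v$ does not control this; the failure happens in a direction orthogonal to $u_v$.

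This is exactly the obstruction that dictates the paper's architecture, which differs from yours in two essential ways. First, instead of projecting $(X,y)$ onto a fixed direction, one conditions on $y$: the conditional $D_v(x\mid y)$ genuinely is of the product form $P_{A_y,v}$ (standard Gaussian orthogonal to $v$, and $\cN(y\rho,1-\rho^2)$ along $v$), so the problem becomes a \emph{family} of one-dimensional moment problems indexed by $y$. Second, since for large $|y|$ the spike $\cN(y\rho,1-\rho^2)$ is too far from $\cN(0,1)$ to be hidden with a fixed outlier budget, the reference is taken to be $\cN(0,I_d)\times\cN(0,1/\alpha)$ — heavy-tailed in $y$ — with a $y$-dependent inlier weight $\alpha_y=\alpha D_v(y)/R(y)$ that decays in $|y|$; one then matches $\alpha_y\delta_y+(1-\alpha_y)F_y$ to $\cN(0,1)$ non-constructively via LP duality with a bounded-support constraint, smoothing by the Ornstein--Uhlenbeck operator to control the $\chi^2$-divergence. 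The $m=\Theta(1/\sqrt{\alpha})$ barrier arises from the uniform-in-$y$ requirement $\sup_{r\in\cP(m)} r^2(y)/\E_{\cN(0,1)}[r^2]\leq 1/(2\alpha_y)$, not from nonnegativity of a centered Gaussian quadrature mixture as you suggest. Your remaining steps (sphere packing, pairwise-correlation bound via Hermite expansion, pigeonhole over the list, and the generic SQ-dimension lemma) match the paper and would go through, but without repairing the reduction the distributions $E_v$ you need do not exist as described.
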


\begin{table}
\centering
\begin{tabular}{@{}lllll@{}}
\toprule
Algorithmic Result & Sample Size & Running Time & List size
\\ \midrule
Karmalkar-Klivans-Kothari~
\cite{KKK19-list}     & $(d/\alpha)^{O\left(1/\alpha^4\right)}$        & $(d/\alpha)^{O\left(1/\alpha^8\right)}$      & $O(1/\alpha)$
\\
Raghavendra and Yau~
\cite{RY19-list}    & $d^{O\left(1/\alpha^4\right)}$  & $d^{O\left(1/\alpha^8\right)}(1/\alpha)^{\log(1/\alpha)}$         
&   $(1/\alpha)^{O\left(\log(1/\alpha)\right)}$
\\ \bottomrule
\end{tabular}
\caption{The table summarizes the sample complexity, running time, and list size of 
the known list-decodable linear regression algorithms in order to obtain a $1/4$-additive approximation to the hidden
regression vector $\beta$ in the setting of Theorem~\ref{thm:sq-lb-inf}, 
i.e., when $\|\beta\|_2 \leq 1$ and $\sigma$ is sufficiently small as a function of $\alpha$: 
\cite{KKK19-list} requires $\sigma=O(\alpha)$ and \cite{RY19-list} requires $\sigma=O(\alpha^{3/2})$.}
\label{tab:results}
\end{table}

Informally speaking, Theorem~\ref{thm:sq-lb-inf} shows that no SQ algorithm 
can approximate $\beta$ to constant accuracy with a sub-exponential in $d^{\Omega(1)}$ size list 
and sub-exponential in $d^{\Omega(1)}$ many queries, 
unless using queries of very small tolerance  -- that would require at least
$\sigma d^{\Omega(1/\sqrt{\alpha})}$ samples to simulate. For $\sigma$ not too small,
e.g., $\sigma = \poly(\alpha)$, in view of Theorem~\ref{thm:sc}, 
this result can be viewed as an information-computation tradeoff
for the problem, within the class of SQ algorithms.

A conceptual implication of Theorem~\ref{thm:sq-lb-inf} is that list-decodable linear regression 
is harder (within the class of SQ algorithms) than the related problem of learning mixtures 
of linear regressions (MLR).
Recent work~\cite{DK20-ag} gave an algorithm (easily implementable in SQ) 
for learning MLR with $k$ equal weight separated components (under Gaussian 
covariates) with sample complexity and running time $k^{\mathrm{polylog} (k)}$, i.e., 
{\em quasi-polynomial} in $k$. 
Recalling that one can reduce $k$-MLR (with well-separated components) 
to list-decodable linear regression for $\alpha = 1/k$, 
Theorem~\ref{thm:sq-lb-inf} implies that the aforementioned algorithmic result 
cannot be obtained via such a reduction.

\begin{remark} \label{rem:ld-test}
While the main focus of this work is on the SQ model, our result has immediate implications
to a related popular restricted computational model --- that of low-degree (polynomial) algorithms~\cite{HopkinsS17,HopkinsKPRSS17,Hopkins-thesis}. 
Recent work~\cite{brennan2020statistical} established that (under certain assumptions) 
an SQ lower bound also implies a qualitatively similar lower bound in the low-degree model. 
We leverage this connection to show a similar lower bound in this model 
(see Section~\ref{sec:hardness_again_low_degree_poly}).
\end{remark}

\subsection{Overview of Techniques} \label{ssec:techniques}

In this section, we provide a detailed overview of our SQ lower bound construction.
We recall that there exists a general methodology for establishing 
SQ lower bounds via an appropriate complexity measure, known as SQ dimension. 
Several related notions of SQ dimension exist in the literature, 
see, e.g.,~\cite{BFJ+:94, FeldmanGRVX17, Feldman17}. Here we focus on the 
framework introduced in~\cite{FeldmanGRVX17} for search problems over distributions, 
which is more natural in our setting. A lower bound on the SQ dimension of a search problem 
provides an unconditional lower bound on the SQ complexity of the problem. Roughly speaking,
for a notion of correlation between distributions in our family $\mathcal{D}$ 
(Definition~\ref{def:pc}), establishing an SQ lower bound amounts 
to constructing a large cardinality sub-family $\mathcal{D}' \subseteq \mathcal{D}$ 
such that every pair of distributions in  $\mathcal{D}'$ are nearly uncorrelated 
with respect to a given reference distribution $R$ 
(see Definition~\ref{def:sqdim} and Lemma~\ref{cor:cor3-12-FGR}).

A general framework for constructing SQ-hard families of distributions was introduced
in~\cite{DKS17-sq}, which showed the following: 
Let the reference distribution $R$ be $\cN(0,I)$
and $A$ be a univariate distribution whose low-degree moments match those
of the standard Gaussian (and which satisfies an additional mild technical condition).
Let $P_{A, v}$ be the distribution that is a copy of $A$ in the $v$-direction 
and standard Gaussian in the orthogonal complement (Definition~\ref{def:high-dim-distribution}).
Then the distribution family $\{ P_{A, v} \}_{v \in S}$, 
where $S$ is a set of nearly orthogonal unit vectors, satisfies the pairwise nearly uncorrelated
property (Lemma~\ref{lem:lemma-3-4-DKS17}), and is therefore SQ-hard to learn.

Unfortunately, the~\cite{DKS17-sq} framework does not suffice in the supervised setting of the current paper
for the following reason: The joint distribution over labeled examples $(X, y)$ in our setting  
does not possess the symmetry properties required for moment-matching 
with the reference $R=\cN(0,I)$ to be possible. Specifically, the behavior of $y$
will necessarily be somewhat different than the behavior of $X$. 
To circumvent this issue, we leverage an idea from~\cite{DKS19}. The high-level idea is to construct distributions 
$E_v$ on $(X, y)$ such that for any fixed value $y_0$ of $y$, 
the conditional distribution of $X \mid y = y_0$ under $E_v$
is of the form  $P_{A,v}$ described above, where $A$ is replaced with some $A_{y_0}$.

We further explain this modified construction. Note that $E_v$ should be of the form $\alpha D_v + (1{-\alpha}) N_v$, where $D_v$ is the inlier
distribution (corresponding to the clean samples from the linear regression model) and $N_v$ 
is the outlier (noise) distribution. To understand what properties our distribution should satisfy, 
we start by looking at the inlier distribution $D$. By definition, for $(X, y) \sim D$, we have that 
$y = \beta^T X + \eta$, where $X \sim N(0, I)$ and $\eta \sim N(0, \sigma^2)$ is independent of $X$.
A good place to start here is to understand the distribution of $X$ conditioned on $y=y_0$, 
for some $y_0$, under $D$. 
It is not hard to show (Fact~\ref{fact:inliers_conditional_dist}) 
that this conditional distribution is already of the desired form $P_{A,\beta}$: 
it is a product of a $(d-1)$-dimensional standard Gaussian in directions orthogonal to $\beta$, 
while in the $\beta$-direction it is a much narrower Gaussian with mean proportional to $y_0$.
To establish our SQ-hardness result, we would like to mix
this conditional distribution with a carefully selected outlier distribution $N \mid y = y_0$, 
such that the resulting mixture $E \mid y = y_0$ matches many of its low-degree moments with the standard Gaussian in the $\beta$-direction, while being standard Gaussian in the orthogonal directions.
In the setting of minority of outliers, \cite{DKS19} 
was able to provide an explicit formula for $N$ 
and match {\em three} moments to show an SQ lower bound of $\Omega(d^2)$.
The  main technical difficulty in our paper is that, in order to prove the desired SQ lower bound 
of $\Omega(d^{\poly(1/\alpha)})$, we need to match $\poly(1/\alpha)$ many moments. We explain how to achieve this below.

Here we take a different approach and establish the existence of the desired outlier distribution $N|y=y_0$ in a non-constructive manner. We note that our problem is an instance of the moment-matching problem, 
where given a sequence of real numbers, the goal is to decide whether a distribution exists 
having that sequence as its low-degree moments. At a high-level, we leverage classical results 
that tackle this general question by formulating a linear program (LP) and using LP-duality to derive necessary 
and sufficient feasibility conditions (see~\cite{karlin1953geometry} and Theorem~\ref{ThmDualityBdd}).
This moment-matching via LP duality approach is fairly general, but stumbles upon two technical obstacles
in our setting.

The first technical issue is that our final distributions $E_v$ on $(X, y)$ need to have bounded 
$\chi^2$-divergence with respect to the reference distribution, 
since the pairwise correlations scale with this quantity (see Lemma~\ref{lem:lemma-3-4-DKS17}).  
To guarantee this, we can ensure that the outlier distribution in the $\beta$-direction 
is in fact equal to the convolution of a distribution with bounded support
with a narrow Gaussian: (i) The contraction property of this convolution operator means 
that it can only reduce the $\chi^2$-divergence, and 
(ii) the bounded support can be used in combination with tail-bounds on Hermite polynomials (Lemma~\ref{PropTailBoundHermite2}) to bound from above the contribution to the $\chi^2$-divergence 
of each Hermite coefficient of our distribution (Lemma~\ref{lem:existence_of_A_y}). 
These additional constraints necessitate a modification to the moment-matching problem, 
but it can still be readily analyzed (Theorem~\ref{ThmHardDist_1}).

The second and more complicated issue involves the fraction of outliers, i.e., the parameter ``$1{-\alpha}$''. 
Unfortunately, it is easy to see that the fraction of outliers necessary to make the conditional distributions match the desired number of moments must necessarily go to $1$ as $|y|$ goes to infinity: 
As $|y|$ gets bigger, the conditional distribution of inliers moves further away from $\cN(0,I)$ (Fact~\ref{fact:inliers_conditional_dist}) and thus needs to be mixed more heavily with outliers to be corrected.
This is a significant problem, since by definition we can only afford to use a $(1{-\alpha})$-fraction of outliers overall.
To handle this issue, we consider a reference distribution $R$ on $(X, y)$ that has much heavier tails in $y$ 
than the distribution of inliers has. This essentially means that as $|y|$ gets large, 
the conditional probability that a sample is an outlier gets larger and larger. This is balanced by having 
slightly lower fraction of outliers for smaller values of $|y|$, 
in order to ensure that the total fraction of outliers is still at most $1{-\alpha}$. 
To address this issue, we leverage the fact that the probability that a clean sample 
has large value of $|y|$ is very small. Consequently, we can afford to make the error rates
for such $y$ quite large without increasing the overall probability 
of error by very much.

	\subsection{Preliminaries} \label{ssec:prelims}

\paragraph{Notation} We use $\N$ to denote natural numbers and $\Z_+$ to denote positive integers. For $n \in \Z_+$ we  denote $[n] \eqdef \{1,\ldots,n\}$ and use $\cS^{d-1}$ for the $d$-dimensional unit sphere. We denote by $\mathbf{1}(\cE)$ the indicator function of the event $\cE$. We use $I_d$ to denote the $d\times d$ identity matrix. For a random variable $X$, we use $\E[X]$ for its expectation.	
	For $m\in\Z_+$, the $m$-th moment of $X$ is defined as $\E [X^m]$.	
	 We use $\cN(\mu,\Sigma)$ to denote the Gaussian distribution with mean $\mu$ and covariance matrix $\Sigma$. We let $\phi$ denote the pdf of the one-dimensional standard Gaussian.	
	 When $D$ is a distribution, we use $X \sim D$ to denote that the random variable $X$ is distributed according to $D$.
	 For a vector $x \in \R^d$, we let $\snorm{2}{x}$ denote its $\ell_2$-norm. For $y\in \R$, we denote by $\delta_y$  the Dirac delta distribution at $y$, i.e., the distribution that assigns probability mass 1 to the single point $y \in \R$ and zero elsewhere.
	 When there is no confusion, we will use the same letters for distributions and their probability density functions.

	\paragraph{Hermite Analysis} 
Hermite polynomials form a complete orthogonal basis of the vector space $L^2(\R,\cN(0,1))$ of all functions $f:\R \to \R$ such that $\E_{X\sim \cN(0,1)}[f^2(X)]< \infty$. There are two commonly used types of Hermite polynomials. The \emph{physicist's } Hermite polynomials, denoted by $H_k$ for $k\in \N$ satisfy the following orthogonality property with respect to the weight function $e^{-x^2}$: for all $k,m \in \N$, $\int_\R H_k(x) H_m(x) e^{-x^2} \dx = \sqrt{\pi} 2^k k! \mathbf{1}(k=m)$. The \emph{probabilist's} Hermite polynomials $H_{e_k}$ for $k\in \N$ satisfy $\int_\R H_{e_k}(x) H_{e_m}(x) e^{-x^2/2} \dx = k! \sqrt{2\pi}  \mathbf{1}(k=m)$ and are related to the physicist's polynomials through $H_{e_k}(x)=2^{-k/2}H_k(x/\sqrt{2})$. 
	We will mostly use the \emph{normalized probabilist's} Hermite polynomials, $h_k(x) = H_{e_k}(x)/\sqrt{k!}$, $k\in \N$ for which $\int_\R h_k(x) h_{m}(x) e^{-x^2/2} \dx = \sqrt{2\pi} \mathbf{1}(k=m)$.
	These polynomials are the ones obtained by Gram-Schmidt orthonormalization of the basis $\{1,x,x^2,\ldots\}$ with respect to the inner product $\dotp{f}{g}_{\cN(0,1)}=\E_{X \sim \cN(0,1)}[f(X)g(X)]$. Every function  $f \in L^2(\R,\cN(0,1))$ can be uniquely written as $f(x) = \sum_{i \in \N} a_i h_i(x)$ and we have $\lim_{n \rightarrow n}\E_{x \sim \cN(0,1)}[(f(x)- \sum_{i =0}^n a_i h_i(x))^2] = 0$ (see, e.g., \cite{andrews_askey_roy_1999}).

\paragraph{Ornstein-Uhlenbeck Operator}
For a $\rho > 0$, we define the \emph{Gaussian noise} (or \emph{Ornstein-Uhlenbeck}) operator $U_\rho$ as the operator that maps a distribution $F$ on $\R$ to the distribution of the random variable $\rho X + \sqrt{1-\rho^2}Z$, where $X \sim F$ and $Z \sim \cN(0,1)$ independently of $X$. 
A well-known property of \emph{Ornstein–Uhlenbeck} operator is that it operates diagonally with respect to Hermite polynomials.
\begin{fact}[see, e.g.,~\cite{AoBF14}] \label{fact:eigenfunction} 
    For any Hermite polynomial $h_i$, any distribution $F$ on $\R$, and $\rho\in(0,1)$, it holds that $\E_{X \sim U_\rho F}[h_i(X)] = \rho^i \E_{X \sim F}[h_i(X)]$.
\end{fact}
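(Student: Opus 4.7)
The plan is to reduce the claim to a pointwise identity about Hermite polynomials under an additive Gaussian perturbation, and then verify that identity via the exponential generating function of the probabilist's Hermite polynomials. By the definition of $U_\rho$, a sample from $U_\rho F$ can be written as $Y = \rho X + \sqrt{1-\rho^2}\,Z$ with $X \sim F$ and $Z \sim \cN(0,1)$ independent, so by Fubini
\begin{equation*}
\E_{Y \sim U_\rho F}[h_i(Y)] \;=\; \E_{X \sim F}\!\left[\, \E_{Z \sim \cN(0,1)}\!\left[\, h_i\!\left(\rho X + \sqrt{1-\rho^2}\,Z\right)\,\right]\,\right].
\end{equation*}
Hence it suffices to prove, for every fixed $x \in \R$, the pointwise identity
\begin{equation*}
\E_{Z \sim \cN(0,1)}\!\left[\, h_i\!\left(\rho x + \sqrt{1-\rho^2}\,Z\right)\,\right] \;=\; \rho^{\,i}\, h_i(x);
\end{equation*}
integrating this against $F$ then yields the stated equality.

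For the pointwise identity I would invoke the exponential generating function for the probabilist's Hermite polynomials, namely $e^{tu - t^2/2} = \sum_{n \ge 0} (t^n/n!)\,H_{e_n}(u)$. Setting $u = \rho x + \sqrt{1-\rho^2}\,Z$ and taking $\E_Z$, the right-hand side becomes $\sum_{n \ge 0} (t^n/n!)\,\E_Z[H_{e_n}(\rho x + \sqrt{1-\rho^2}\,Z)]$, while the left-hand side can be computed directly using the Gaussian moment generating function:
\begin{equation*}
\E_Z\!\left[e^{t(\rho x + \sqrt{1-\rho^2}\,Z) - t^2/2}\right] \;=\; e^{t\rho x - t^2/2}\cdot e^{t^2(1-\rho^2)/2} \;=\; e^{(t\rho)x - (t\rho)^2/2} \;=\; \sum_{n \ge 0} \frac{t^n}{n!}\,\rho^{\,n}\, H_{e_n}(x).
\end{equation*}
Matching coefficients of $t^n$ gives $\E_Z[H_{e_n}(\rho x + \sqrt{1-\rho^2}\,Z)] = \rho^n\, H_{e_n}(x)$ for every $n$, and dividing by $\sqrt{i!}$ (since $h_i = H_{e_i}/\sqrt{i!}$) yields the identity for the normalized polynomials $h_i$.

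The only subtlety is justifying the interchange of $\E_Z$ with the infinite series and the subsequent extraction of power-series coefficients. This is handled by dominated convergence: for every fixed $t,x \in \R$ the integrand $e^{t(\rho x + \sqrt{1-\rho^2}\,Z) - t^2/2}$ has Gaussian tails in $Z$ with all moments finite, and the partial sums of its Hermite expansion are dominated by polynomials in $|Z|$ that are integrable against $\cN(0,1)$. Equivalently, one can simply differentiate at $t=0$ to read off each coefficient, which only involves finitely many moments of $Z$ and requires no limit interchange. I do not anticipate any genuine obstacle here — the heart of the argument is the one-line Gaussian completion-of-the-square identity $\E_Z[e^{t\sqrt{1-\rho^2}\,Z}] = e^{t^2(1-\rho^2)/2}$, which precisely cancels the $-t^2/2$ term and turns $t$ into $t\rho$ inside the generating function.
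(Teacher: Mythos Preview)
Your argument is correct and is a standard generating-function proof of the eigenvalue property of the Ornstein--Uhlenbeck operator. The paper does not supply its own proof of this fact; it is simply quoted with a reference, so there is nothing to compare against. One minor point: your initial Fubini step and the final ``integrating against $F$'' tacitly assume that $\E_{X\sim F}[h_i(X)]$ exists, but this is already implicit in the statement itself, and in any case the pointwise identity $\E_{Z}[h_i(\rho x+\sqrt{1-\rho^2}\,Z)]=\rho^i h_i(x)$ is the real content and your derivation of it via $e^{tu-t^2/2}=\sum_n (t^n/n!)H_{e_n}(u)$ together with the Gaussian moment generating function is clean and complete.
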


\paragraph{Background on the SQ Model} 
We provide the basic definitions and facts that we use.

\begin{definition}[Search problems over distributions] \label{def:search}
Let $\cD$ be a set of distributions over $\R^d$, $\mathcal{F}$ be a set called solutions, 
and $\mathcal{Z} : \cD \to 2^\mathcal{F}$ be a map that assigns sets of solutions to distributions of $\cD$. The \emph{distributional search problem} $\mathcal{Z}$ over $\cD$ and $\mathcal{F}$ 
is to find a valid solution $f \in \mathcal{Z}(D)$ given statistical query oracle access 
to an unknown $D \in \cD$.
\end{definition}

The hardness of these problems is conveniently captured by the SQ dimension. 
For this, we first need to define the notion of correlation between distributions.

\begin{definition}[Pairwise Correlation] \label{def:pc}
The pairwise correlation of two distributions with probability density functions
$D_1, D_2 : \R^d \to \R_+$ with respect to a reference distribution with 
density $R: \R^d \to \R_+$, where the support of $R$ contains 
the supports of $D_1$ and $D_2$, is defined as
$\chi_{R}(D_1, D_2) \eqdef \int_{\R^d} D_1(x) D_2(x)/R(x)\, \dx - 1$.
When $D_1=D_2$, the pairwise correlation becomes the same as the 
$\chi^2$-divergence between $D_1$ and $R$, i.e., $\chi^2(D_1,R) \eqdef \int_{\R^d}D^2_1(x)/R(x) \dx - 1$.
\end{definition}

\begin{definition} \label{def:correlated} For $\gamma,\beta > 0$,
the set of distributions $\mathcal{D} = \{D_1,\ldots,D_m\}$ is called $(\gamma,\beta)$-correlated
relative to the distribution $R$ if  
$\abs{\chi_R(D_i,D_j)} \leq \gamma$, if $i\neq j$, and $\abs{\chi_R(D_i,D_j)} \leq \beta$ otherwise.
\end{definition}

The statistical dimension of a search problem is based on the largest set of $(\gamma,\beta)$-correlated distributions assigned to each solution.

\begin{definition}[Statistical Dimension] \label{def:sqdim}
For $\gamma,\beta >0$,
 a search problem $\mathcal{Z}$ over a set of solutions $\mathcal{F}$ 
and a class $\cD$ of distributions over $X$, we define the \emph{statistical dimension} of $\mathcal{Z}$, denoted by $\mathrm{SD}(\mathcal{Z},\gamma,\beta)$, to be the largest integer $m$ such that there exists a reference distribution $R$ over $X$ and a finite set of distributions $\mathcal{D}_R \subseteq \cD$ such that for any solution $f \in \mathcal{F}$, the set $\cD_f = \cD_R \setminus \mathcal{Z}^{-1}(f)$ is $(\gamma,\beta)$-correlated relative to $R$ and $|\cD_f|\geq m$.
\end{definition}

\begin{lemma}[Corollary 3.12 in~\cite{FeldmanGRVX17}]\label{cor:cor3-12-FGR}
Let $\cZ$ be a search problem over a set of solutions $\cF$ and a class of distributions $\cD$ over $\R^d$. For $\gamma,\beta>0$, let $s=\mathrm{SD}(\cZ,\gamma,\beta)$ be the statistical dimension of the problem. For any $\gamma'>0$, any $SQ$ algorithm for $\cZ$ requires either $s \gamma'/(\beta-\gamma)$ queries or at least one query to $\mathrm{STAT}(\sqrt{\gamma + \gamma'})$ oracle.
\end{lemma}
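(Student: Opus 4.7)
The plan is to combine a Gram-matrix estimate with an adversarial simulation against the reference distribution. By the definition of statistical dimension, there exist a reference distribution $R$ and a family $\cD_R \subseteq \cD$ such that for every candidate solution $f \in \cF$ the subfamily $\cD_f \eqdef \cD_R \setminus \cZ^{-1}(f)$ has at least $s$ elements and is $(\gamma,\beta)$-correlated relative to $R$. For each $D \in \cD_R$ I will work with the residual function $v_D \eqdef D/R - 1 \in L^2(R)$, which satisfies $\la v_D, v_{D'}\ra_R = \chi_R(D, D')$, so $\|v_D\|_{L^2(R)}^2 \leq \beta$ and $|\la v_D, v_{D'}\ra_R| \leq \gamma$ for $D \neq D'$. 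The key identity $\E_D[q] - \E_R[q] = \la q, v_D\ra_R$, valid for any query $q$, will be used throughout to convert oracle approximation errors into inner products in $L^2(R)$.

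The first main step is a per-query separation bound: for any query $q : \R^d \to [-1,1]$, the set $S_q \eqdef \{D \in \cD_R : |\E_D[q] - \E_R[q]| > \tau\}$, with $\tau \eqdef \sqrt{\gamma+\gamma'}$, has cardinality strictly less than $(\beta-\gamma)/\gamma'$. To see this, set $\epsilon_D \eqdef \sgn(\la q, v_D\ra_R)$ for $D \in S_q$ and observe, by Cauchy-Schwarz, that
\begin{equation*}
|S_q|\,\tau \;<\; \sum_{D \in S_q} \epsilon_D \la q, v_D\ra_R \;=\; \Bigl\la q,\; \sum_{D \in S_q} \epsilon_D v_D \Bigr\ra_{R} \;\leq\; \|q\|_{L^2(R)} \, \Bigl\|\sum_{D \in S_q} \epsilon_D v_D\Bigr\|_{L^2(R)}.
\end{equation*}
Using $\|q\|_{L^2(R)}^2 \leq 1$ (since $|q| \leq 1$) and expanding the right-hand $L^2$-norm via the correlation bounds $\|v_D\|^2 \leq \beta$ and $|\la v_D, v_{D'}\ra| \leq \gamma$ gives $|S_q|^2 \tau^2 < |S_q|\beta + |S_q|(|S_q|-1)\gamma$. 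Substituting $\tau^2 = \gamma + \gamma'$ and rearranging yields $|S_q|\gamma' < \beta - \gamma$, as claimed.

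The finishing step is an adversarial simulation. Suppose toward contradiction that some SQ algorithm $\cA$ solves $\cZ$ using at most $Q < s\gamma'/(\beta-\gamma)$ queries, each to a $\mathrm{STAT}(\tau)$ oracle. Run $\cA$ while always answering a query $q$ with the reference value $\E_R[q]$. Since these responses depend only on $R$ and on $\cA$'s internal randomness, after fixing the coins the sequence of issued queries $q_1, \ldots, q_Q$ is the same regardless of the underlying input distribution, even though $\cA$ may be adaptive. For every $D$ outside $\cB \eqdef \bigcup_{i \leq Q} S_{q_i}$, the reference-based responses are legitimate $\mathrm{STAT}(\tau)$ answers on input $D$, so $\cA$ must output the same solution $f$ on every such $D$. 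The per-query bound gives $|\cB| < Q(\beta-\gamma)/\gamma' < s \leq |\cD_f|$, so there exists some $D \in \cD_f \setminus \cB$; on this $D$ the algorithm outputs $f \notin \cZ(D)$, contradicting correctness. For randomized algorithms, applying the argument to every realization of the coins (or averaging) completes the proof.

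The main technical content is the Cauchy-Schwarz/Gram-matrix estimate, where the diagonal and off-diagonal correlation bounds $\beta$ and $\gamma$ trade off to give the precise threshold $(\beta-\gamma)/\gamma'$. The main conceptual subtlety is adaptivity of $\cA$; this is resolved by noticing that the simulation responses depend only on $R$, so the ``adaptive'' branching collapses to a fixed query sequence for each setting of the random coins, and the per-query separation bound then applies uniformly across this entire sequence.
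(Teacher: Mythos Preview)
The paper does not prove this lemma; it is quoted as Corollary~3.12 of \cite{FeldmanGRVX17} and used as a black box, so there is no in-paper proof to compare against. Your argument is the standard one from that reference and is essentially correct.

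One small imprecision is worth flagging: you assert $\|v_D\|^2 \leq \beta$ and $|\la v_D, v_{D'}\ra| \leq \gamma$ for all $D \neq D'$ in $\cD_R$, but Definition~\ref{def:sqdim} only guarantees the $(\gamma,\beta)$-correlation on each $\cD_f$, not on $\cD_R$ as a whole (nothing in the definition rules out a pair $D,D' \in \cD_R$ with $\cZ(D) \cup \cZ(D') = \cF$, which would sit in no common $\cD_f$). The fix is immediate and does not change the structure of your argument: run the simulation first to obtain the output $f$, and then apply your Cauchy--Schwarz/Gram-matrix step to $S_{q_i} \cap \cD_f$ using the correlation bounds on $\cD_f$. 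This yields $|S_{q_i} \cap \cD_f| < (\beta-\gamma)/\gamma'$, hence $|\cB \cap \cD_f| < Q(\beta-\gamma)/\gamma' < s \leq |\cD_f|$, and the contradiction follows exactly as you wrote.
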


We continue by recalling the machinery from~\cite{DKS17-sq} that will be used for our construction. 

\begin{definition}[High-Dimensional Hidden Direction Distribution] \label{def:high-dim-distribution}
For a unit vector $v \in \R^d$ and a distribution $A$ on the real line with probability density function $A(x)$, define $P_{A,v}$ to be a distribution over $\R^d$, where $P_{A,v}$ is the product distribution whose orthogonal projection onto the direction of $v$ is $A$, 
and onto the subspace perpendicular to $v$ is the standard $(d{-1})$-dimensional normal distribution. 
That is, $P_{A,v}(x) := A(v^Tx) \phi_{\bot v}(x)$, where $\phi_{\bot v}(x) = \exp\left(-\|x - (v^Tx)v\|_2^2/2\right)/(2\pi)^{(d-1)/2}$.\end{definition}

The distributions $\{P_{A,v}\}$ defined above are shown to be nearly uncorrelated as long as the directions where $A$ is embedded are pairwise nearly orthogonal.

\begin{lemma}[Lemma 3.4 in~\cite{DKS17-sq}] \label{lem:lemma-3-4-DKS17}
Let $m \in \Z_+$. Let $A$ be a distribution over $\R$ that agrees with the first $m$ moments of $\cN(0,1)$. For any $v$, let $P_{A,v}$ denote the distribution from Definition~\ref{def:high-dim-distribution}. 
For all $v,u \in \R^d$, we have that $\chi_{\cN(0,I_d)}(P_{A,v},P_{A,u}) \leq |u^Tv|^{m+1} \chi^2(A,\cN(0,1))$.
\end{lemma}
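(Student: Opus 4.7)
The plan is to compute the pairwise correlation via a Hermite expansion of $A/\phi$. Since $R(x) = \phi(v^Tx)\,\phi_{\bot v}(x)$, we have $P_{A,v}(x)/R(x) = (A/\phi)(v^Tx)$, a function only of $v^Tx$. Writing $(A/\phi)(t) = \sum_{k\geq 0} a_k\, h_k(t)$ in the $\phi$-orthonormal basis $\{h_k\}$, the coefficients are $a_k = \E_{X\sim A}[h_k(X)]$; the moment-matching hypothesis forces $a_0 = 1$ and $a_k = 0$ for $1\leq k \leq m$, while Parseval gives $\sum_{k\geq 0} a_k^2 = \int A(x)^2/\phi(x)\,dx = 1+\chi^2(A,\cN(0,1))$.

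Next I would rewrite
\begin{equation*}
\int \frac{P_{A,v}(x)\,P_{A,u}(x)}{R(x)}\,dx \;=\; \E_{x\sim R}\!\left[(A/\phi)(v^Tx)\,(A/\phi)(u^Tx)\right] \;=\; \sum_{k,j\geq 0} a_k a_j\,\E_{x\sim R}\!\left[h_k(v^Tx)\,h_j(u^Tx)\right],
\end{equation*}
and then invoke the key identity $\E_{x\sim R}[h_k(v^Tx)\,h_j(u^Tx)] = (u^Tv)^k\,\mathbf{1}(k=j)$. I would derive this by conditioning on $u^Tx = z$: under $R$, the conditional law of $v^Tx$ is $\cN((u^Tv)z,\,1-(u^Tv)^2)$, which is exactly the image of $\delta_z$ under the Ornstein--Uhlenbeck operator $U_{u^Tv}$. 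Fact~\ref{fact:eigenfunction} then yields $\E[h_k(v^Tx)\mid u^Tx=z] = (u^Tv)^k h_k(z)$, and $\phi$-orthonormality of $\{h_j\}$ closes out the identity after taking the outer expectation $\E_{z\sim\cN(0,1)}[\,\cdot\,h_j(z)]$.

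Combining these, and using $a_0 = 1$ together with $a_1=\cdots=a_m=0$,
\begin{equation*}
\chi_R(P_{A,v},P_{A,u}) \;=\; \sum_{k\geq 0} a_k^2\,(u^Tv)^k \;-\; 1 \;=\; \sum_{k\geq m+1} a_k^2\,(u^Tv)^k .
\end{equation*}
Since $v,u$ are unit vectors, $|u^Tv|\leq 1$, hence $|(u^Tv)^k|\leq |u^Tv|^{m+1}$ for every $k\geq m+1$. Factoring out $|u^Tv|^{m+1}$ and using $\sum_{k\geq m+1} a_k^2 = \sum_{k\geq 0}a_k^2 - 1 = \chi^2(A,\cN(0,1))$ yields the claimed bound. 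The one subtle point --- not a real obstacle --- is justifying the exchange of summation and integration above: this is clean because $A/\phi\in L^2(\cN(0,1))$ (as $\chi^2(A,\cN(0,1))<\infty$), so the Hermite series of $A/\phi$ converges in $L^2(\phi)$, and the orthogonality identity collapses the double sum to the absolutely convergent series $\sum_k a_k^2 (u^Tv)^k$, whose modulus is bounded by $\sum_k a_k^2 < \infty$.
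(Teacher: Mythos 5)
Your proof is correct. Note that the paper does not prove this lemma itself --- it imports it verbatim from \cite{DKS17-sq} --- and your argument (factor $P_{A,v}/R$ as a function of $v^Tx$ alone, expand $A/\phi$ in Hermite polynomials, and collapse the cross terms via the identity $\E_{x\sim\cN(0,I_d)}[h_k(v^Tx)h_j(u^Tx)]=(u^Tv)^k\mathbf{1}(k=j)$, i.e.\ the diagonal action of the Ornstein--Uhlenbeck operator) is exactly the standard proof underlying the cited result, with the $L^2$ convergence issue handled appropriately.
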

The following result shows that there are exponentially many nearly-orthogonal unit vectors.
\begin{lemma}[see, e.g., Lemma 3.7 in~\cite{DKS17-sq}] \label{lem:orthogonal_vectors}
	For any $0<c<1/2$, there is a set $S$, of at least $2^{\Omega(d^c)}$ unit vectors in $\R^d$, such that for each pair of distinct $v,v' \in S$, it holds $|v^Tv'| \leq O(d^{c-1/2})$.
\end{lemma}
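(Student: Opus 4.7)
The plan is to prove the statement via the probabilistic method: draw $N = 2^{\Theta(d^c)}$ independent uniformly random unit vectors from $\mathcal{S}^{d-1}$, show that with positive probability every pair has inner product bounded by $O(d^{c-1/2})$ in absolute value, and deduce that such a set must exist.

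First, I would establish a sharp tail bound on the inner product of two independent uniform unit vectors. By rotational invariance, for independent $u,v \sim \mathrm{Unif}(\mathcal{S}^{d-1})$, the inner product $u^Tv$ has the same distribution as the first coordinate $u_1$ of a single uniform unit vector. Using the representation $u = g/\snorm{2}{g}$ where $g \sim \cN(0, I_d)$, one can compare $|u_1|$ to $|g_1|/\snorm{2}{g}$. A standard Gaussian concentration argument (together with $\snorm{2}{g} \geq \sqrt{d}/2$ with probability $1 - e^{-\Omega(d)}$) gives a bound of the form
\[
\pr\bigl[\,|u^T v| \geq t\,\bigr] \;\leq\; 2\exp(-d t^2/4) + e^{-\Omega(d)}
\]
for all $t \in (0,1)$. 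Setting $t = C\, d^{c-1/2}$ for a suitable constant $C$ yields $\pr[|u^Tv| \geq t] \leq 3\exp(-C^2 d^{2c}/4)$.

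Next, I would apply a union bound. Let $N = \lfloor 2^{\kappa d^c} \rfloor$ for a small constant $\kappa > 0$ to be chosen, and let $v^{(1)},\ldots,v^{(N)}$ be drawn i.i.d.\ uniformly from $\mathcal{S}^{d-1}$. Then
\[
\pr\Bigl[\,\exists\, i \neq j :\ |{v^{(i)}}^T v^{(j)}| \geq C d^{c-1/2}\Bigr]
\;\leq\; \binom{N}{2} \cdot 3\exp(-C^2 d^{2c}/4)
\;\leq\; 3 \exp\bigl(2\kappa (\ln 2) d^c - (C^2/4) d^{2c}\bigr).
\]
Since $c > 0$, the term $d^{2c}$ dominates $d^c$ for $d$ large enough, so by choosing $\kappa$ small (in terms of $C$) this probability is strictly less than $1$ for every sufficiently large $d$. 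In particular, there exists a realization $v^{(1)},\ldots,v^{(N)}$ in which all pairwise inner products are bounded by $C\, d^{c-1/2}$; taking $S = \{v^{(1)},\ldots,v^{(N)}\}$ proves the claim for large $d$, and the statement for small $d$ is vacuous after adjusting the implied constants.

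The only delicate step is the concentration bound in the first paragraph; everything else is bookkeeping. One must confirm that the tail estimate $\pr[|u^Tv| \geq t] \leq \exp(-\Omega(d t^2))$ holds down to $t = d^{c-1/2}$, which is in the sub-Gaussian regime and is standard (equivalently, one can integrate the explicit density $p(x) \propto (1-x^2)^{(d-3)/2}$ of a single coordinate of a uniform unit vector, using $(1-x^2)^{(d-3)/2} \leq \exp(-(d-3)x^2/2)$). The constraint $c < 1/2$ is exactly what allows $t = d^{c-1/2} \to 0$, ensuring that we are in the regime where the sub-Gaussian tail bound is valid and nontrivial. No other subtleties arise, and the construction is entirely non-explicit; more careful analysis or an explicit construction (e.g., via binary codes) could improve constants but is unnecessary for the stated qualitative bound.
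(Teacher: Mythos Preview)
Your argument is correct and is the standard probabilistic-method proof of this fact. The paper itself does not prove this lemma; it is quoted as a black box from~\cite{DKS17-sq}, and the proof there is likewise a random-choice argument over $\mathcal{S}^{d-1}$ combined with the sub-Gaussian tail of a single coordinate of a uniform unit vector, so your approach matches the cited source.
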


\subsection{Prior and Related Work}\label{ssec:related}

Early work in robust statistics, starting with the pioneering works of Huber and Tukey~\cite{Huber64, Tukey75}, 
pinned down the sample complexity of high-dimensional 
robust estimation with a minority of outliers. In contrast, until relatively recently, 
even the most basic computational questions in this field were poorly understood. 
Two concurrent works~\cite{DKKLMS16, LaiRV16} gave the first provably robust and 
efficiently computable estimators for robust mean and covariance estimation.
Since the dissemination of these works, there has been a flurry of
activity on algorithmic robust estimation in a variety of high-dimensional settings;
see~\cite{DK19-survey} for a recent survey on the topic.
Notably, the robust estimators developed in~\cite{DKKLMS16} are scalable in practice 
and yield a number of  applications in exploratory data analysis~\cite{DKK+17} and 
adversarial machine learning~\cite{TranLM18, DKK+19-sever}

The list-decodable learning setting studied in this paper was first considered in~\cite{CSV17} with a focus on mean estimation.
\cite{CSV17} gave a polynomial-time algorithm with near-optimal statistical guarantees for list-decodable mean estimation 
under a bounded covariance assumption on the clean. Subsequent work has led to significantly faster 
algorithms for the bounded covariance setting~\cite{DiakonikolasKK20, CherapanamjeriMY20, DKKLT20, DKKLT21} 
and polynomial-time algorithms with improved error guarantees 
under stronger distributional assumptions~\cite{DKS18-list, KSS18-sos}.
More recently, a line of work developed list-decodable learners for more challenging tasks, 
including linear regression~\cite{KKK19-list, RY19-list} and subspace recovery~\cite{RY20-subspace, BK20-subspace}.

\section{Information-Theoretic Bounds}  \label{sec:info-theoretic}
\subsection{Upper Bound on Sample Complexity} \label{sec:appendix-upper-bound}

In this section, we show that  $n=\poly(d,1/\alpha)$ samples suffice for list-decodable linear regression. 
\begin{theorem} \label{thm:sample-ub}
There is a (computationally inefficient) algorithm that uses $O(d/\alpha^3)$ samples from a $(1{-\alpha})$-corrupted version of a Gaussian linear regression model of Definition~\ref{def:glr} with $S' = \R^d$, and returns a list $\cL$ of $|\cL|=O(1/\alpha)$ many hypotheses such that with high probability at least one of them is within $\ell_2$-distance $O((\sigma/\alpha)\sqrt{\log(1/\alpha)})$ from the regression vector.
\end{theorem}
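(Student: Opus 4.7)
The plan is to give a (computationally inefficient) brute-force algorithm based on certifying stable witness subsets. Set $\epsilon = C_1(\sigma/\alpha)\sqrt{\log(1/\alpha)}$ and $\tau = c_1\sigma\sqrt{\log(1/\alpha)}$, with $C_1/c_1$ chosen a sufficiently large constant. Call a subset $S \subseteq [n]$ \emph{stable} if $|S|\in[\alpha n/2,\alpha n]$, the empirical covariance satisfies $\tfrac{1}{2}I\preceq \tfrac{1}{|S|}\sum_{i \in S} x_i x_i^\top \preceq 2I$, and $S$ is Gaussian-anti-concentrated in every direction: for every unit vector $u \in \cS^{d-1}$ and every interval $J\subseteq\R$, $|\{i\in S:u^\top x_i\in J\}|/|S|\leq C_2(|J|+\alpha)$. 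Call $v\in\R^d$ a \emph{candidate} if there exists a stable $S$ witnessing $\max_{i\in S}|y_i-v\cdot x_i|\leq\tau$. The algorithm outputs any maximal $\epsilon$-packing $\cL$ of the candidate set.

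\emph{Completeness.} Let $T \subseteq [n]$ be the set of true inliers. A Chernoff bound gives $|T|\geq 0.9\alpha n$, and by the Gaussian tail bound on $\eta_i \sim \cN(0,\sigma^2)$, the subset $T':=\{i\in T:|\eta_i|\leq\tau\}$ has $|T'|\geq\alpha n/2$ with high probability once $c_1$ is a moderate constant. Since $n=\Theta(d/\alpha^3)$ gives $|T'|=\Omega(d/\alpha^2)$ i.i.d.\ Gaussian covariates, matrix Bernstein yields the covariance bound, and a uniform Glivenko--Cantelli argument over the class of slabs $\{x:u^\top x\in J\}$ (of VC dimension $O(d)$) yields the anti-concentration bound with slack $O(\sqrt{d/|T'|})=O(\alpha)$. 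Hence $T'$ is stable w.h.p., and since $y_i-\beta\cdot x_i=\eta_i$ on $T$, the pair $(\beta,T')$ certifies that $\beta$ is a candidate; so some $\widehat\beta\in\cL$ satisfies $\|\widehat\beta-\beta\|_2\leq\epsilon$.

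\emph{List-size bound.} Let $\cL=\{v_1,\ldots,v_m\}$ with stable witnesses $S_1,\ldots,S_m$. For $j\neq k$ and $i\in S_j\cap S_k$, the triangle inequality gives $|(v_j-v_k)\cdot x_i|\leq 2\tau$. Since $\|v_j-v_k\|_2\geq\epsilon$ by the packing property, applying the anti-concentration of $S_j$ in direction $u=(v_j-v_k)/\|v_j-v_k\|_2$ to the interval $J=[-2\tau/\|v_j-v_k\|_2,\,2\tau/\|v_j-v_k\|_2]$ yields
\[
|S_j\cap S_k|\;\leq\; C_2\bigl(\tfrac{4\tau}{\|v_j-v_k\|_2}+\alpha\bigr)|S_j|\;\leq\;C_2\bigl(\tfrac{4\tau}{\epsilon}+\alpha\bigr)|S_j|\;\leq\;c_2\,\alpha\,|S_j|,
\]
where $c_2$ can be made arbitrarily small by enlarging $C_1/c_1$. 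Setting $\mu_i=|\{j:i\in S_j\}|$ and $N=\sum_i\mu_i=\sum_j|S_j|\in[\tfrac{m\alpha n}{2},m\alpha n]$, Cauchy--Schwarz gives
\[
N^2 \;\leq\; n\sum_i\mu_i^2 \;=\; n\Bigl(\sum_j|S_j|+\sum_{j\neq k}|S_j\cap S_k|\Bigr) \;\leq\; n\bigl(m\alpha n + c_2\,m^2\alpha^2 n\bigr),
\]
hence $(m\alpha)^2/4\leq m\alpha+c_2(m\alpha)^2$, which forces $m=O(1/\alpha)$ as soon as $c_2<1/4$.

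\emph{Main obstacle.} The delicate technical ingredient is the uniform anti-concentration in condition (ii) of stability: it must hold simultaneously for \emph{all} directions $u\in\cS^{d-1}$, because in the list-size step we apply it to the data-dependent direction $(v_j-v_k)/\|v_j-v_k\|_2$. This uniform requirement is what pins the inlier sample complexity at $|T'|=\Omega(d/\alpha^2)$, and consequently $n=\Omega(d/\alpha^3)$; a pointwise bound would demand only $\widetilde{O}(1/\alpha^2)$ samples per direction, but uniformization over the slab class (VC dimension $O(d)$) costs an additional factor of $d$. Once uniform stability is established, the remaining pieces---Gaussian tail bounds on $\eta_i$, matrix Bernstein for the sample covariance, and the Cauchy--Schwarz counting argument---are standard.
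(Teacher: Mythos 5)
Your proposal follows essentially the same strategy as the paper's proof: certify candidate vectors by witness subsets of size $\Theta(\alpha n)$ satisfying small-residual and anti-concentration conditions (established uniformly over directions via a VC argument at cost $n=\Omega(d/\alpha^3)$), observe that the true $\beta$ is certified by the low-noise inliers, and bound the list size by showing that witnesses of $\epsilon$-separated candidates have pairwise intersections of size $O(\alpha|S_j|)$ and then counting. The paper uses a quantile version of the residual condition, puts the anti-concentration directly on the residuals $y-X^T\beta-\gamma' v^TX$, and closes with inclusion–exclusion rather than Cauchy–Schwarz, but these are cosmetic differences.

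One step needs repair as written. You bundle the uniform anti-concentration into a single constant, $|\{i\in S: u^\top x_i\in J\}|/|S|\leq C_2(|J|+\alpha)$, and then claim the resulting $c_2=C_2(4\tau/(\epsilon\alpha)+1)$ "can be made arbitrarily small by enlarging $C_1/c_1$." It cannot: the additive term forces $c_2\geq C_2$, and $C_2$ must be at least the Gaussian density bound $1/\sqrt{2\pi}\approx 0.4>1/4$, so the final inequality $(m\alpha)^2/4\leq m\alpha+c_2(m\alpha)^2$ becomes vacuous. The fix is to track the two contributions separately: the empirical slab mass is at most $|J|/\sqrt{2\pi}+\delta\alpha$, where the first term contributes only $O((c_1/C_1)\alpha)$ because $|J|\leq 4\tau/\epsilon=4(c_1/C_1)\alpha$, and the VC slack $\delta$ can be made arbitrarily small by enlarging the constant in $n=O(d/\alpha^3)$. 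With that restatement, $|S_j\cap S_k|\leq(4c_1/(\sqrt{2\pi}C_1)+\delta)\alpha|S_j|$ and your Cauchy–Schwarz count goes through.
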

The proof strategy is similar to \cite{DKS18-list}. When $S$ is a set, we use the notation $X \sim_u S$ to denote that $X$ is distributed according to the uniform distribution on $S$.
We require the following theorem:
\begin{fact}[VC Inequality]\label{thm:vc}
    Let $\cF$ be a class of Boolean functions with finite VC dimension $\mathrm{VC}(\cF)$ and let a probability distribution $D$ over the domain of these functions. For a set $S$ of $n$ independent samples from $D$
    \begin{align*}
      \sup_{f \in \cF} \abs[\Big]{\pr_{X \sim_u S}[f(X)] - \pr_{X \sim D}[f(X)]} \lesssim \sqrt{\frac{\mathrm{VC}(\cF)}{n}} + \sqrt{\frac{\log(1/\tau)}{n} } \;,
    \end{align*}
    with probability at least $1-\tau$.
  \end{fact}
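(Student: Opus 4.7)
The plan is to follow the inefficient-enumeration template of~\cite{DKS18-list} for list-decodable mean estimation, adapted to the regression setting. The high-level structure has three parts: an enumeration-based algorithm, a correctness argument showing the true regression vector is covered by the output list, and a packing argument showing that the list is small.

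For the algorithm, I would draw $n = Cd/\alpha^3$ samples, enumerate every $\lceil\alpha n\rceil$-sized subset $W$ of sample indices, apply a validity filter (checking that the empirical covariance of $\{x_i\}_{i\in W}$ is $\Theta(I_d)$ and that the empirical residual variance of the least-squares fit $\hat\beta_W$ on $W$ is $O(\sigma^2)$), and then greedily select a maximal $r$-packing $\cL$ from the valid $\hat\beta_W$'s with $r = \Theta((\sigma/\alpha)\sqrt{\log(1/\alpha)})$.

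For correctness, let $T^*$ denote the true inlier index set, so $|T^*| = \alpha n(1\pm o(1))$ by Chernoff and $(x_i, y_i)_{i\in T^*}$ are i.i.d.\ from $D_\beta$. Standard Gaussian concentration (matrix Chernoff for $\sum_{i\in T^*} x_ix_i^T\approx |T^*|I_d$ together with Hanson-Wright/Gaussian-tail bounds for $\sum_{i\in T^*}\eta_i x_i$) shows that $T^*$ passes validity and that $\|\hat\beta_{T^*}-\beta\|_2 = O(\sigma\sqrt{d/|T^*|}) = O(\sigma\alpha)\ll r$. By maximality of the packing, some $\hat\beta^*\in\cL$ lies within $r$ of $\hat\beta_{T^*}$, hence within $O(r) = O((\sigma/\alpha)\sqrt{\log(1/\alpha)})$ of $\beta$.

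For the packing bound $|\cL|=O(1/\alpha)$, I would argue that pairwise $r$-separated elements $\hat\beta_{W_1},\dots,\hat\beta_{W_k}\in\cL$ must correspond to subsets $W_j$ with small pairwise intersections, specifically $|W_j\cap W_{j'}|=o(\alpha^2 n)$, and then conclude by inclusion-exclusion $n\ge|\cup_j W_j|\ge k\alpha n-\binom{k}{2}\cdot o(\alpha^2 n)$, forcing $k=O(1/\alpha)$. For the intersection bound, the residual part of validity plus Markov's inequality extracts a large common sub-collection $S\subseteq W_j\cap W_{j'}$ on which both residuals $|y_i-\hat\beta_j^T x_i|$ and $|y_i-\hat\beta_{j'}^T x_i|$ are $O(\sigma)$; this forces $|(\hat\beta_j-\hat\beta_{j'})^T x_i|=O(\sigma)$ on $S$. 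Applying Fact~\ref{thm:vc} to the halfspace class $\{(x,y)\mapsto\mathbf{1}[|y-v^T x|\le t]\}$ (VC-dimension $O(d)$) at the appropriate threshold, empirical and population slab-fractions match up to $O(\sqrt{d/n})=O(\alpha^{3/2})$; Gaussian anti-concentration then caps the inlier population fraction by $O(\sigma/r)=O(\alpha/\sqrt{\log(1/\alpha)})$, which multiplied by $|T^*|=\Theta(\alpha n)$ yields the desired $o(\alpha^2 n)$ intersection bound on inliers.

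The main obstacle is a tight control of the adversarial-outlier contribution to $|W_j\cap W_{j'}|$: anti-concentration is a statement about Gaussian $x$, but outliers may place $x$'s adversarially in the slab $\{v:|y - v^T x|\le t\}$. The validity filter must be strong enough that no $W_j$ padded by many such outliers can pass (otherwise the packing could be inflated by fake candidates), yet permissive enough that the true inlier set $T^*$ passes. Carefully balancing these two constraints, by picking the covariance and residual tolerances together with the slab threshold $t$ so that (i)~$T^*$ passes with room to spare, and (ii)~every valid $W_j$ is dominated by inliers to the extent that the slab argument captures most of $W_j\cap W_{j'}$, is the technical heart of the argument and mirrors the corresponding balance in~\cite{DKS18-list} for bounded-covariance mean estimation.
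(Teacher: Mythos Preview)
Your proposal does not address the stated Fact at all. Fact~\ref{thm:vc} is the classical VC inequality (uniform convergence of empirical frequencies over a class of bounded VC dimension), which the paper simply quotes without proof; a proof would proceed via symmetrization and a growth-function/chaining bound, or would just cite a standard reference. What you have written is instead a proof sketch for Theorem~\ref{thm:sample-ub}, the information-theoretic sample-complexity upper bound for list-decodable regression, in which Fact~\ref{thm:vc} is merely \emph{invoked} as a tool. So as a proof of the VC inequality, the proposal is a non-starter.

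If your intent was actually Theorem~\ref{thm:sample-ub}, note that the paper's argument is structured differently and avoids the obstacle you flag. Rather than enumerating subsets, running least-squares, and filtering by covariance/residual checks, the paper defines the candidate set $\cH_{t,\gamma}$ directly by two \emph{slab} conditions on some subset $T'\subset T$ of size $\alpha|T|/2$: condition~\eqref{EqCond1} says most of $T'$ has small residual at $\beta$, and condition~\eqref{EqCond2} is an anti-concentration requirement built into the definition (few points of $T'$ fall in any shifted slab). Because~\eqref{EqCond2} is part of the \emph{certificate} for membership in $\cH_{t,\gamma}$, the intersection bound $|T'_i\cap T'_j|\le (\alpha/20)(|T'_i|+|T'_j|)$ follows by a one-line deterministic calculation on the certifying sets, with no need to separately control adversarial outliers in the slab. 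Your least-squares-plus-validity-filter route would have to manufacture this anti-concentration guarantee after the fact, which is exactly the ``main obstacle'' you identify; the paper sidesteps it by baking the condition into $\cH_{t,\gamma}$ and then using the VC inequality only once, to show $\beta^*\in\cH_{t,\gamma}$.
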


\begin{proof}[Proof of Theorem~\ref{thm:sample-ub}]
Recall the notation in Definitions~\ref{def:ld} and \ref{def:glr}. 
Let $T$ be the set of points generated by the $(1{-\alpha})$-corrupted version of $D_{\beta^*}$ for some unknown $\beta^* \in \R^d$.
 Let $S_1$ be the set of points that are sampled from $D_{\beta^*}$.
Since inliers are sampled with probability $\alpha$, we have that $|S_1| \geq \alpha |T|/2$ with high probability. For a $t \geq 0$, define $\cH_{t,\gamma}$  as follows:
\begin{align}
\cH_{t, \gamma} &:= \Bigg\{ \, \beta \in \R^d:  \; \exists T' \subset T, \; |T'| = \alpha |T|/2 ,  
\label{EqCond0}
\\
     &\qquad \qquad \pr_{(X,y)\sim_u T'}[ |y-X^T \beta| > \sigma t]\leq \alpha/20, \label{EqCond1}\\
     &\qquad \qquad \forall v \in \cS^{d-1}, \gamma' \geq \gamma: \; \pr_{(X,y)\sim_u T'}[ |y - X^T \beta - \gamma' v^TX|\leq \sigma t] \leq \
     \alpha/20 \label{EqCond2}\Bigg\} \;.
\end{align}

Recall that the distribution of inliers is  $X \sim \cN(0,I_d)$ and $y = X^T \beta^* + \eta$, where $\eta \sim \cN(0,\sigma^2)$ independent of  $X$. 
If $|S_1| \geq Cd /\alpha^2$ for a sufficiently large constant $C$, then we claim that $\beta^* \in \cH_{t,\gamma}$ with $t = \Theta(\sqrt{\log(1/\alpha)})$ and $ \gamma = 40 \sigma t / \alpha = \Theta((\sigma/ \alpha) \sqrt{\log(1/\alpha)})$.
Let $S'$ be a set of i.i.d. points sampled from $D_{\beta^*}$ with $|S'| = |T|\alpha/2$. 
We first argue that conditions~\eqref{EqCond1} and \eqref{EqCond2} hold under $(X,y) \sim D_{\beta^*}$, even after replacing $\alpha/20$ with $\alpha/40$ in conditions~\eqref{EqCond1} and \eqref{EqCond2}, with the claimed bounds on $t$ and $\gamma$, and then the required result on $(X,y) \sim_u S'$  will follow from the VC inequality.
Since $y- X^T \beta^* \sim \cN(0,\sigma^2)$ under $D_{\beta^*}$, we get that $\pr[|y-X^T\beta^*|> \sigma t] \leq \alpha/40$  because of Gaussian concentration. 
Let $G \sim \cN(0,1)$ independent of $\eta$.
For condition~\eqref{EqCond2}, 
the expression again reduces to concentration of a Gaussian distribution:
\begin{align*}
\pr_{\eta \sim \cN(0,\sigma^2), G \sim \cN(0,1)} [|\eta + \gamma' G| \leq \sigma t] = \pr_{Z \sim \cN(0, \sigma^2 + \gamma'^2)} [|Z|\ \leq \sigma t] \lesssim  \frac{\sigma t}{\gamma'},
\end{align*}
which is less than $\alpha/40$ for $\gamma' \geq \gamma = 40t \sigma /\alpha$.
The desired conclusion now follows by noting that conditions~\eqref{EqCond1} and \eqref{EqCond2} follow by uniform concentration of linear threshold functions on $(X,y)$, which have VC dimension $O(d)$ and the condition that $|S'| = \Omega(d/\alpha^2)$.

We then show that any $\gamma$-packing of the set $\cH_{t, \gamma}$ has size $O(1/\alpha)$. Having this, it follows that there exists a $\gamma$-cover of size $O(1/\alpha)$ and the output of the algorithm, $\cL$, consists of returning any such cover. 
The key claim for bounding the size of any $\gamma$-packing is that the pairwise intersections between the sets $T'$
from condition~\eqref{EqCond0} are small.

\begin{claim} Let $\beta_1,\dots, \beta_k \in \cH_{t, \gamma}$ such that $\|\beta_i - \beta_j\|_2 > \gamma$ for all $i, j \in [k]$ and $i \neq j$. Let $T'_i$ be the corresponding subsets of $T$ satisfying the condition~\eqref{EqCond0}. Then $|T'_i \cap T'_j| \leq  \alpha(|T'_i| + |T'_j|)/20$.
\end{claim}
\begin{proof}
Fix an $i \neq j$.  
Let $\beta_i - \beta_j = v \gamma'$, where $v \in \cS^{d-1}$ and $\gamma' \geq \gamma$.
Let $\cE$ be the  event $\{ (X,y): |y-X^T \beta_j| \leq \sigma t\}$ and $\cE^c$ be its complement.
As $T'_i$ and $T'_j$ are sets of size $\alpha |T|/2$, we have that
\begin{align*}
|T'_i \cap T'_j| &= \frac{|T'_i| + |T'_j|}{2} \left( \frac{|T'_i \cap T'_j \cap \cE|}{ |T'_i|} + \frac{|T'_i \cap T'_j \cap \cE^c|}{ |T'_j|} \right)\\
&\leq \frac{|T'_i| + |T'_j|}{2} \left(\frac{|T'_i \cap \cE|}{|T'_i| } + \frac{|T'_j \cap \cE^c|}{ |T'_j|} \right) = \frac{|T'_i| + |T'_j|}{2}\left( \pr_{(X,y)\sim_uT'_i}[\cE] + \pr_{(X,y)\sim_uT'_j}[\cE^c]\right).
\end{align*}
As $\beta_j \in \cH_{t, \gamma}$, we have that $\p_{(X,y)\sim_uT'_j}[\cE^c] \leq \alpha/20 $ by condition~\eqref{EqCond1}.
 We now bound the first term.
\begin{align*}
\pr_{(X,y)\sim_uT'_i}[\cE] &= \pr_{(X,y)\sim_uT'_i}[|y-X^T \beta_i - \gamma' v^TX | \leq \sigma t],
\end{align*}
which is less than $\alpha/20$ by the condition~\eqref{EqCond2}. This completes the proof of the claim.
\end{proof}
We use this to show that there cannot exist a $\gamma$-packing of size $k\geq 4/\alpha$. To see this, assume that $k=4/\alpha$, then 
\begin{align*}
    |T| \geq \sum_{i=1}^k |T'_i| - \sum_{1\leq i < j \leq k} |T'_i \cap T'_j| \geq \left(1-\frac{\alpha}{20}(k-1)\right)\sum_{i=1}^k|T'_i| \geq \frac{4}{5}k\alpha \frac{|T|}{2} > |T| \;.
\end{align*}
This yields a contradiction, completing the proof of Theorem~\ref{thm:sample-ub}.
\end{proof}

\subsection{Information-Theoretic Lower Bound on Error}
We establish the following lower bound on the error of any list-decoding algorithm for linear regression.
\begin{theorem}
\label{thm:sample-lb}
    Let $0<\alpha<1/2$, $\sigma>0$, $k>1$ such that $k = O(1/(\alpha^2\log(1/\alpha)))$, and $d \in \Z_+$ such that $d>(\log(1/\alpha^k))^C$, where $C$ is a sufficiently large constant. 
    Any list-decodable algorithm that receives a $(1{-\alpha})$-corrupted version of $D_{\beta}$ (defined in Definition~\ref{def:glr}) for some unknown $\beta \in \R^d$, and returns a list $\cL$ of size  $|\cL|= O((1/\alpha)^k)$  has error bound $\Omega\left( \frac{\sigma}{\alpha \sqrt{k\log(1/\alpha)}} \right)$ with high probability.
\end{theorem}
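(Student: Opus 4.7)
The plan is a standard information-theoretic packing argument: exhibit a family of regressors $\beta_1, \ldots, \beta_m$ that are pairwise $2\gamma$-separated, together with a single \emph{confusion distribution} $E$ which is simultaneously a valid $(1{-}\alpha)$-corruption of every $D_{\beta_i}$. An algorithm observing samples from $E$ receives exactly the same data regardless of which $\beta_i$ is the ground truth, so its output list $\cL$ is statistically independent of $i$, and if $|\cL|$ is much smaller than $m$ then most candidates must be missed.

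For the candidates, I invoke Lemma~\ref{lem:orthogonal_vectors} with $c=1/4$: the hypothesis $d > (\log(1/\alpha^k))^C$ for $C$ large guarantees a set of $m \geq C_1 (1/\alpha)^k$ unit vectors $v_1, \ldots, v_m \in \R^d$ with pairwise inner product $O(d^{-1/4}) = o(1)$, where the constant $C_1$ can be made as large as needed. Set $\beta_i := r v_i$ with $r := c_0 \sigma/(\alpha\sqrt{k\log(1/\alpha)})$ for a sufficiently small constant $c_0$. Then $\|\beta_i-\beta_j\|_2^2 = 2r^2(1-v_i^T v_j) \geq r^2$, so the $\beta_i$'s are pairwise separated by $\geq r =: 2\gamma$, and $\gamma = \Theta(\sigma/(\alpha\sqrt{k\log(1/\alpha)}))$ as required.

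The heart of the argument is to verify that $g(x,y) := \alpha \max_{i\in[m]} D_{\beta_i}(x,y)$ satisfies $\int g \,dx\, dy \leq 1$; given this, I can define $E := g + (1 - \int g)\cdot u$ for any auxiliary density $u$, and then $E \geq \alpha D_{\beta_i}$ pointwise for every $i$ so that $N_i := (E - \alpha D_{\beta_i})/(1-\alpha)$ is a valid outlier distribution and $E = \alpha D_{\beta_i} + (1-\alpha)N_i$ exhibits $E$ as an $(1{-}\alpha)$-corruption of each $D_{\beta_i}$. Writing $D_{\beta_i}(x,y) = \phi(x)\cdot \cN(\beta_i^T x,\sigma^2)(y)$, I split the inner $y$-integral into the region within $\sigma$ of the nearest center $\mu_i(x):=\beta_i^T x$ and the complement (contributing $O(1)$ via the Gaussian tail) to obtain
\[
  \int_y \max_i \cN(\mu_i(x),\sigma^2)(y)\,dy \;=\; O\!\left(\frac{W(x)}{\sigma}+1\right), \qquad W(x) := \max_i \mu_i(x) - \min_i \mu_i(x).
\]
Since $v_i^T X \sim \cN(0,1)$ marginally for every $i$, a union bound yields $\E_X[\max_i |v_i^T X|] = O(\sqrt{\log m}) = O(\sqrt{k\log(1/\alpha)})$ and hence $\E_X[W(X)] = O(r\sqrt{k\log(1/\alpha)})$. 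Combining, $\int g = \alpha\cdot O(\E_X[W(X)]/\sigma + 1) = O(c_0)$, which is $\leq 1$ after picking $c_0$ small.

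To conclude, draw $I \sim \mathrm{Unif}[m]$ independently of everything, let the data be i.i.d.\ samples from $E = \alpha D_{\beta_I} + (1-\alpha) N_I$, and run the algorithm to produce $\cL$. Since the sample distribution $E$ does not depend on $I$, the list $\cL$ is independent of $I$; and because the $\beta_i$'s are pairwise $2\gamma$-separated, each $\hat\beta\in\cL$ is within $\gamma$ of at most one of them, so for every realization of $\cL$ one has $\Pr_I[\exists \hat\beta\in\cL:\|\hat\beta-\beta_I\|_2 < \gamma] \leq |\cL|/m$. Picking $C_1$ large enough that $m \geq 200|\cL|$, the overall success probability is $\leq 1/200$, and Markov's inequality produces an index $I^\star$ for which the algorithm fails to cover $\beta_{I^\star}$ with probability at least $99/100$ over its samples and internal randomness. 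The main technical obstacle is the integral bound on $\int \max_i D_{\beta_i}$: the extra $1/\alpha$ in the final error lower bound is traceable precisely to the fact that this $L^1$-mass scales linearly with the spread $W$ (not with a constant), so sharply controlling this spread in expectation for nearly-orthogonal directions is what upgrades the naive $|\cL| \geq 1/\alpha$ bound to the desired $|\cL| \geq (1/\alpha)^k$ bound.
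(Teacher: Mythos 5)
Your proof is correct, and its skeleton matches the paper's: pack $\Theta((1/\alpha)^k)$ regressors along nearly orthogonal directions at radius $r=\Theta(\sigma/(\alpha\sqrt{k\log(1/\alpha)}))$, show that the pointwise maximum $\max_i \alpha D_{\beta_i}$ has total mass at most $1$ so a single confusion distribution dominates all of them, and finish by counting. Where you genuinely diverge is in the central $L^1$ estimate. The paper conditions on $y$: it introduces a reference distribution with inflated label variance $\sigma_1^2=\Theta(k\sigma_y^2\log(1/\alpha))$, shows $R/D_v\geq 2\alpha$ on the high-probability band $|y|\lesssim \sqrt{k}\sigma_y\sqrt{\log(1/\alpha)}$, and handles the tail by a union bound over the $|S'|$ directions; this is what forces the hypothesis $k=O(1/(\alpha^2\log(1/\alpha)))$, since it needs $\sigma_y^2=\sigma^2+\rho^2\leq \sigma^2/(C'k\alpha^2\log(1/\alpha))$. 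You instead condition on $x$ and bound $\int_y\max_i \cN(\mu_i(x),\sigma^2)(y)\,dy$ by $O(W(x)/\sigma+1)$ via the spread $W(x)$ of the conditional means, then control $\E_X[W(X)]=O(r\sqrt{\log m})$ by a max-of-(dependent)-Gaussians bound. This is a clean alternative that works directly with the conditional noise scale $\sigma$ rather than the marginal scale $\sigma_y$, and as a result it does not appear to use the upper bound on $k$ at all; it also spells out the final independence-plus-Markov counting step that the paper leaves implicit. The only bookkeeping worth adding is that the constant in $\alpha\cdot O(\E[W]/\sigma+1)\leq 1$ needs the additive term to contribute strictly less than $1$, which holds because $\alpha<1/2$ and the tail-plus-width contribution per $x$ is at most $2/\sqrt{2\pi}+1<2$.
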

\begin{proof}
    Let $\rho>0$ to be decided later. We will take $\beta$ to be of the form $\rho v$ for some unit vector $v$.
    By abusing notation, let $D_v(x,y)$ be the joint distribution on $(X,y)$ from the linear model $X\sim \cN(0,I_d)$, $y = \beta^TX + \eta$, where $\eta \sim \cN(0,\sigma^2)$ independently of $X$ and $\beta = \rho v$. As  $d$ is large enough, let $S'$ be a subset of the set $S$ of nearly orthogonal unit vectors of $\R^d$ from Lemma~\ref{lem:orthogonal_vectors} with $|S'|=\lfloor 0.5(1/\alpha)^k \rfloor$ for $k>1$. Consider the set of distributions $\{ D_v \}_{v \in S'}$ and note that for every distinct pair  $u,v \in S$ we have that $\snorm{2}{\rho u - \rho v}  \geq c\rho$ for some $c>0$. We want to show that after adding $(1-\alpha)$-fraction of outliers these distributions become indistinguishable, i.e., there exists some distribution that is pointwise greater than $\alpha D_v$ for every $v \in S'$. This will lead to a lower bound on error of the form $\Omega(\rho)$.
    Let $P$ be the joint pseudo-distribution on $(X,y)$ such that $P(x,y) = \max_{v \in S} D_v(x,y)$ and denote by $\|P\|_1$ the normalizing factor $\int_{\R} \int_{\R^{d}} P(x,y) \dx \dy$. We will show that $P/\|P\|_1 \geq \alpha D_v$ pointwise. To this end, it suffices to show that $\|P\|_1 \leq 1/\alpha$. Denote $z := v^Tx$. Noting that $D_v$'s marginal on $x$ is $\cN(0,I_d)$ and the conditional $D_v(y|x)$ is $\cN(\rho z, \sigma^2)$ we can write
    \begin{align*}
        D_v(x,y) &=  \frac{1}{\sqrt{2 \pi }\sigma} \exp\left(-\frac{|y - \rho z |^2}{2\sigma^2}\right) \frac{1}{(\sqrt{2\pi })^d} \exp \left(- \frac{\|x\|^2}{2}\right) \\
                                     &= \frac{1}{(\sqrt{2 \pi })^{d+1}\sigma} \exp\left(-\frac{|y - \rho z |^2}{2\sigma^2} - \frac{\|x\|^2}{2}\right).
    \end{align*}

For some $\sigma_1$ to be defined later, take $R$ to be the reference distribution where $X \sim \cN(0,I_d)$ and $y \sim \cN(0,\sigma_1^2)$ independently. We now calculate the ratio of density of $R$ with $D_v$ at arbitrary $(x,y)$:
    
    \begin{align*}
        \frac{R(x,y)}{ D_v(x,y)} &= \frac{R(y)R(x|y)}{ D_v(y) D_v(x|y)}\\
                                        &= \frac{\frac{1}{(\sqrt{2 \pi })^{d+1} \sigma_1 } \exp\left(-0.5\|x\|^2- 0.5y^2/\sigma_1^2 \right)}{\frac{1}{(\sqrt{2 \pi })^{d+1} \sigma } \exp\left(-0.5\|x\|^2- 0.5\frac{\rho^2}{\sigma^2}\left(z - \frac{y}{\rho}\right)^2 \right)} \\
                                        &=  \frac{ \sigma }{\sigma_1} \exp\left( - \frac{y^2}{2\sigma_1^2}  + \frac{\rho^2}{2\sigma^2}\left(z - \frac{y}{\rho}\right)^2 \right) \\
                                        &\geq  \frac{ \sigma }{\sigma_1} \exp\left( - \frac{y^2}{2\sigma_1^2} \right).                                    
     \end{align*}
    As we will show later, it suffices to show that this expression is greater than $2\alpha$ with high probability under $D_v$.
     As $y \sim \cN(0,\sigma_y^2)$ under $D_v$, with probability $1 - \alpha^{k-1}$, $|y| \leq 10 \sqrt{k} \sigma_y \sqrt{\log(1/\alpha)}$.
     Setting $\sigma_1 = 10 \sqrt{k} \sigma_y \sqrt{ \log(1/\alpha)}$, we get that with the same probability, 
     \begin{align*}
        \frac{R(x,y)}{ D_v(x,y)} \geq \frac{1}{100 \sqrt{k} } \frac{\sigma}{\sigma_y \sqrt{ \log(1/\alpha)}}.
     \end{align*}
    We can now try to maximize $\rho$ (and thus $\sigma_y$) so that the expression on the right-hand side is greater than $2\alpha$. This holds as long as $\rho$ satisfies the following: 
    \begin{align*}
        \sigma_y^2 = \sigma^2 + \rho^2 \leq \frac{\sigma^2}{C' k \alpha^2 \log(1/\alpha)},
    \end{align*}
    As $k = O(1/(\alpha^2\log(1/\alpha)))$, the condition above shows that $\rho$ can be as large as $\Theta((\sigma/(\sqrt{k}\alpha))/\sqrt{\log(1/\alpha)} )$.
    Finally we show that $\|P\|_1$ is less than $1/\alpha$ as follows:
    \begin{align*}
        \|P \|_{1} &= \int_\R \int_{\R^d} P(x,y) \dx \dy \\
        &=  \int_\R \int_{\R^d} P(x,y) \mathbf{1}(|y| \leq 10\sqrt{k} \sigma_y \sqrt{\log(1/\alpha)}) \dx \dy +  \int_\R \int_{\R^d} P(x,y) \mathbf{1}(|y|>10 \sqrt{k} \sigma_y \sqrt{\log(1/\alpha)}) \dx \dy\\
        &\leq \frac{1}{2\alpha} \int_\R \int_{\R^d} R(x,y) \dx \dy + \int_\R \int_{\R^d} P(x,y) \mathbf{1}(|y|>10\sqrt{k} \sigma_y \sqrt{\log(1/\alpha)}) \dx \dy\\
        &\leq \frac{1}{2\alpha} + \sum_{v \in S'} \pr_{(X,y) \sim D_v} \left[|y|>10 \sqrt{k} \sigma_y \sqrt{\log(1/\alpha)} \right] \\
        &\leq \frac{1}{2\alpha} + |S'| \alpha^{k-1} \leq 1/\alpha,
    \end{align*}  
    where the last inequality follows by noting that $|S'|\leq 0.5 (1/\alpha)^k$. 
\end{proof}

\section{Main Result: Proof of Theorem~\ref{thm:sq-lb-inf}} \label{sec:proof}

In this section, we present the main result of this paper: 
SQ hardness of list-decodable linear regression (Definitions~\ref{def:ld} and~\ref{def:glr}). 
We consider the setting when $\beta$ has norm less than $1$, i.e.,
$\beta = \rho v$ for $v \in \cS^{d-1}$ and $\rho \in (0,1)$.\footnote{This is a standard assumption and considered by existing works~\cite{KKK19-list,RY19-list} (cf. Remark~\ref{remark:comparison}).} 
Note that the marginal distribution of the labels is $\cN(0,\sigma_y^2)$, 
where $\sigma_y^2 = \rho^2 + \sigma^2$. 
We ensure that the labels $y$ have unit variance by using $\sigma^2=1-\rho^2$.
Specifically, the choice of parameters will be such that obtaining a $\rho/2$-additive approximation of the regressor $\beta$ is possible information-theoretically with poly$(d/\alpha)$ samples (cf. Section~\ref{sec:appendix-upper-bound}), but the complexity of any SQ algorithm for the task must necessarily be at least $d^{\poly(1/\alpha)}/\sigma$. We show the following more detailed statement of Theorem~\ref{thm:sq-lb-inf}. 

\begin{theorem}[SQ Lower Bound] \label{thm:main-formal}
Let $c \in (0,1/2)$, $d \in \Z_+$ with $d = 2^{\Omega(1/(1/2-c))}$, $\alpha \in (0,1/2)$, $\rho \in (0,1)$, 
$\sigma^2=1-\rho^2$, and $m \in \Z_+$ with $m \leq c_1/\sqrt{\alpha}$ for some sufficiently small constant $c_1>0$. 
Any list-decoding algorithm that, given statistical query access to a $(1{-\alpha})$-corrupted version 
of the distribution described by the model of Definition~\ref{def:glr} with $\beta=\rho v$ for $v\in\cS^{d-1}$,  
returns a list $\cL$ of hypotheses vectors that contains a $\widehat{\beta}$ such that $\|\widehat{\beta} - \beta\|_2 \leq \rho/2$, 
does one of the following: 
\begin{itemize}
    \item it uses at least one query to  $\mathrm{STAT}\left(\Omega(d)^{-(2m+1)(1/4-c/2)} e^{O(m)}/\sigma\right)$,
    \item it makes $2^{\Omega(d^{c})} d^{-(2m+1)(1/2-c)}$ many queries, or 
    \item it returns a list $\cL$ of size $2^{\Omega(d^{c})}$.
\end{itemize}

\end{theorem}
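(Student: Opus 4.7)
The plan is to follow the generic recipe of~\cite{DKS17-sq, FeldmanGRVX17}: exhibit, for a reference distribution $R$ on $\R^d\times\R$ of our own choosing, a large family of distributions $\{E_v\}_{v\in S}$, indexed by an almost-orthogonal set $S\subseteq\cS^{d-1}$ of size $2^{\Omega(d^c)}$ (via Lemma~\ref{lem:orthogonal_vectors}), such that every $E_v$ is a genuine $(1{-}\alpha)$-corruption of $D_{\rho v}$, and such that the pairwise correlations $\chi_R(E_u,E_v)$ are all tiny for $u\neq v$. Plugging the resulting statistical dimension into Lemma~\ref{cor:cor3-12-FGR} and observing that any two regressors $\rho u,\rho v$ with $u\neq v\in S$ are separated by more than $\rho/2$ (so any successful list of size less than $|S|$ must omit most solutions) will yield the stated trichotomy. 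The key parameters to juggle are: the number of matched moments $M=2m+1$, the almost-orthogonality bound $|u^\top v|=O(d^{c-1/2})$, and a bound on $\chi^2(E_v,R)$; once these are in hand Lemma~\ref{lem:lemma-3-4-DKS17} (applied fiberwise in $y$) immediately gives $|\chi_R(E_u,E_v)|\le (d^{c-1/2})^{M+1}\chi^2(\cdot,R)$.

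The core construction follows the $y$-conditional strategy sketched in Section~\ref{ssec:techniques}. For each value $y_0$ of the label I would write $E_v(\cdot\mid y=y_0)=P_{A_{y_0},v}$ in the sense of Definition~\ref{def:high-dim-distribution}, with a univariate $A_{y_0}$ that (i) agrees with $\cN(0,1)$ on its first $M$ moments, and (ii) makes $E_v$ of the form $\alpha D_{\rho v}+(1{-}\alpha)N_v$. Recalling (Fact~\ref{fact:inliers_conditional_dist}) that the $v$-projection of $D_{\rho v}(\cdot\mid y=y_0)$ is a narrow Gaussian centered at $\rho y_0/(\rho^2+\sigma^2)$, condition (ii) forces $A_{y_0}$ to be a specified mixture of this narrow Gaussian with an outlier component $B_{y_0}$ at a $y_0$-dependent mixture weight. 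I would then set $A_{y_0}=U_\rho \widetilde{A}_{y_0}$ for some compactly supported $\widetilde{A}_{y_0}$, so that the Ornstein--Uhlenbeck contraction (Fact~\ref{fact:eigenfunction}) damps the Hermite coefficients by $\rho^i$ and, combined with Hermite tail bounds of the form in Lemma~\ref{PropTailBoundHermite2}, produces a controlled $\chi^2$-norm. Existence of such an $\widetilde{A}_{y_0}$ matching $M$ moments with a bounded-support density is exactly the dual feasibility problem treated by the Karlin--Shapley theorem (Theorem~\ref{ThmDualityBdd} in the paper), so I would invoke that duality to produce $\widetilde{A}_{y_0}$ for each $y_0$.

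The delicate accounting is the outlier budget: the required mixing weight on $B_{y_0}$ grows without bound as $|y_0|\to\infty$, so there is no uniform constant $1-\alpha$ that works pointwise. To absorb this I would choose the reference distribution $R$ to have the standard Gaussian marginal on $X$ but a significantly heavier-tailed marginal on $y$ than $\cN(0,\sigma_y^2)$; this lets me pay large outlier fractions on the rare (under $D_{\rho v}$) tail events $|y_0|\gg \sigma_y\sqrt{\log(1/\alpha)}$ while keeping the \emph{unconditional} outlier mass $\int (1-\text{inlier weight})(y_0)\,R(y_0)\,\mathrm{d}y_0$ at most $1-\alpha$. A convenient choice is a Gaussian $R(y)$ with variance $\sigma_1^2\asymp k\sigma_y^2\log(1/\alpha)$ for an appropriate $k=\Theta(1/\alpha)$, which is exactly the setup already used in the proof of Theorem~\ref{thm:sample-lb}. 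The same bounded-tail calculation that worked there shows the total outlier budget is spent correctly.

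Finally, once $A_{y_0}$ exists with $M=2m+1$ matched moments and a $\chi^2$-divergence against $\cN(0,1)$ bounded by $e^{O(m)}$, Lemma~\ref{lem:lemma-3-4-DKS17} applied conditionally in $y$ and then integrated against $R(y)$ gives
\begin{equation*}
|\chi_R(E_u,E_v)|\;\le\;(d^{c-1/2})^{2m+2}\,e^{O(m)}\qquad(u\neq v\in S),
\end{equation*}
while $\chi^2(E_v,R)\le e^{O(m)}/\sigma^2$ (the $1/\sigma$ factor coming from the ratio of the two Gaussian densities in $y$ at $y=0$). The family $\{E_v\}_{v\in S}$ is then $(\gamma,\beta)$-correlated with $\gamma=d^{-(2m+1)(1/2-c)}e^{O(m)}$ and $\beta=e^{O(m)}/\sigma^2$, so Definition~\ref{def:sqdim} and Lemma~\ref{cor:cor3-12-FGR} with $\gamma'=\gamma$ deliver the claimed lower bound on either the number of queries, the tolerance, or the list size. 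The main obstacle I anticipate is step (2): simultaneously matching $M=\Theta(1/\sqrt\alpha)$ moments, respecting the pointwise outlier budget, and keeping the per-fiber $\chi^2$-norm under $e^{O(m)}$; this is where the LP-duality construction of $A_{y_0}$ together with the $U_\rho$-smoothing trick has to be set up carefully, and where the constant $c_1$ in the hypothesis $m\le c_1/\sqrt\alpha$ is determined.
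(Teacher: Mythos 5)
Your proposal is correct and follows essentially the same route as the paper: the same nearly-orthogonal family $\{E_v\}_{v\in S}$, the same $y$-conditional construction $E_v(\cdot\mid y)=P_{A_y,v}$ with a heavy-tailed reference marginal $R(y)\approx\cN(0,1/\alpha)$ to absorb the growing outlier fraction, the same Karlin--Shapley LP-duality plus Ornstein--Uhlenbeck smoothing to get $2m$ matched moments with $\chi^2$-norm $e^{O(m)}/(1-\rho^2)$, and the same final application of Lemmas~\ref{lem:form_of_conditional_distr}, \ref{lem:lemma-3-4-DKS17}, and \ref{cor:cor3-12-FGR}. The step you flag as the anticipated obstacle is exactly the content of Theorem~\ref{ThmHardDist_1} and Lemma~\ref{lem:existence_of_A_y} in the paper; aside from a harmless off-by-one in the moment count ($2m$ matched moments giving exponent $2m+1$, not $2m+2$), your accounting matches the paper's.
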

    
In the rest of this section, we will explain the hard-to-learn construction for our SQ lower bound, i.e., a set of distributions with large statistical dimension. The proof would then follow from Lemma~\ref{cor:cor3-12-FGR}.
We begin by describing additional notation that we will use.

\noindent \textbf{Notation}: As $\beta = \rho v$ for a fixed $\rho$, we will slightly abuse notation by using $D_v(x,y)$ 
to denote the joint distribution of the inliers and we use $E_v(x,y)$ to denote the $(1{-\alpha})$-corrupted 
version of $D_v(x,y)$. To avoid using multiple subscripts, we  use $D_v(x|y)$ to denote the conditional 
distribution of $X|y$ according to the distribution $D_v$ and similarly for the other distributions. 
In addition, we use $D_v(y)$ to denote the marginal distribution of $y$ under $D_v$ and similarly for other distributions.

Following the general construction of~\cite{DKS17-sq}, we will specify a \emph{reference} joint distribution 
$R(x,y)$ where $X$ and $y$ are independent, and $X \sim \cN(0,I_d)$. 
We will find a marginal distribution $R(y)$ and a way to add the outliers 
so that the following hold for each $E_v$ (where $m=\Theta(1/\sqrt{\alpha})$):
\begin{enumerate}[label=(\Roman*), leftmargin = *]
	\item $E_v$ is indeed a valid distribution of $(X,y)$ in our corruption model 
	(i.e., can be written as a mixture $\alpha D_v(x,y) + (1{-\alpha})N_v(x,y)$ for some noise distribution $N_v$). 
	Moreover, the marginal of $E_v$ on the labels, $E_v(y)$, coincides with $R(y)$. \label{prop:E1}
	\item For every $y \in \R$, the conditional distribution $E_v(x|y)$ is of the form $P_{A_y,v}$ of Definition~\ref{def:high-dim-distribution}, 
	with $A_y$ being a distribution that matches the first $2m$ moments 
	with $\cN(0,1)$.\footnote{We use even number of moments for simplicity. The analysis would slightly differ for odd number.}  \label{prop:E2}
	\item For $A_y$ defined above,
	$\E_{y \sim R(y)}[\chi^2(A_y,\cN(0,1))]$ is bounded.    \label{prop:E3}
\end{enumerate}

We first briefly explain why a construction satisfying the above properties suffices 
to prove our main theorem (postponing a formal proof for the end of this section). 
We start by noting the following decomposition. %
\begin{lemma} \label{lem:form_of_conditional_distr}
	For $u,v \in \cS^{d-1}$, if
	$E_u$ and $E_v$ have the same marginals $R(y)$ on the labels, they satisfy
	$
		\chi_{R(x,y)}(E_v(x,y), E_u(x,y)) = \E_{y \sim R(y)} \left[ \chi_{\cN(0,I_d)}\left(E_v(x|y), E_u(x|y)\right) \right].
	$
\end{lemma}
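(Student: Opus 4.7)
The plan is to verify this identity by a direct unfolding of the definitions and using the structure of the reference distribution. The key observation is that under $R$, the covariates and labels are independent with $X \sim \cN(0, I_d)$, so the joint density factorizes as $R(x, y) = \phi_d(x) \, R(y)$, where $\phi_d$ is the density of $\cN(0, I_d)$. In parallel, for each of $E_v$ and $E_u$, we can decompose the joint as $E_v(x, y) = E_v(y) \, E_v(x \mid y)$ and $E_u(x, y) = E_u(y) \, E_u(x \mid y)$. By hypothesis, $E_v(y) = E_u(y) = R(y)$.

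Next, I would plug these factorizations into Definition~\ref{def:pc}. Starting from
\[
\chi_{R(x,y)}(E_v(x,y), E_u(x,y)) + 1 = \int \!\!\int \frac{E_v(x,y)\, E_u(x,y)}{R(x,y)}\,\dx\,\dy,
\]
I would substitute $E_v(x,y) = R(y)\,E_v(x\mid y)$, $E_u(x,y) = R(y)\,E_u(x\mid y)$, and $R(x,y) = \phi_d(x)\,R(y)$. Two factors of $R(y)$ cancel with the one in the denominator, leaving a single factor of $R(y)$. Pulling the $y$-integral outside and recognizing the inner integral as the chi-squared correlation of the conditional distributions relative to $\cN(0, I_d)$, this becomes
\[
\int R(y) \left[ \chi_{\cN(0,I_d)}\!\left(E_v(x\mid y), E_u(x\mid y)\right) + 1 \right] \dy,
\]
which equals $\E_{y \sim R(y)}[\chi_{\cN(0,I_d)}(E_v(x\mid y), E_u(x\mid y))] + 1$. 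Subtracting $1$ from both sides yields the claim.

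There is essentially no obstacle here: the lemma is a purely algebraic identity that isolates the ``in $x$ conditional on $y$'' component of the correlation, exploiting that the marginals on $y$ of $E_v$ and $E_u$ agree with the marginal of the reference $R$. The only minor bookkeeping point is to make sure we are allowed to factor the densities (i.e., $R(y) > 0$ wherever $E_v(\cdot, y)$ or $E_u(\cdot, y)$ is supported), which is automatic since by the construction $E_v(y) = R(y)$ implies the support of $R(y)$ contains that of $E_v(y)$, and likewise for $E_u$. The utility of the lemma is that it reduces controlling the high-dimensional correlation (needed for Lemma~\ref{cor:cor3-12-FGR}) to controlling, for each label value $y$, the correlation of the hidden-direction conditionals $P_{A_y, v}$ and $P_{A_y, u}$, which can then be bounded via Lemma~\ref{lem:lemma-3-4-DKS17} together with properties~\ref{prop:E2} and~\ref{prop:E3}.
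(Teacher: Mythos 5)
Your proposal is correct and matches the paper's own proof essentially line for line: both factor $R(x,y)=\phi_d(x)R(y)$ and $E_v(x,y)=R(y)E_v(x\mid y)$, cancel one factor of $R(y)$, and recognize the inner $x$-integral as $1+\chi_{\cN(0,I_d)}(E_v(x\mid y),E_u(x\mid y))$. The extra remark about supports is fine but not needed beyond what the paper already assumes.
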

\begin{proof}
	Let $\phi$ denote the density of $\cN(0,1)$. Using the fact that $E_v$ and $E_u$ have the same marginal $R(y)$ we have that
	\begin{align*}
		\chi_{R(x,y)}(E_v(x,y), E_u(x,y)) + 1 &= \int_{\R}  \int_{\R^d} \frac{E_v(x,y) E_u(x,y)}{\phi(x) R(y)} \dx \dy \\
		&=  \int_{\R}  \int_{\R^d} \frac{E_v(x|y) E_u(x|y) }{\phi(x)} R(y) \dx \dy \\
		&=  \int_{\R}  \left(1 + \chi_{\cN(0,I_d)}(E_v(x|y) E_u(x|y)) \right)  R(y)  \dy \\
		&= 1 + \E_{y \sim R(y)} \left[ \chi_{\cN(0,I_d)}\left(E_v(x|y), E_u(x|y)\right) \right] \;.  \qedhere
	\end{align*} 
\end{proof}
Using the decomposition in Lemma~\ref{lem:form_of_conditional_distr} for $E_u$ and $E_v$ satisfying Property~\ref{prop:E2},  Lemma~\ref{lem:lemma-3-4-DKS17} implies that 
$|\chi_{R(x,y)}(E_v(x,y), E_u(x,y))|\leq |u^Tv|^{2m+1}\E_{y \sim R(y)}[\chi^2(A_y,N(0,1))]$.
Letting $\cD = \{E_v: v \in S\}$, where $S$ is the set of nearly uncorrelated unit vectors from 
Lemma~\ref{lem:orthogonal_vectors}, we get that $\cD$ is $(\gamma, b)$-correlated relative to $R$, 
for $b = \E_{y\sim R(y)}[\chi^2(A_y,\cN(0,1))]$ and $\gamma \leq d^{-\Omega(m)}$b. 
As $|S| = 2^{\Omega(d^c)}$, $b$ is bounded by Property~\ref{prop:E3}, and the list size is much smaller than $|S|$, 
we can show that the statistical dimension of the list-decodable linear regression is large.

Thus, in the rest of the section we focus on showing that such a construction exists.
We first note that according to our linear model of Definition~\ref{def:glr}, 
the conditional distribution of $X$ given $y$ for the inliers is  Gaussian 
with unit variance in all but one direction.%
\begin{fact} \label{fact:inliers_conditional_dist}
Fix $\rho>0$, $v \in \cS^{d-1}$, and consider the regression model of Definition~\ref{def:glr} with $\beta = \rho v$. 
Then the conditional distribution $X|y$ of the inliers is $\cN(y\rho v, I_d - \rho^2 v v^T)$, 
i.e., independent standard Gaussian in all directions perpendicular to $v$ 
and $\cN(\rho y, 1-\rho^2)$ in the direction of $v$.
\end{fact}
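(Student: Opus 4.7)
The result is a direct application of the conditioning formula for jointly Gaussian random variables, exploiting the rotational symmetry of the setup around the axis spanned by $v$. My plan is to decompose $X$ into its projection onto $v$ and its projection onto the orthogonal complement $v^\perp$, and argue that these two components can be analyzed separately because the label $y$ only sees the $v$-component of $X$.

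First, I would set $Z := v^T X$ and $X_\perp := X - Z v$, so that $Z \sim \cN(0,1)$ and $X_\perp$, viewed as an element of the $(d{-}1)$-dimensional subspace $v^\perp$, is a standard Gaussian there; moreover $Z$ and $X_\perp$ are independent because $X$ is a standard Gaussian on $\R^d$. Since the label can be written as $y = \rho Z + \eta$ with $\eta$ independent of $X$, the random variable $y$ depends only on $(Z, \eta)$, and $X_\perp$ is independent of both $Z$ and $\eta$. This immediately gives that $X_\perp$ is independent of $y$, so $X_\perp \mid y$ retains its unconditional law; as a distribution on $\R^d$ this is the degenerate Gaussian $\cN(0, I_d - vv^T)$.

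Next, I would handle the $v$-direction by computing the conditional law of $Z$ given $y$. Using $\sigma^2 = 1 - \rho^2$ (the standing assumption stated just before the Fact), the pair $(Z, y)$ is jointly Gaussian with zero mean, $\mathrm{Var}(Z) = 1$, $\mathrm{Var}(y) = \rho^2 + \sigma^2 = 1$, and $\mathrm{Cov}(Z, y) = \rho$. The classical one-dimensional Gaussian conditioning formula then yields $Z \mid y \sim \cN(\rho y,\, 1-\rho^2)$. Combining the two components via $X = Z v + X_\perp$ and using independence of $X_\perp$ from $y$, the conditional distribution $X \mid y$ is Gaussian with mean $\rho y \, v$ and covariance $(1-\rho^2) vv^T + (I_d - vv^T) = I_d - \rho^2 vv^T$, exactly as claimed.

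Because the statement is essentially a textbook manipulation of jointly Gaussian vectors, there is no real technical obstacle; the entire proof is bookkeeping around the $v$/$v^\perp$ decomposition. The only point worth emphasizing is that the tidy form of the conditional mean along $v$ (namely $\rho y$ rather than the more general $\rho y / (\rho^2 + \sigma^2)$) relies on the specific parameter choice $\sigma^2 = 1 - \rho^2$ fixed at the beginning of Section~\ref{sec:proof}, which normalizes $\mathrm{Var}(y) = 1$.
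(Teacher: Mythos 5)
Your proof is correct. The paper proves Fact~\ref{fact:inliers_conditional_dist} by directly invoking the block conditional formula for a jointly Gaussian vector, applied to the pair $(X,y)$ with $\Sigma_{11}=I_d$, $\Sigma_{12}=\beta$, $\Sigma_{22}=\sigma^2+\snorm{2}{\beta}^2$; you instead decompose $X$ into $Z=v^TX$ and $X_\perp = X - Zv$, observe that $X_\perp$ is independent of $(Z,\eta)$ and hence of $y$, and apply only the scalar Gaussian conditioning formula to $(Z,y)$. The two routes are computationally equivalent, but yours is slightly more elementary (no multivariate block formula needed) and makes explicit the structural point that conditioning on $y$ perturbs the law of $X$ only along the $v$-direction --- which is precisely the feature the subsequent construction exploits when it replaces the conditional law in the $v$-direction by $A_y$ while keeping the orthogonal directions standard Gaussian. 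You are also right to flag that the clean form $\cN(\rho y, 1-\rho^2)$ of the conditional along $v$ uses the normalization $\sigma^2 = 1-\rho^2$ fixed at the start of Section~\ref{sec:proof}; without it the conditional mean would be $\rho y/(\rho^2+\sigma^2)$.
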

\begin{proof}
    This is due to the following fact for the conditional distribution of the Gaussian distribution.
    \begin{fact}
        If $\left[\begin{matrix} y_1 \\ y_2 \end{matrix} \right] \sim \cN\left(\left[\begin{matrix} \mu_1 \\ \mu_2 \end{matrix} \right] , \left[ \begin{matrix} \Sigma_{11} &\Sigma_{12} \\ \Sigma_{21} &\Sigma_{22} \end{matrix} \right] \right)$, then $y_1 | y_2 \sim \cN(\bar{\mu}, \bar{\Sigma})$, with $\bar{\mu} = \mu_1 + \Sigma_{12}\Sigma_{22}^{-1}(y_2 - \mu_2)$ and $\Sigma_{11} - \Sigma_{12} \Sigma_{22}^{-1} \Sigma_{21}$.
    \end{fact}
    We apply this fact for the pair $(X,y)$ by setting $y_1=X, y_2=y, \mu_1=\mu_2=0$ and $\Sigma_{11}=I_d, \Sigma_{12}=\beta, \Sigma_{21}=\beta^T, \Sigma_{22}=\sigma^2+\snorm{2}{\beta}^2$.
\end{proof}

Since Fact~\ref{fact:inliers_conditional_dist} states that $D_v(x|y)$ 
is already of the desired form (standard normal in all directions perpendicular to $v$ and $\cN(y\rho,1-\rho^2)$ in the direction of $v$), 
the problem becomes one-dimensional. More specifically, for every $y \in \R$, we need to find a one-dimensional distribution 
$Q_y$ and appropriate values $\alpha_y \in [0,1]$ such that the mixture 
$A_y= \alpha_y \cN(y\rho,1-\rho^2) + (1{-\alpha_y} ) Q_y$
matches the first $2m$ moments with $\cN(0,1)$. Then, multiplying by $\phi_{\bot v}$ 
(which denotes the contribution of the space orthogonal to $v$ to the density of  standard Gaussian, 
as defined in Definition~\ref{def:high-dim-distribution})
yields the $d$-dimensional mixture distribution 
$\alpha_y D_v(x|y) + (1{-\alpha_y} ) Q_y(v^Tx) \phi_{\bot v}(x)$. 
We show that an appropriate selection of $\alpha_y$ 
can ensure that this is a valid distribution for our contamination model.  
\begin{lemma}\label{lem:check_properties}
Let $R$ be a distribution on pairs $(x,y)\in \R^{d+1}$ such that $\alpha_y := \alpha D_v(y)/R(y) \in [0,1]$ for all $y\in \R$. 
Suppose that for every $y \in \R$ there exists a univariate distribution $Q_y$ such that 
$A_y:=\alpha_y \cN(y\rho, 1-\rho^2)+(1{-\alpha_y})Q_y$ matches the first $2m$ moments with $\cN(0,1)$. 
If the distribution of the outliers is $N_v(x,y) =( (1{-\alpha_y})/(1{-\alpha)}) Q_y(v^Tx) \phi_{\bot v}(x) R(y)$, 
Properties~\ref{prop:E1} and~\ref{prop:E2} hold.
\end{lemma}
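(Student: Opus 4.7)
The plan is to prove Lemma~\ref{lem:check_properties} by direct verification, since the statement is essentially a consistency check that the ingredients chosen in the construction ($R$, $\alpha_y$, $Q_y$, $N_v$) fit together into a valid distribution in the contamination model with the prescribed conditional structure. Concretely, I need to check three things in sequence: (a) $N_v$ is a bona fide probability density on $\R^{d+1}$; (b) the mixture $E_v := \alpha D_v + (1-\alpha) N_v$ has marginal $R(y)$ on the label coordinate (the first half of Property~\ref{prop:E1}); and (c) the conditional $E_v(x \mid y)$ equals $P_{A_y,v}$ in the sense of Definition~\ref{def:high-dim-distribution} (Property~\ref{prop:E2}). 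The remainder of Property~\ref{prop:E1} is then just the formal identification of $E_v$ as an $\alpha$-mixture.

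For step (a), I would first note that $N_v(x,y) \geq 0$ follows from $\alpha_y \in [0,1]$ (given hypothesis) together with the nonnegativity of $Q_y$, $\phi_{\bot v}$, and $R$. To check normalization, I integrate out $x$ using $\int Q_y(v^Tx)\phi_{\bot v}(x)\,\dx = 1$, giving a one-dimensional integral $\int \frac{1-\alpha_y}{1-\alpha} R(y)\,\dy$. Substituting the definition $\alpha_y = \alpha D_v(y)/R(y)$ collapses this to $\frac{1}{1-\alpha}\int (R(y) - \alpha D_v(y))\,\dy = 1$, as both $R(y)$ and $D_v(y)$ integrate to $1$. For step (b), the marginal computation is essentially the same algebra: $E_v(y) = \alpha D_v(y) + (1-\alpha)\cdot\tfrac{1-\alpha_y}{1-\alpha} R(y)$, which after substituting $\alpha_y R(y) = \alpha D_v(y)$ simplifies to exactly $R(y)$.

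For step (c), I would use Fact~\ref{fact:inliers_conditional_dist} to write $D_v(x,y) = D_v(y)\cdot \cN(y\rho, 1-\rho^2)(v^T x)\,\phi_{\bot v}(x)$. Combining this with the definition of $N_v$ and factoring, I expect to obtain
\begin{align*}
E_v(x,y) &= \alpha D_v(y)\, \cN(y\rho,1-\rho^2)(v^Tx)\phi_{\bot v}(x) + (1-\alpha_y) R(y)\, Q_y(v^Tx)\phi_{\bot v}(x)\\
&= R(y)\,\phi_{\bot v}(x)\bigl[\alpha_y\, \cN(y\rho,1-\rho^2)(v^Tx) + (1-\alpha_y)\, Q_y(v^Tx)\bigr]\\
&= R(y)\,\phi_{\bot v}(x)\, A_y(v^Tx),
\end{align*}
again using $\alpha_y R(y) = \alpha D_v(y)$. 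Dividing by $E_v(y) = R(y)$ from step (b) yields $E_v(x\mid y) = A_y(v^Tx)\,\phi_{\bot v}(x) = P_{A_y,v}(x)$, which is exactly Property~\ref{prop:E2}.

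I don't anticipate any real obstacle: the proof is entirely bookkeeping, and the only ``creative'' observation is the key cancellation $\alpha_y R(y) = \alpha D_v(y)$ that both step (b) and step (c) rely on, which is baked into the definition of $\alpha_y$. The moment-matching hypothesis on $A_y$ plays no role in this particular lemma (it is used elsewhere to control $\chi^2$-correlation via Lemma~\ref{lem:lemma-3-4-DKS17}); here, all that matters is that $A_y$ is a valid mixture density, which is automatic from $\alpha_y \in [0,1]$ and the assumption that $Q_y$ is a distribution. The delicate part of the overall argument — namely, showing that such $Q_y$ and such $R(y)$ actually exist (Property~\ref{prop:E3} with bounded $\chi^2$) — is deferred to subsequent lemmas via the LP-duality moment-matching discussed in the techniques overview.
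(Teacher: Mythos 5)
Your proposal is correct and follows essentially the same route as the paper's proof: verify $N_v$ is a nonnegative, normalized density via the cancellation $\alpha_y R(y) = \alpha D_v(y)$, then factor $E_v(x,y) = R(y)\,\phi_{\bot v}(x)\,A_y(v^Tx)$ to read off the marginal and the conditional. One small caveat: the moment-matching hypothesis does enter this lemma, since Property~\ref{prop:E2} explicitly requires $A_y$ to match the first $2m$ moments of $\cN(0,1)$ — it just holds trivially by assumption, as the paper also notes.
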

\begin{proof}
	First note that the noise distribution $N_v$ is indeed a valid distribution since it is non-negative everywhere because of the assumption $\alpha_y \in [0, 1]$ and it integrates to one:
	\begin{align*}
		\frac{1}{1-\alpha}\int_\R \int_{\R^d} &(1-\alpha_y)Q_y(v^Tx) \phi_{\bot v}(x) R(y) \dx \dy \\
		&= \frac{1}{1-\alpha}\left(\int_\R \int_{\R^d} R(y)Q_y(v^Tx) \phi_{\bot v}(x)  \dx \dy - \alpha \int_\R \int_{\R^d} D_v(y)Q_y(v^Tx) \phi_{\bot v}(x)  \dx \dy \right) = 1 \;.
	\end{align*}
	The joint distribution $E_v(x,y)$ can be written as 
	\begin{align*}
		E_v(x,y) &= \alpha D_v(x,y) + (1-\alpha) N_v(x,y)\\
		&= \alpha D_v(x,y) + (1-\alpha)\frac{1-\alpha_y}{1-\alpha} Q_y(v^Tx) \phi_{\bot v}(x) R(y) \\
		&= \left( \alpha_y D_v(x|y)  + (1-\alpha_y ) Q_y(v^Tx) \phi_{\bot v}(x)  \right) R(y) \;.
	\end{align*}
	This means that the marginal of $y$ under $E_v$ is $R(y)$, which establishes Property~\ref{prop:E1}, and the conditional distribution of $X|y$ under $E_v$ is $E_y(x|y)= \alpha_y D_v(x|y) + (1-\alpha_y ) Q_y(v^Tx) \phi_{\bot v}(x) $. 
	
	The moment matching part of Property~\ref{prop:E2} holds by assumption. For the other part of Property~\ref{prop:E2}, we note that $E_v(x|y)$ is standard Gaussian in directions perpendicular to $v$ because of Fact~\ref{fact:inliers_conditional_dist} and the form of the term $Q_y(v^Tx) \phi_{\bot v}(x)$ that corresponds to the outliers.
\end{proof}

We will choose the reference distribution $R(x,y)$ to have $X \sim \cN(0,I_d)$ 
and $y \sim \cN(0,1/\alpha)$ independently, which makes the corresponding value of $\alpha_y$ 
to be $\alpha_y = \alpha D_v(y)/R(y) = \sqrt{\alpha}\exp(-y^2(1 - \alpha)/2)$. 
This satisfies the condition in Lemma~\ref{lem:check_properties} that $\alpha_y \in [0,1]$.
Our choice of $R(y)$ being $\cN(0,1/\alpha)$ is informed by Properties~\ref{prop:E2} and~\ref{prop:E3}, 
and will be used later on in the proofs of Theorem~\ref{ThmHardDist_1} 
and  Lemma~\ref{lem:existence_of_A_y} (also see the last paragraph of Section~\ref{ssec:techniques} for more intuition). 
Going back to our goal, i.e., making $A_y = \alpha_y \cN(y\rho,1{-\rho^2}) + (1{-\alpha_y} ) Q_y$ 
match moments with $\cN(0,1)$, we will argue that it suffices to only look for $Q_y$ 
of the specific form $U_\rho F_y$, where $U_\rho$ is the Ornstein-Uhlenbeck operator.
This suffices because $U_\rho\delta_y = \cN(y\rho, 1-\rho^2)$ and the operator $U_\rho$ 
preserves the moments of a distribution if they match with $\cN(0,1)$ 
(see Lemma~\ref{lem:existence_of_A_y}~(i) below). 
Letting $A_y=U_\rho(\alpha_y\delta_y + (1-\alpha_y)F_y)$, 
the new goal is to show that the argument of $U_\rho$ matches moments with $\cN(0,1)$.
We show the following structural result:

\begin{theorem} \label{ThmHardDist_1}
Let $y\in \R$, $B \in \R$, $\alpha \in (0,1/2)$, and define $\alpha_y := \sqrt{\alpha}\exp(-y^2(1{ - \alpha})/2)$.
For any $m \in \Z_+$ such that $m \leq C_1/\sqrt{\alpha}$ and $B \geq C_2 \sqrt{m}$, 
with $C_1>0$ being a sufficiently small constant and $C_2$ being a sufficiently large constant, 
there exists a distribution $F_y$ that satisfies the following:
\begin{enumerate}
\item The mixture distribution $\alpha_y \delta_y + (1 - \alpha_y) F_y$ matches the first $2m$ moments with $\cN(0,1)$.
\item $F_y$ is a discrete distribution supported on at most $2m+1$ points, all of which lie in $[-B,B]$.
\end{enumerate} 
\end{theorem}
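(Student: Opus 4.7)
The plan is to convert the statement into a classical one-dimensional truncated moment problem on $[-B,B]$ and resolve it via the LP-duality approach alluded to in Theorem~\ref{ThmDualityBdd}. Writing the moment-matching condition for $\alpha_y \delta_y + (1-\alpha_y) F_y$ against $\cN(0,1)$ and solving for the required moments of $F_y$ gives $c_k := (\gamma_k - \alpha_y y^k)/(1-\alpha_y)$, where $\gamma_k := \E_{Z\sim \cN(0,1)}[Z^k]$. By LP duality for the bounded moment problem, a probability measure $F_y$ on $[-B,B]$ realizing the moments $c_0,\dots,c_{2m}$ exists if and only if every polynomial $p(x)=\sum_{k=0}^{2m}a_k x^k$ that is nonnegative on $[-B,B]$ satisfies $\sum_k a_k c_k \geq 0$, which after clearing denominators is equivalent to the single pointwise inequality $\E_{Z\sim \cN(0,1)}[p(Z)] \geq \alpha_y\, p(y)$. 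Moreover, once feasibility of the moment problem is established, Tchakaloff's theorem applied to the one-dimensional moment curve $\{(1,x,\dots,x^{2m}): x\in [-B,B]\} \subset \R^{2m+1}$ (equivalently, a Carath\'eodory-type argument) produces a representing measure supported on at most $2m+1$ points, giving the claimed form of $F_y$.

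To verify the dual inequality, I would employ the Markov-Luk\'acs representation: every $p$ of degree at most $2m$ nonnegative on $[-B,B]$ can be written as $p(x)=q(x)^2+(B^2-x^2)r(x)^2$ with $\deg q\leq m$ and $\deg r\leq m-1$. Expanding $r$ in the orthonormal probabilist's Hermite basis and using the three-term recursion $z\, h_k(z)=\sqrt{k+1}\,h_{k+1}(z)+\sqrt{k}\,h_{k-1}(z)$ gives $\E[Z^2 r(Z)^2] \leq C m\|r\|^2_{\cN(0,1)}$ for an absolute constant $C$. Hence, for $B\geq C_2\sqrt{m}$ with $C_2$ sufficiently large,
\[
    \E[p(Z)] \;=\; \|q\|^2 + B^2\|r\|^2 - \E[Z^2 r(Z)^2] \;\geq\; \|q\|^2 + (1-\delta)B^2\|r\|^2
\]
for any fixed small $\delta>0$. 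On the other hand, Cauchy-Schwarz against the Christoffel--Darboux kernel $K_m(y,y):=\sum_{k=0}^m h_k(y)^2$ yields $q(y)^2\leq K_m(y,y)\|q\|^2$ and $r(y)^2\leq K_m(y,y)\|r\|^2$. Splitting into the cases $|y|\leq B$ (where $B^2-y^2\geq 0$) and $|y|>B$ (where $(B^2-y^2)r(y)^2\leq 0$ so $p(y)\leq q(y)^2$), the whole inequality reduces to the uniform-in-$y$ bound $\alpha_y\, K_m(y,y)\leq 1-2\delta$.

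The uniform bound is where Mehler's kernel identity enters:
\[
    \sum_{k=0}^{\infty} h_k(y)^2\, t^k \;=\; \frac{1}{\sqrt{1-t^2}}\,\exp\!\left(\frac{y^2 t}{1+t}\right), \qquad |t|<1,
\]
so $K_m(y,y)\leq t^{-m}(1-t^2)^{-1/2}\,\exp(y^2 t/(1+t))$. Choosing $t=(1-\alpha)/(1+\alpha)$ makes $t/(1+t)=(1-\alpha)/2$, so the exponential factor $e^{-y^2(1-\alpha)/2}$ in $\alpha_y=\sqrt{\alpha}e^{-y^2(1-\alpha)/2}$ cancels exactly against the Mehler exponential, producing the $y$-independent bound
\[
    \alpha_y\, K_m(y,y) \;\leq\; \frac{1+\alpha}{2}\,\left(\frac{1+\alpha}{1-\alpha}\right)^{\!m} \;\leq\; \frac{1+\alpha}{2}\,e^{O(m\alpha)}.
\]
The hypothesis $m\leq C_1/\sqrt{\alpha}$ forces $m\alpha \leq C_1\sqrt{\alpha}\to 0$ as $\alpha\to 0$, so taking $C_1$ small enough keeps this bound strictly below $1-2\delta$, which closes both the Markov-Luk\'acs reduction and the LP feasibility.

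The main obstacle will be the constant-tracking in this last step: the cancellation between Mehler's exponential and the Gaussian tail of $\alpha_y$ is exact, so the prefactor $(1+\alpha)/2$ is inherent, and only the hypothesis $m\alpha = o(1)$ (enabled by $m\leq C_1/\sqrt{\alpha}$) provides the slack needed to push the bound below $1$. If pushing to $1-2\delta$ is too tight, I would instead take $t$ slightly below $(1-\alpha)/(1+\alpha)$, trading a somewhat larger $t^{-m}$ factor for an additional $e^{-\Omega(y^2\alpha)}$ Gaussian decay that restores the slack for $|y|$ bounded away from zero; since $K_m(0,0)\sim\sqrt{m}$ and $\alpha_0=\sqrt{\alpha}$ give $\alpha_y K_m(y,y)\sim\sqrt{m\alpha}$ at the origin already, this refinement is enough to close the argument in any regime where $m=O(1/\sqrt{\alpha})$.
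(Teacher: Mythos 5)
Your proposal is correct, and while it follows the same LP-duality skeleton as the paper (reduce to the dual condition $\E_{Z\sim\cN(0,1)}[p(Z)]\geq \alpha_y p(y)$ for $p$ nonnegative on $[-B,B]$, decompose via Markov--Luk\'acs into $q^2+(B^2-x^2)r^2$, and extract a $(2m+1)$-atom representing measure by Carath\'eodory/Tchakaloff), both of your key technical estimates are proved by genuinely different means. For the condition $B\geq C_2\sqrt{m}$, the paper bounds $\E[X^2q^2(X)]=O(m)\E[q^2(X)]$ via H\"older plus Gaussian hypercontractivity (Lemma~\ref{LemGaussPositive}), whereas you use the three-term Hermite recursion; these are interchangeable. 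The more substantive divergence is in the analogue of Lemma~\ref{LemSupRatio}: the paper identifies $\sup_r r^2(y)/\E[r^2(X)]$ with the Christoffel--Darboux kernel and then controls $\alpha_y\sum_i h_i^2(y)$ using Krasikov's pointwise bound $\max_x h_k^2(x)e^{-x^2/2}=O(k^{-1/6})$ together with a case split at $|y|=1/\sqrt{\alpha}$ (with crude explicit bounds on $h_i$ for large $|y|$). You instead sum the Mehler generating function at $t=(1-\alpha)/(1+\alpha)$, so that the factor $e^{y^2t/(1+t)}=e^{y^2(1-\alpha)/2}$ cancels exactly against the Gaussian decay of $\alpha_y$, giving the uniform-in-$y$ bound $\alpha_y K_m(y,y)\leq \tfrac{1+\alpha}{2}\left(\tfrac{1+\alpha}{1-\alpha}\right)^{m}$. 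This is cleaner (no case analysis, no Hermite asymptotics) and in fact quantitatively stronger: your bound only needs $m\alpha=O(1)$, so the same argument establishes the moment-matching for $m$ up to $c/\alpha$ rather than $c/\sqrt{\alpha}$ --- a range the paper's Case-2 analysis (which degrades to $O(\sqrt{\alpha}\,m)$) cannot reach. Your fallback of perturbing $t$ slightly below $(1-\alpha)/(1+\alpha)$ is not needed, since $\tfrac{1+\alpha}{2}\leq 3/4$ already supplies the constant-factor slack for $\alpha<1/2$.
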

The proof of Theorem~\ref{ThmHardDist_1} is the bulk of the technical work of this paper 
and is deferred to Section~\ref{sec:the_hard_distribution}. 
As mentioned before, applying $U_\rho$ preserves the required moment-matching property. 
More crucially, it allows us to bound the $\chi^2$-divergence: 
the following result bounds $\chi^2(A_y,\cN(0,1))$ using contraction properties of $U_\rho$, 
tail bounds of Hermite polynomials, and the discreteness of $F_y$.  
	
\begin{lemma}\label{lem:existence_of_A_y}
In the setting of Theorem~\ref{ThmHardDist_1}, let $\rho>0$ and $Q_y = U_\rho F_y$. 
Then the following holds for the mixture $A_y = \alpha_y \cN(y\rho,1-\rho^2) + (1{-\alpha_y} ) Q_y$: 
(i) $A_y$ matches the first $2m$ moments with $\cN(0,1)$, 
and (ii) $\chi^2(A_y, \cN(0,1)) \leq  \alpha O(e^{y^2(\alpha-1/2)})/(1 - \rho^2) + O(e^{B^2/2})/(1 - \rho^2)$.
\end{lemma}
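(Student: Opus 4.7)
The plan is to exploit the representation $A_y = U_\rho G_y$ with $G_y := \alpha_y \delta_y + (1-\alpha_y) F_y$, together with the Hermite-series descriptions of both moment matching and the $\chi^2$-divergence with respect to $\cN(0,1)$.

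For part (i), first I would note that $U_\rho \delta_y = \cN(y\rho, 1-\rho^2)$, so by linearity of $U_\rho$ indeed $A_y = U_\rho G_y$. By Theorem~\ref{ThmHardDist_1}, $G_y$ matches the first $2m$ moments with $\cN(0,1)$, equivalently $\E_{X\sim G_y}[h_k(X)]=0$ for $k=1,\ldots,2m$. Applying Fact~\ref{fact:eigenfunction} coefficient by coefficient then yields $\E_{A_y}[h_k] = \rho^k \E_{G_y}[h_k] = 0$ for the same range of $k$, which is the required moment matching (noting that $\cN(0,1)$ is the fixed point of $U_\rho$).

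For part (ii), I would expand the $\chi^2$-divergence in the normalized probabilist's Hermite basis as
\begin{align*}
\chi^2(A_y, \cN(0,1)) \;=\; \sum_{k \geq 1} \E_{A_y}[h_k]^2 \;=\; \sum_{k \geq 2m+1} \E_{A_y}[h_k]^2,
\end{align*}
where the second equality uses part (i). Then splitting $\E_{A_y}[h_k] = \rho^k(\alpha_y h_k(y) + (1-\alpha_y)\E_{F_y}[h_k])$ via Fact~\ref{fact:eigenfunction} and applying $(a+b)^2 \leq 2a^2+2b^2$ produces two tail sums. For the ``inlier'' tail $2\alpha_y^2 \sum_{k} \rho^{2k} h_k(y)^2$, I would invoke a Mehler-type bound of the form $\sum_{k \geq 0} \rho^{2k} h_k(x)^2 \lesssim (1-\rho^2)^{-1} \exp(\rho^2 x^2/(1+\rho^2))$ (which I expect to be the content of Lemma~\ref{PropTailBoundHermite2}), substitute $\alpha_y^2 = \alpha\, e^{-(1-\alpha)y^2}$, and use $\rho^2/(1+\rho^2) \leq 1/2$ to combine exponents into $y^2(\alpha - 1/(1+\rho^2)) \leq y^2(\alpha - 1/2)$, producing the desired $\alpha\, O(e^{y^2(\alpha - 1/2)})/(1-\rho^2)$ contribution. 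For the ``outlier'' tail, Jensen's inequality gives $\E_{F_y}[h_k]^2 \leq \E_{F_y}[h_k^2]$; swapping the sum with the expectation and using the support condition $F_y \subseteq [-B,B]$ granted by Theorem~\ref{ThmHardDist_1} then yields $O(e^{B^2/2})/(1-\rho^2)$ via the same Mehler-type bound.

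The computations are mostly routine; what makes the bound work is the combination of two cancellations that I expect to be the real content of the proof. First, the bounded support of $F_y$ is indispensable: without it the outlier tail could grow with $|y|$ or diverge, ruining the decomposition in Property~\ref{prop:E3}. Second, the exponent balance in the inlier tail is tight — the quadratic $\rho^2 y^2/(1+\rho^2)$ from the Hermite generating function just fails to cancel the $(1-\alpha)y^2$ coming from $\alpha_y^2$, leaving the critical $y^2(\alpha - 1/2)$ decay. This decay is precisely what later permits taking expectation over $y \sim R(y) = \cN(0,1/\alpha)$ and obtaining a bounded $\E_{y \sim R(y)}[\chi^2(A_y,\cN(0,1))]$, explaining in hindsight the specific choice of reference $R(y)$.
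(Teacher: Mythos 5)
Your proof of part (i) is exactly the paper's argument. For part (ii) you take a mildly different route that is worth comparing. The paper's Lemma~\ref{PropTailBoundHermite2} is not the Mehler-type bound you guessed; it is Krasikov's pointwise estimate $\max_x h_k^2(x)e^{-x^2/2}=O(k^{-1/6})$, which the paper applies term by term ($h_k^2(y)\le O(e^{y^2/2})$ and $\E_{F_y}[h_k]^2\le O(e^{B^2/2})$) before summing the geometric series $\sum_{k\ge 2m+1}\rho^{2k}\le \rho^{2(2m+1)}/(1-\rho^2)$. Your alternative — summing the whole series at once via the Mehler identity $\sum_{k\ge 0}\rho^{2k}h_k(x)^2=(1-\rho^4)^{-1/2}\exp\bigl(\rho^2x^2/(1+\rho^2)\bigr)$, plus Jensen and the bounded support of $F_y$ for the outlier term — is a correct and if anything cleaner derivation of the same bound; it avoids Krasikov entirely here (though the paper still needs Krasikov elsewhere, in Lemma~\ref{LemSupRatio}). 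Your closing remarks about why the bounded support and the exponent balance $y^2(\alpha-1/2)$ are the crux match the paper's design exactly.

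One genuine gap: you invoke the identity $\chi^2(A_y,\cN(0,1))=\sum_{k\ge 1}\E_{A_y}[h_k]^2$ without first establishing that $\chi^2(A_y,\cN(0,1))<\infty$, i.e., that the density ratio $A_y/\phi$ lies in $L^2(\phi)$. Parseval requires this a priori; finiteness of the coefficient sum alone does not by itself give the identity for an arbitrary distribution. The paper handles this explicitly (Fact~\ref{fact:chi_square_formula} assumes finite $\chi^2$, and Claim~\ref{claim:finite-chi-square} verifies it by noting that $A_y=U_\rho(\alpha_y\delta_y+(1-\alpha_y)F_y)$ is a finite mixture of Gaussians with variances strictly below $2$, since $F_y$ is discrete). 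You should add this one-line verification before expanding in the Hermite basis; with it, your argument is complete.
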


\begin{proof}
	The first property follows by noting that $A_y=\alpha_y \cN(y\rho,1-\rho^2) + (1{-\alpha_y}) Q_y = U_\rho (\alpha_y \delta_y + (1{-\alpha_y}) F_y)$ and using the eigenvalue property of Hermite polynomials (Fact~\ref{fact:eigenfunction}). This implies that for all $i\leq 2m$ we have that 
	\begin{align*}
		\E_{X \sim U_\rho (\alpha_y \delta_y + (1 - \alpha_y) F_y)}[h_i(X)] = \rho^i \E_{X \sim \alpha_y \delta_y + (1 - \alpha_y) F_y}[h_i(X)] = \rho^i \E_{X \sim \cN(0,1)}[h_i(X)] = \E_{X \sim \cN(0,1)}[h_i(X)] ,
	\end{align*}
	where the last equation uses that $\E_{X \sim \cN(0,1)}[h_i(X)]=0$ for $i>0$ and $\E_{X \sim \cN(0,1)}[h_0(X)]=1$. Since $\{h_i: i \in [2m]\}$ form a basis of $\cP(2m)$, the space of all polynomials of degree at most $2m$, it follows that $A_y$ continues to matches $2m$ moments with $\cN(0,1)$.

	The $\chi^2$ bound is due to the bounded support in $[-B,B]$ and the Gaussian smoothing operation and can be shown as follows. First, we need the following fact whose proof is included in Appendix~\ref{sec:appendix-extras} for completeness.
	\begin{fact} \label{fact:chi_square_formula}
		For any one-dimensional distribution $P$ that matches the first $m$ moments with $\cN(0,1)$ and has $\chi^2(P,\cN(0,1)) < \infty$ the following identity is true: $
		    \chi^2(P,\cN(0,1)) = \sum_{i=m+1}^\infty \left(\E_{X \sim P}[h_i(X)]\right)^2
		$.
	\end{fact}

	Let $M_y$ denote the distribution $\alpha_y\delta_y + (1-\alpha_y)F_y$, i.e., the mixture before applying the Ornstein-Uhlenbeck operator.  In order to apply Fact~\ref{fact:chi_square_formula} to $M_y$, we need to argue that its $\chi^2$-divergence is finite. As $F_y$ is a discrete distribution, the $U_\rho$ operator will transform it to a finite sum of Gaussians with variances strictly less than $2$.  We defer the proof of the following claim to Appendix~\ref{sec:appendix-extras}.
	
	\begin{claim} \label{claim:finite-chi-square}
	If $P = \sum_{i=1}^k \lambda_i N(\mu_i,\sigma_i^2)$ with $\mu_i \in \R$, $\sigma_i < \sqrt{2}$ and $\lambda_i \geq 0$ such that $\sum_{i=1}^k\lambda_i = 1$, we have that $\chi^2(P,\cN(0,1)) < \infty$.
	\end{claim}
	
	Using the formula of Fact~\ref{fact:chi_square_formula} and  Fact~\ref{fact:eigenfunction} for the individual terms, we get that
	\begin{align}
		\chi^2(A_y,\cN(0,1)) &= \sum_{i=2m+1}^\infty \E_{X \sim U_\rho M_y}[h_i(X)]^2  \notag
		= \sum_{i=2m+1}^\infty \rho^{2i}\E_{X \sim M_y}[h_i( X)]^2 \notag \\
		&= \sum_{i=2m+1}^\infty \rho^{2i}\left( \alpha_y h_i(y) + (1-\alpha_y)\E_{x \sim F_y}[h_i(X)] \right)^2 \notag \\
		&\leq 2 \alpha^2_y\sum_{i=2m+1}^\infty \rho^{2i} h^2_i(y) + 2 (1-\alpha_y)^2 \sum_{i=2m+1}^\infty \rho^{2i} \E_{x \sim F_y}[h_i(X)]^2  \;, \label{eq:two-terms}
	\end{align}
	where the inequality uses that $(a+b)^2 \leq 2(a^2 + b^2)$ for all $a,b \in \R$. To bound this expression from above  we will use the following tail bound for Hermite polynomials.
	\begin{lemma}[\cite{Krasikov2004}]
		\label{PropTailBoundHermite2}
		Let $h_i$ be the $i$-th normalized probabilist's Hermite polynomial. Then $\max_{x \in \R} h_k^2(x) e^{-x^2/2} = O(k^{-1/6})$.
		\end{lemma}
	
	More details on how Lemma~\ref{PropTailBoundHermite2} follows from the result of \cite{Krasikov2004} can be found in Section~\ref{sec:auxiliary_details}. For the first term of Equation~\eqref{eq:two-terms}, we have that 
	\begin{align*}
		\sum_{i=2m+1}^\infty \rho^{2i} \alpha^2_y  h^2_i(y) &\leq  \sum_{i=2m+1}^\infty \rho^{2i} \alpha e^{-y^2 + \alpha y^2} O(e^{y^2/2})
		\\
		&
		\leq \alpha O(e^{y^2(\alpha-1/2)})\sum_{i=2m+1}^\infty \rho^{2i} 
		\\
		&
		\leq \alpha O(e^{y^2(\alpha-1/2)}) \rho^{2(2m+1)}/(1-\rho^2) \;,
	\end{align*}
	where the first inequality uses Lemma~\ref{PropTailBoundHermite2} and the definition of $\alpha_y$. 
	For the second term, we use the bounded support of $F_y$ in $[-B,B]$ along with the bound of Lemma~\ref{PropTailBoundHermite2} to obtain
	\begin{align*}
		 \sum_{i=2m+1}^\infty \rho^{2i} \E_{x \sim F_y}[h_i(X)]^2 \leq \sum_{i=2m+1}^\infty \rho^{2i} O(e^{B^2/2})  
		\leq O(e^{B^2/2}) \sum_{i=2m+1}^\infty \rho^{2i} 
		\leq O(e^{B^2/2}) \frac{\rho^{2(2m+1)}}{1-\rho^2}   \;.
	\end{align*}
	This completes the proof of Lemma~\ref{lem:existence_of_A_y}.
\end{proof}
Putting everything together, we now prove our main theorem.
\begin{proof}[Proof of Theorem~\ref{thm:main-formal}]
    	We will show that the following search problem $\cZ$  has large statistical dimension:
	 $\cD$ is the set of distributions of the form $E_v(x,y) = \alpha D_v(x,y) + (1{-\alpha})N_v(x,y)$ for every $v \in \cS^{d-1}$ and noise distribution $N_v$ as in Lemma~\ref{lem:check_properties}. The reference distribution $R$ is $R=\cN(0,I_d) \times \cN(0,1/\alpha)$.
	 Let $\beta(v)=\rho v$ denote the regression vector corresponding to $E_v$.
	 The set of solutions $\cF$ is the set of all lists of size $\ell$ containing vectors of norm $\rho$ in $\R^d$ and the solution set $\cZ(E_v)$ for the distribution $E_v$ is exactly the set of lists from $\cF$ having at least one element $u$ at distance $\snorm{2}{u-\beta(v)} \leq \rho/2 $. 
	 The appropriate subset of $\cD$ that we will consider is the one corresponding to the set $S$ of nearly orthogonal vectors of Lemma~\ref{lem:orthogonal_vectors}, $\cD_{R} = \{E_v  \}_{v \in S}$.   
	
	Note that for any $u \in \cF$, there exists at most one element $E_v$ in $\cD_R$ that satisfies $\snorm{2}{u-\beta(v)} \leq \rho/2$, since if there exists another $v'$ with $\snorm{2}{u-\beta(v')} \leq \rho/2$, then by triangle inequality $\snorm{2}{\beta(v)-\beta(v')} \leq \rho$. However, this cannot happen 
	because $|v^T (v')| \leq O(d^{c-1/2})$ for all $v,v' \in S$ together with $d = 2^{\Omega(1/(1/2-c))}$ implies that $\|\beta(v) - \beta(v') \|_2 \geq \rho \sqrt{2(1-v^T (v'))} \geq \rho$. This implies that for any solution list $L$, $|\cD_{R}  \setminus \cZ^{-1}(L)| \geq |\cD_{R}| - \ell$. We choose $\ell = |\cD_{R}|/2$.
	We now calculate the pairwise correlation of the set $\cD_{R}$. Let a pair of $u,v \in \cS^{d-1}$. 
	\begin{align*}
		\chi_{R(x,y)}(E_v(x,y), E_u(x,y)) &= \E_{y \sim R(y)} \left[ \chi_{\cN(0,I_d)}\left(E_v(x|y), E_u(x|y)\right) \right]\\
		&\leq |u^T v|^{2m+1} \E_{y \sim R(y)} \left[\chi^{2}(A_y,\cN(0,1))\right] \\
		&= |u^T v|^{2m+1} \left(O(e^{B^2/2})/(1-\rho^2) +  \int_\R   \alpha O(e^{y^2(\alpha-1/2)})\sqrt{\alpha} e^{-y^2\alpha}\dy  \right) \\
		&\leq |u^T v|^{2m+1} O(e^{B^2/2})/(1-\rho^2) \\
		&\leq \Omega(d)^{-(2m+1)(1/2-c)} O(e^{B^2/2})/(1-\rho^2) \;,
	\end{align*}
	where the first line is due to Lemma~\ref{lem:form_of_conditional_distr}, the second line is from Lemma~\ref{lem:lemma-3-4-DKS17} along with the observation that $E_v(x|y)$ is of the form $P_{A_y,v}$,  the third line comes from the second part of Lemma~\ref{lem:existence_of_A_y}, and the last one uses Lemma~\ref{lem:orthogonal_vectors}.
	Thus, by recalling that we can choose $B=C_2\sqrt{m}$ for a sufficiently large constant $C_2$, the set $\cD_{R}$ is $(\gamma,b)$-correlated with respect to $R$, where $\gamma:= \Omega(d)^{-(2m+1)(1/2-c)} e^{O(m)}/(1-\rho^2) $  and $b := e^{O(m)}/(1-\rho^2) $.
	The proof is concluded by applying Lemma~\ref{cor:cor3-12-FGR} with $\gamma' = \gamma$.
\end{proof}

We conclude this section with a note on the model and existing algorithmic results (extending the relevant discussion of Section~\ref{ssec:background}).
\begin{remark}[Comparison of SQ Lower Bound to Existing Upper Bounds] \label{remark:comparison}
	We remark that the model used in Theorem~\ref{thm:sq-lb-inf} (i.e., having a regressor with norm at most one and  additive noise with small variance) is considered in both recent works~\cite{KKK19-list,RY19-list} that provided list-decoding algorithms for the problem. In particular, these works give the following upper bounds:
	\begin{itemize}
		\item \cite{KKK19-list} considers the model  where $\snorm{2}{\beta} \leq 1$ and gives an algorithm that for every $\eps>0$, runs in time $(d/(\alpha \eps))^{O\left(\frac{1}{\alpha^8 \eps^8}\right)}$ and outputs a list of size $O(1/\alpha)$ containing a $\widehat{\beta}$ such that $\|{\widehat{\beta}-\beta}\|_2 \leq O(\sigma/\alpha) + \eps$. Note that this guarantee is better than the trivial upper bound of $1$ only if $\sigma = O(\alpha)$. To achieve error $1/4$, this algorithm runs in time $(d/\alpha)^{O\left(\frac{1}{\alpha^8}\right)}$. On the other hand, our lower bound for the complexity of any SQ algorithm becomes $\alpha d^{\Omega\left(1/\sqrt{\alpha}\right)}$.
		\item \cite{RY19-list} does not impose any constraint on $\snorm{2}{\beta}$ and gives an algorithm that runs in 
		time $(\|\beta\|_2/\sigma)^{\log(1/\alpha)}d^{O(1/\alpha^4)}$ and outputs a list of size $O((\|\beta\|_2/\sigma)^{\log(1/\alpha)})$ including a $\widehat{\beta}$ with the guarantee that $\|{\widehat{\beta}-\beta}\|_2 \leq O(\sigma/\alpha^{3/2})$. For the special  case where $\snorm{2}{\beta} \leq 1$ (and $\sigma = O( \alpha^{3/2})$ in order for the error guarantee to be meaningful), this algorithm can achieve error $1/4$ in time $(1/\alpha^{3/2})^{\log(1/\alpha)} d^{O(1/\alpha^4)}$. In comparison, our lower bound becomes $\alpha^{3/2} d^{\Omega\left(1/\sqrt{\alpha}\right)}$.
	\end{itemize}
\end{remark}

\section{Duality for Moment Matching: Proof of Theorem~\ref{ThmHardDist_1}} %
\label{sec:the_hard_distribution}
We now prove the existence of a bounded distribution $F_y$ 
such that the mixture $\alpha_y \delta_y + (1{-\alpha_y})F_y$ 
matches the first $2m$ moments with $\cN(0,1)$. 
The proof follows a non-constructive argument based on the duality between 
the space of moments and the space of non-negative polynomials.

Let $B>0$ and $m \in \Z_+$. Let $\cP(m)$ denote the class of all polynomials $p : \R \to \R$ with degree at most $m$.
Let $\cP^{\geq0}(2m,B)$ be the class of polynomials that can be represented in either the form 
$p(t) = (\sum_{i=0}^m a_i t^i)^2$ or the form $p(t) = (B^2-t^2)(\sum_{i=0}^{m-1} b_i t^i)^2$.
The intuition for $\cP^{\geq0}(2m,B)$ is that every polynomial of degree at most $2m$ 
that is non-negative in $[-B,B]$ can be written as a finite sum of polynomials from $\cP^{\geq0}(2m,B)$. 
By slightly abusing notation, for a polynomial $p(t) = \sum_{i=0}^m p_i t^i$, 
we also use $p$ to denote the vector in $\R^{m+1}$ consisting of the coefficients $(p_0,\ldots,p_m)$.
The following classical result characterizes when a vector is realizable as the moment sequence 
of a distribution with support in $[-B,B]$ (for simplicity, we focus on matching an even number of moments in the rest of this section).
\begin{theorem}[Theorem 16.1 of~\cite{karlin1953geometry}] \label{ThmDualityBdd}
Let $B>0$, $k \in \Z_+$, and $x=(x_0, x_1,\ldots, x_{2k})\in \R^{2k+1}$ with $x_0 = 1$. 
There exists a distribution with support in $[-B,B]$ having as its first $2k$ moments the sequence $(x_1,\dots,x_{2k})$ 
if and only if for all $p \in \cP^{\geq 0}(2k,B)$ it holds that $\sum_{i=0}^{2k} x_ip_i \geq 0$.
\end{theorem}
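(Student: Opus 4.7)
The plan is a classical separating-hyperplane argument in $\R^{2k+1}$, with the Markov--Lukacs characterization of polynomials non-negative on $[-B, B]$ supplying the crucial structural input. The necessity direction is immediate: every $p \in \cP^{\geq 0}(2k,B)$ is point-wise non-negative on $[-B,B]$ (being either a square of a polynomial of degree at most $k$, or $(B^2-t^2)$ times such a square), so if $\mu$ is any probability distribution on $[-B, B]$ with moment vector $x$, then $\sum_{i=0}^{2k} p_i x_i = \int_{-B}^{B} p(t)\, d\mu(t) \geq 0$.

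For the sufficiency direction I would work with the moment cone
\[
C := \left\{\int_{-B}^{B} (1, t, \ldots, t^{2k})\, d\mu(t) : \mu \text{ a finite positive Borel measure on } [-B,B]\right\} \subseteq \R^{2k+1}.
\]
This is obviously a convex cone, and I would first verify that it is closed: if a sequence $y_n \in C$ converges to $y$, then the total masses $(y_n)_0$ converge, so the representing measures $\mu_n$ have uniformly bounded total mass; by Banach--Alaoglu applied to the space of signed measures on the compact set $[-B,B]$, a subsequence converges weakly to some positive measure $\mu$, and weak convergence together with the continuity of the monomials on $[-B, B]$ gives $y = \int_{-B}^{B} (1, t, \ldots, t^{2k})\, d\mu \in C$.

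Now assume $x = (1, x_1, \ldots, x_{2k})$ satisfies $\sum_i p_i x_i \geq 0$ for every $p \in \cP^{\geq 0}(2k,B)$, and suppose for contradiction that $x \notin C$. Since $C$ is closed and convex, the separating hyperplane theorem produces a vector $p \in \R^{2k+1}$ with $\langle p, x \rangle < 0$ and $\langle p, z \rangle \geq 0$ for all $z \in C$. Taking $z$ to be the moment vector of the Dirac mass $\delta_t$ (which belongs to $C$) for each $t \in [-B, B]$ shows that the polynomial $p(t) = \sum_{i=0}^{2k} p_i t^i$ is non-negative on $[-B, B]$. The key classical ingredient is then the Markov--Lukacs theorem: any polynomial of degree at most $2k$ non-negative on $[-B, B]$ admits a decomposition $p(t) = q_1(t)^2 + (B^2 - t^2)\, q_2(t)^2$ with $\deg q_1 \leq k$ and $\deg q_2 \leq k-1$. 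Each summand belongs to $\cP^{\geq 0}(2k,B)$, so by hypothesis and linearity $\langle p, x \rangle \geq 0$, contradicting the strict inequality. Hence $x \in C$, and since $x_0 = 1$ the representing measure has total mass $1$, so it is a probability distribution on $[-B, B]$ whose first $2k$ moments are $x_1, \ldots, x_{2k}$, as required.

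The main obstacle is the Markov--Lukacs theorem itself, which is the only non-routine input. I would either cite it as classical or prove it via the factorization $p(t) = c \prod_i (t - r_i)^{m_i} \prod_j ((t - \alpha_j)^2 + \beta_j^2)$ over $\R$: real roots in $[-B, B]$ must have even multiplicity (since $p \geq 0$ there) and absorb into a square; real roots outside $[-B, B]$ split between those with $r_i > B$ and those with $r_i < -B$, and are paired using the identity $(B - t)(B + t) = B^2 - t^2$; complex-conjugate pairs yield inherently non-negative quadratics. Regrouping produces the desired $q_1^2 + (B^2 - t^2) q_2^2$ form, with any degree deficit handled by padding the shorter factor with zero coefficients. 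Closedness of the moment cone is the only other technicality, and it follows from standard weak-$*$ compactness of bounded positive measures on a compact interval.
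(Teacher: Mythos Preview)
Your argument is correct, and there is nothing to compare it to: the paper does not prove Theorem~\ref{ThmDualityBdd} at all but simply quotes it as Theorem~16.1 of Karlin--Shapley and uses it as a black box. The separating-hyperplane argument combined with the Markov--Luk\'acs representation is exactly the classical route to this result.

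One small remark on your sketch of Markov--Luk\'acs: the ``pairing'' of real roots outside $[-B,B]$ via the identity $(B-t)(B+t)=B^2-t^2$ is not literally how the decomposition arises, since those roots sit at points $r_i\neq\pm B$. What root-factoring actually delivers, after writing each linear factor $(r-t)$ with $r>B$ (or $(t-r)$ with $r<-B$) as a square plus $(B^2-t^2)$ times a square and then expanding the product, is a \emph{finite sum} of elements of $\cP^{\geq 0}(2k,B)$ rather than a single two-term expression. That weaker conclusion is all your separating-hyperplane step needs, so the argument is unaffected; the sharp two-summand form is true but requires a different proof (e.g., via Chebyshev systems), and citing it as classical, as you propose, is the cleanest option.
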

As we require the distribution to be discrete, we prove the following result using Theorem~\ref{ThmDualityBdd}:

\begin{proposition} \label{cor:duality}
Fix $y \in \R$, $\alpha_y \in (0,1)$, $B>0$, and $m \in \Z_+$. There exists a discrete distribution $F_y$ 
supported on at most $2m+1$ points in $[-B,B]$ such that 
$ \alpha_y	\delta_y + (1- \alpha_y)F_y$ matches the first $2m$ moments with $\cN(0,1)$
	if and only if $\E_{X\sim \cN(0,1)}[p(X)] \geq \alpha_yp(y)$ for all 
	$p \in \cP^{\geq 0}(2m,B)$. 
\end{proposition}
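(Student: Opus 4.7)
The plan is to reduce the existence of $F_y$ to a standard truncated moment problem on $[-B,B]$ and invoke Theorem~\ref{ThmDualityBdd}. First I would translate moment-matching of $\alpha_y\delta_y+(1-\alpha_y)F_y$ with $\cN(0,1)$ into a condition on $F_y$ alone: setting $m_i := \E_{X\sim \cN(0,1)}[X^i]$, the $i$-th moment of $F_y$ must equal
$$x_i := \frac{m_i - \alpha_y y^i}{1-\alpha_y}$$
for $i=1,\dots,2m$, with $x_0=1$. Thus the desired $F_y$ exists as a probability distribution on $[-B,B]$ matching these moments if and only if the sequence $(x_0,\dots,x_{2m})$ is realizable as the moment sequence of such a distribution; the discreteness claim will be handled separately at the end.

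Next I would apply Theorem~\ref{ThmDualityBdd}, which says realizability is equivalent to $\sum_{i=0}^{2m} x_i p_i \geq 0$ for every $p \in \cP^{\geq 0}(2m,B)$. A direct linearity calculation, using $\sum_i p_i t^i = p(t)$, gives
$$\sum_{i=0}^{2m} x_i p_i \;=\; \frac{\E_{X\sim \cN(0,1)}[p(X)] \;-\; \alpha_y\, p(y)}{1-\alpha_y},$$
so after multiplying by the positive factor $1-\alpha_y$, the dual condition becomes exactly $\E_{X\sim \cN(0,1)}[p(X)] \geq \alpha_y p(y)$, which is the stated criterion. The reverse implication, starting from an $F_y$ as in the proposition, is also immediate: any $p \in \cP^{\geq 0}(2m,B)$ is non-negative on $[-B,B]$ (which contains the support of $F_y$), and the mixture matches moments with $\cN(0,1)$, so $\E_{\cN(0,1)}[p(X)] = \alpha_y p(y) + (1-\alpha_y)\E_{F_y}[p(X)] \geq \alpha_y p(y)$.

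Finally, to upgrade to an $F_y$ supported on at most $2m+1$ atoms, I would apply the conic version of Carath\'eodory's theorem. The set of realizable moment sequences is the conic hull in $\R^{2m+1}$ of the moment curve $\Gamma := \{(1,t,\dots,t^{2m}) : t \in [-B,B]\}$; since $\Gamma$ is compact, this conic hull is closed, and Carath\'eodory implies that any vector in it is a non-negative combination of at most $2m+1$ elements of $\Gamma$. Specializing to $(x_0,\dots,x_{2m})$ and using $x_0=1$ to normalize the coefficients to sum to one gives an atomic probability measure on at most $2m+1$ points of $[-B,B]$, which is the desired $F_y$.

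The main step requiring care is the algebraic identity expressing $\sum_i x_i p_i$ in terms of $\E_{\cN(0,1)}[p]$ and $p(y)$, since this is what makes the abstract duality of Theorem~\ref{ThmDualityBdd} match the clean condition in the statement; after that, both the equivalence and the atomic-representation upgrade are routine applications of the quoted duality result and Carath\'eodory's theorem.
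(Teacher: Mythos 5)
Your proposal is correct and follows essentially the same route as the paper: the same translation to the adjusted moment sequence $x_i=(m_i-\alpha_y y^i)/(1-\alpha_y)$, the same application of Theorem~\ref{ThmDualityBdd} via the linearity identity, and the same Carath\'eodory-type argument for the $2m+1$-atom representation (the paper packages the latter as Lemma~\ref{lem:LPDualityDiscrete}, using Karlin's convex-hull characterization of the moment body rather than your conic-hull phrasing, but the content is identical).
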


The proof of Proposition~\ref{cor:duality} is deferred to Section~\ref{app:proof_of_prop_duality}.
To prove Theorem~\ref{ThmHardDist_1}, we need to establish the condition of Proposition~\ref{cor:duality}. To this end, we first need the following two technical lemmas, whose proofs are given in Sections~\ref{sec:proof_of_LemGaussPositive} and~\ref{sec:auxiliary_details}. 
\begin{lemma}
	\label{LemGaussPositive}
	Let $m \in \Z_+$. If $B \geq C \sqrt{m}$ for some sufficiently large constant $C>0$, then for every  $q \in \cP(m)$, it holds that $B^2 \E _{X \sim \cN(0,1)}[q^2(X)] \geq 2 \E_{X \sim \cN(0,1)} [X^2 q^2(X)]$.
\end{lemma}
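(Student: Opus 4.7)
My plan is to prove the lemma by expanding $q$ in the orthonormal basis of (normalized probabilist's) Hermite polynomials and using the three-term recurrence to control multiplication by $X$. Concretely, write $q(x) = \sum_{i=0}^{m} c_i h_i(x)$, adopting the convention $c_i = 0$ for $i \notin \{0,\ldots,m\}$. By orthonormality of the $h_i$ with respect to $\cN(0,1)$, one immediately has
$$\E_{X \sim \cN(0,1)}[q^2(X)] = \sum_{i=0}^{m} c_i^2.$$
So it suffices to show $\E_{X \sim \cN(0,1)}[X^2 q^2(X)] \leq C'(2m+1)\sum_{i=0}^m c_i^2$ for a universal constant $C'$, since then the condition $B \geq C\sqrt{m}$ with $C$ sufficiently large will give $B^2 \sum c_i^2 \geq 2 \E[X^2 q^2(X)]$.

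The key step is the three-term recurrence for the normalized probabilist's Hermite polynomials,
$$x\, h_i(x) = \sqrt{i+1}\, h_{i+1}(x) + \sqrt{i}\, h_{i-1}(x).$$
Applying this term by term and re-collecting by index yields
$$x\, q(x) = \sum_{j \geq 0} d_j\, h_j(x), \qquad d_j := c_{j-1}\sqrt{j} + c_{j+1}\sqrt{j+1}.$$
By orthonormality again, $\E_{X \sim \cN(0,1)}[X^2 q^2(X)] = \sum_j d_j^2$.

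To bound this, I would apply the elementary inequality $(a+b)^2 \leq 2(a^2+b^2)$ to get $d_j^2 \leq 2\bigl(c_{j-1}^2 \cdot j + c_{j+1}^2 \cdot (j+1)\bigr)$. Summing in $j$ and reindexing (substituting $i = j-1$ in the first term and $i = j+1$ in the second, which is valid since $c_i$ vanishes outside $\{0,\ldots,m\}$),
$$\sum_{j \geq 0} d_j^2 \;\leq\; 2\sum_{i} c_i^2 (i+1) + 2\sum_{i} c_i^2 \cdot i \;=\; 2\sum_{i=0}^{m} c_i^2 (2i+1) \;\leq\; 2(2m+1)\sum_{i=0}^{m} c_i^2.$$
Combining, $\E[X^2 q^2(X)] \leq 2(2m+1)\, \E[q^2(X)]$, so it suffices that $B^2 \geq 4(2m+1)$. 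Taking $B \geq C\sqrt{m}$ with $C \geq 3$ works for all $m \geq 1$ (and the $m = 0$ case, where $q$ is constant, is trivial since then $\E[X^2 q^2] = \E[q^2]$).

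There is essentially no obstacle here; the only thing to be careful about is bookkeeping in the reindexing and confirming the boundary conventions $c_{-1} = c_{m+1} = 0$. As a sanity check, one can verify the identity $\E[X^2 q^2(X)] = \E[q^2(X)] + 2\E[X q(X) q'(X)]$ via Stein's lemma, which combined with the Hermite-basis bound $\E[q'(X)^2] \leq m\, \E[q^2(X)]$ (using $h_i'(x) = \sqrt{i}\, h_{i-1}(x)$) gives a slightly sharper constant, but the simple Parseval computation above already yields the required estimate with constant $C$ independent of $m$.
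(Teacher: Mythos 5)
Your proof is correct, but it takes a genuinely different route from the paper's. The paper bounds $\E_{X\sim\cN(0,1)}[X^2 q^2(X)]$ via H\"older's inequality with exponents $\left(m+1,\tfrac{m+1}{m}\right)$, controlling $\snorm{L^{2m+2}}{X}=O(\sqrt{m})$ by Gaussian moment bounds and $\snorm{L^{(2m+2)/m}}{q(X)}\leq (1+2/m)^{m/2}\snorm{L^2}{q(X)}\leq e\,\snorm{L^2}{q(X)}$ by Gaussian hypercontractivity. You instead expand $q$ in the orthonormal Hermite basis and use the three-term recurrence $x\,h_i=\sqrt{i+1}\,h_{i+1}+\sqrt{i}\,h_{i-1}$ together with Parseval. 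Your recurrence, the collected coefficients $d_j=\sqrt{j}\,c_{j-1}+\sqrt{j+1}\,c_{j+1}$, and the reindexing are all correct. Your approach is more elementary (no hypercontractivity needed) and yields the explicit bound $\E[X^2q^2(X)]\leq 2(2m+1)\E[q^2(X)]$, which is within a factor of $2$ of sharp: taking $q=h_m$ gives $\E[X^2h_m^2(X)]=(m+1)+m=2m+1$ exactly. The paper's method gives the same order $O(m)$ but generalizes more readily to higher powers $\E[X^{2k}q^2(X)]$. One tiny numerical slip at the end: your bound requires $B^2\geq 4(2m+1)=8m+4$, and $C=3$ gives $9m\geq 8m+4$ only for $m\geq 4$; any $C\geq 4$ works for all $m\geq 1$, which is all the lemma asks for since it only claims a sufficiently large constant.
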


\begin{lemma}
	\label{LemSupRatio}
	Let $y \in \R$, $\alpha \in (0,1/2)$, $m \in \Z_+$, and $\alpha_y = \sqrt{\alpha}\exp(-y^2(1 - \alpha)/2)$. Suppose $m \leq C/\sqrt{\alpha}$ for some sufficiently small constant $C>0$. Then for all $r \in \cP(m), r \not \equiv 0$: $r^2(y)/(\E_{X\sim \cN(0,1)}[r^2(X)]) \leq 1/( 2\alpha_y)$.
\end{lemma}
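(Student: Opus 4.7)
\textbf{Proof plan for Lemma \ref{LemSupRatio}.}

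The plan is to reduce the supremum over polynomials to the diagonal of the Christoffel--Darboux kernel and then bound that kernel by Mehler's formula with a carefully chosen parameter.

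\emph{Step 1 (reduction to the kernel).} Expand $r \in \cP(m)$ in the normalized probabilist's Hermite basis as $r(x) = \sum_{i=0}^m c_i h_i(x)$. By the orthonormality property $\E_{X\sim\cN(0,1)}[h_i(X)h_j(X)] = \mathbf{1}(i=j)$ noted in the preliminaries, we have $\E[r^2(X)] = \sum_{i=0}^m c_i^2$. The Cauchy--Schwarz inequality then gives
\[
r^2(y) = \Bigl(\sum_{i=0}^m c_i h_i(y)\Bigr)^2 \le \Bigl(\sum_{i=0}^m c_i^2\Bigr)\Bigl(\sum_{i=0}^m h_i^2(y)\Bigr) = \E[r^2(X)] \cdot K_m(y,y),
\]
where $K_m(y,y) := \sum_{i=0}^m h_i^2(y)$ is the Christoffel--Darboux kernel at the diagonal, and this Cauchy--Schwarz bound is tight by taking $c_i = h_i(y)$. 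So the supremum equals $K_m(y,y)$, and the lemma reduces to showing $K_m(y,y) \le e^{y^2(1-\alpha)/2}/(2\sqrt{\alpha})$.

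\emph{Step 2 (Mehler's identity).} I would apply Mehler's kernel formula
\[
\sum_{i=0}^\infty h_i^2(y)\,\rho^i \;=\; \frac{1}{\sqrt{1-\rho^2}}\,\exp\!\Bigl(\frac{\rho y^2}{1+\rho}\Bigr), \qquad \rho \in (0,1),
\]
and bound the truncated sum using $\rho^i \ge \rho^m$ for $0\le i\le m$:
\[
K_m(y,y) \;\le\; \rho^{-m}\sum_{i=0}^\infty h_i^2(y)\,\rho^i \;=\; \frac{\rho^{-m}}{\sqrt{1-\rho^2}}\,\exp\!\Bigl(\frac{\rho y^2}{1+\rho}\Bigr).
\]

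\emph{Step 3 (choice of $\rho$).} I would select $\rho$ so that the exponent in Mehler exactly matches the target. Taking $\rho = (1-\alpha)/(1+\alpha)$ yields $\rho/(1+\rho) = (1-\alpha)/2$ and $\sqrt{1-\rho^2} = 2\sqrt{\alpha}/(1+\alpha)$, which produces
\[
K_m(y,y) \;\le\; \Bigl(\tfrac{1+\alpha}{1-\alpha}\Bigr)^{m}\cdot \tfrac{1+\alpha}{2\sqrt{\alpha}}\cdot e^{y^2(1-\alpha)/2}.
\]
Using $\ln((1+\alpha)/(1-\alpha)) = 2\tanh^{-1}\alpha \le 4\alpha$ for $\alpha \in (0,1/2)$ together with $m \le C/\sqrt{\alpha}$ gives $((1+\alpha)/(1-\alpha))^m \le e^{4C\sqrt{\alpha}}$, so the entire prefactor is at most $(1+\alpha)\,e^{4C\sqrt{\alpha}}/(2\sqrt{\alpha})$, which approaches $1/(2\sqrt{\alpha})$ as $C$ and $\alpha$ shrink.

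\emph{Main obstacle and how to close the gap.} The crude Mehler bound above has a strict multiplicative excess factor $(1+\alpha)\,e^{4C\sqrt{\alpha}} > 1$ over the desired right-hand side, so Mehler at this single choice of $\rho$ cannot be tight. The plan to absorb this excess has two components. \emph{(i) Tilting $\rho$.} I would take $\rho$ strictly less than $(1-\alpha)/(1+\alpha)$ (for concreteness $\rho = \sqrt{1-4\alpha}$, which is $\le (1-\alpha)/(1+\alpha)$ by elementary algebra and gives $\sqrt{1-\rho^2}=2\sqrt{\alpha}$). The strict inequality $\rho/(1+\rho) < (1-\alpha)/2$ produces an additional Gaussian-decay factor $\exp(-c\alpha y^2)$ for some $c>0$, which compensates for all sufficiently large $|y|$. \emph{(ii) Direct treatment of small $y$.} For $|y|$ bounded, the uniform Hermite tail bound from Lemma \ref{PropTailBoundHermite2}, $h_i^2(y)\le O(i^{-1/6})e^{y^2/2}$, yields $K_m(y,y) \le O(m^{5/6})e^{y^2/2}$; combined with $m\le C/\sqrt{\alpha}$ this gives $K_m(y,y) \le O(C^{5/6}\alpha^{-5/12})\,e^{y^2/2}$, which is at most $e^{y^2(1-\alpha)/2}/(2\sqrt{\alpha})$ whenever $C$ is sufficiently small. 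The main technical step is to verify that the threshold separating the two regimes can be chosen and that the constant $C$ can be tuned so that the two bounds simultaneously yield the desired inequality for every $y\in\R$.
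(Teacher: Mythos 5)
Your Step 1 (Cauchy--Schwarz reduction of the supremum to the Christoffel--Darboux diagonal $K_m(y,y)=\sum_{i\le m}h_i^2(y)$) is exactly the paper's first step, and your Mehler-kernel route for bounding $K_m(y,y)$ is a genuinely different way to finish than the paper's. However, as written your plan has a concrete quantitative gap. With your choice $\rho=\sqrt{1-4\alpha}$ the exponent gain is not $\exp(-c\alpha y^2)$: one computes $\tfrac{1-\alpha}{2}-\tfrac{\rho}{1+\rho}=\tfrac{(1-\alpha)-(1+\alpha)\rho}{2(1+\rho)}$, and since $(1-\alpha)^2-(1+\alpha)^2(1-4\alpha)=8\alpha^2+4\alpha^3$, this gap is $\Theta(\alpha^2)$, so the extra decay is only $\exp(-c\alpha^2 y^2)$. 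Moreover $\sqrt{1-\rho^2}=2\sqrt{\alpha}$ exactly, so there is zero multiplicative slack to absorb $\rho^{-m}=e^{\Theta(\alpha m)}$. Consequently your Mehler bound beats the target only when $\alpha^2 y^2\gtrsim \alpha m$, i.e.\ $y^2\gtrsim m/\alpha$, which for $m\asymp C/\sqrt{\alpha}$ means $y^2\gtrsim C\alpha^{-3/2}$; meanwhile the Krasikov bound $K_m(y,y)\le O(m^{5/6})e^{y^2/2}$ yields the target only when $e^{\alpha y^2/2}\lesssim \alpha^{-1/12}C^{-5/6}$, i.e.\ $y^2\lesssim \alpha^{-1}\log(1/(C\alpha))$. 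For small $\alpha$ these two ranges do not meet (and shrinking $C$ cannot close the window, since $C\alpha^{-3/2}$ outgrows $\alpha^{-1}\log(1/C\alpha)$), so the "main technical step" you flag genuinely fails with the stated parameters.

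The good news is that the gap is in the parameter choice, not the method: note that the hypotheses $m\ge 1$, $m\le C/\sqrt{\alpha}$ force $\alpha\le C^2$, so you may take, e.g., $\rho=1-4\alpha$. Then $1-\rho^2=8\alpha(1-2\alpha)$, so $1/\sqrt{1-\rho^2}\le \tfrac{1}{2\sqrt{\alpha}}\cdot\tfrac{1}{\sqrt{2(1-2\alpha)}}$, a constant factor below $1/(2\sqrt{\alpha})$; also $\rho\le \tfrac{1-\alpha}{1+\alpha}$, so $\tfrac{\rho}{1+\rho}\le\tfrac{1-\alpha}{2}$ and the Mehler exponent is at most $(1-\alpha)y^2/2$ for every $y$; and $\rho^{-m}\le e^{O(\alpha m)}\le e^{O(C\sqrt{\alpha})}\le e^{O(C^2)}$, which the constant slack absorbs once $C$ is small. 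With this $\rho$, Mehler alone proves the lemma for all $y\in\R$, with no case split and no Hermite tail bound needed --- arguably cleaner than the paper, which after the same kernel reduction splits on $|y|\lessgtr 1/\sqrt{\alpha}$, using Krasikov's bound $\max_x h_i^2(x)e^{-x^2/2}=O(i^{-1/6})$ in the first regime and the crude coefficient bound $|h_i(y)|\le i^i(1+|y|)^i$ in the second. So: the approach is salvageable and attractive, but your concrete tilt of $\rho$ must be changed (you need $1-\rho\ge c\alpha$ with $c>2$, which buys constant slack in the prefactor and a true $\Theta(\alpha)y^2$ exponent margin), and the two-regime patch as proposed does not work.
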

\begin{proof}[Proof of Theorem~\ref{ThmHardDist_1}]
	By Proposition~\ref{cor:duality}, it remains to show that if $B \geq C_2 \sqrt{m}$, then the condition $\E_{X \sim \cN(0,1)}[p(X)] \geq  \alpha_y	 p(y)$ holds for all $p \in \cP^{\geq0}(2m,B)$. 
	Thus, it suffices to ensure that the following two inequalities hold for $X \sim \cN(0,1)$:
\begin{align}
\sup_{r\in \cP(m), r\not\equiv 0} \frac{r^2(y)}{\E [r^2(X)]} \leq \frac{1}{  \alpha_y	} \,\,\,\,\,\,\, \text{and } \,\, \sup_{q\in \cP(m-1), q \not\equiv 0} \frac{(B^2 - y^2)q^2(y)}{\E[(B^2 - X^2)q^2(X) ]} \leq \frac{1}{ \alpha_y	},
\label{EqSupRation3}
\end{align}
where we use Lemma~\ref{LemGaussPositive} to show that $\E[(B^2-X^2)q^2(X)] > 0$ for all non-zero polynomials $q \in \cP(m-1)$.
The first expression can be bounded using Lemma~\ref{LemSupRatio} when $m \leq C_1/\sqrt{\alpha}$. We  now focus on the second expression. By Lemma~\ref{LemGaussPositive}, $\E_{X \sim \cN(0,1)}[(B^2-X^2)q^2(X)] \geq 0.5\E_{X \sim \cN(0,1)}[B^2q^2(X)]$. Therefore, we have that
\begin{align*}
\sup_{q\in \cP(m-1), q\not\equiv 0} \frac{(B^2 - y^2)q^2(y)}{\E_{X \sim \cN(0,1)}[(B^2 - X^2)q^2(X) ]} \leq \sup_{q\in \cP(m-1), q\not\equiv 0} \frac{B^2q^2(y)}{\E_{X \sim \cN(0,1)}[(B^2 - X^2)q^2(X) ]} \\
\leq \sup_{q\in \cP(m-1), q\not\equiv 0} \frac{ B^2q^2(y)}{\E_{X \sim \cN(0,1)}[0.5B^2q^2(X) ]} 
= 2 \sup_{q\in \cP(m-1), q\not\equiv 0} \frac{q^2(y)}{\E_{X \sim \cN(0,1)}[q^2(X) ]}\;,
\end{align*}
where the first inequality uses that the denominator is positive and $y^2q^2(y)\geq 0$ and the second inequality uses that $\E_{X \sim \cN(0,1)}[(B^2-X^2)q^2(X)] \geq 0.5\E_{X \sim \cN(0,1)}[B^2q^2(X)]$. The expression above is of the same form as the first expression in Equation~\eqref{EqSupRation3}, and thus is also bounded above  by $1/\alpha_y$  when $m \leq C_1/\sqrt{\alpha}$ using Lemma~\ref{LemSupRatio}. This completes the proof of Theorem~\ref{ThmHardDist_1}. \qedhere
\end{proof}

\subsection{Proof of Proposition~\ref{cor:duality}} \label{app:proof_of_prop_duality}

We require the following result stating that for every distribution $Q$ with bounded support, there exists a discrete distribution $P$ with bounded support that matches the low-degree moments of $Q$.
\begin{lemma}
\label{lem:LPDualityDiscrete}
Let $B>0$, $k \in \Z_+$, and $Q$ be any distribution with support in $[-B,B]$. Then there exists a discrete distribution $P$ with the following properties: (i) the support of $P$ is contained in $[-B,B]$, (ii)  the first $k$ moments of $P$ agree with the first $k$ moments of $Q$, and  (iii) $P$ is supported on at most $k+1$
points.
\end{lemma}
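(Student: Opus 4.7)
The plan is to view the moment-matching problem geometrically and apply Carathéodory's theorem to the moment curve. Define the continuous map $\mu : [-B,B] \to \R^{k+1}$ by $\mu(x) = (1, x, x^2, \ldots, x^k)$. The moment vector of any distribution $\pi$ supported in $[-B,B]$ is exactly $v_\pi := \int_{-B}^{B} \mu(x)\, d\pi(x)$. My goal is to show $v_Q$ lies in the convex hull $\mathrm{conv}(\mu([-B,B]))$ and then extract a $(k{+}1)$-atom representation from Carathéodory.

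First I would establish $v_Q \in \mathrm{conv}(\mu([-B,B]))$. Since $\mu([-B,B])$ is compact (as the continuous image of a compact set), $\mathrm{conv}(\mu([-B,B]))$ is compact and in particular closed (by Carathéodory's theorem itself, the convex hull of a compact set in finite dimensions is closed). Approximating $Q$ weakly by a sequence of finitely supported distributions $Q_n = \sum_j p_j^{(n)} \delta_{x_j^{(n)}}$ with atoms in $[-B,B]$, the moment vectors $v_{Q_n} = \sum_j p_j^{(n)} \mu(x_j^{(n)})$ literally lie in $\mathrm{conv}(\mu([-B,B]))$, and by continuity of polynomials on $[-B,B]$ we have $v_{Q_n} \to v_Q$; closedness delivers $v_Q \in \mathrm{conv}(\mu([-B,B]))$.

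Second, I would observe that all vectors in $\mu([-B,B])$ have first coordinate equal to $1$, so the entire convex hull lies in the affine hyperplane $\{y \in \R^{k+1} : y_0 = 1\}$, which is $k$-dimensional. Applying Carathéodory's theorem in this $k$-dimensional affine subspace to the point $v_Q$ produces points $x_1, \ldots, x_{k+1} \in [-B,B]$ and weights $\lambda_1, \ldots, \lambda_{k+1} \geq 0$ with $\sum_{i=1}^{k+1}\lambda_i = 1$ such that
\[
v_Q \;=\; \sum_{i=1}^{k+1} \lambda_i \, \mu(x_i).
\]
Setting $P := \sum_{i=1}^{k+1} \lambda_i \, \delta_{x_i}$ immediately yields a discrete distribution supported on at most $k+1$ points of $[-B,B]$ whose first $k$ moments coincide with those of $Q$, establishing (i), (ii), and (iii) simultaneously.

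The main obstacle is the justification that $v_Q$ belongs to $\mathrm{conv}(\mu([-B,B]))$ rather than merely to its closed convex hull in a weak sense; handling arbitrary (possibly continuous) $Q$ requires the weak approximation argument outlined above together with the closedness of the finite-dimensional convex hull. Once that is in place, the $k+1$ bound (as opposed to $k+2$) is forced precisely by the one-dimensional reduction from $y_0 = 1$, which is what makes the statement tight.
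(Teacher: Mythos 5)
Your proof is correct and follows essentially the same route as the paper: identify the moment vector of $Q$ as a point in the convex hull of the moment curve over $[-B,B]$ and invoke Carath\'eodory's theorem (in the $k$-dimensional affine slice $\{y_0=1\}$, matching the paper's use of Carath\'eodory in $\R^k$ for the moments $(x_1,\dots,x_k)$) to extract at most $k+1$ atoms. The only difference is that the paper cites the classical Karlin--Shapley result that the moment space is the convex hull of the Dirac moment vectors, whereas you prove this membership directly via weak approximation and closedness of the convex hull of a compact set --- a valid, self-contained substitute.
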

\begin{proof}
Let $\cQ$ be the set of distributions on $\R$ that are supported in $[-B,B]$ and let $\cQ' \subset \cQ$ be the set of Dirac delta distributions supported in $[-B,B]$, i.e., $\cQ' = \{\delta_y: y \in [-B,B]\}$.	
	 Let $\mathcal{C} \subset \R^{k}$ and $\mathcal{C}' \subset \R^{k}$  be the set of all vectors $(x_1,\ldots, x_{k})$ whose coordinates $x_1,\ldots, x_{k}$ are the moments of a distribution in $\cQ$ and $\cQ'$ respectively, i.e.,	
	 \begin{align*}	
	     \cC &:=\{x \in \R^k: \exists Q \in \cQ: \forall i \in [k], x_i = \E_{X \sim Q}[X^i]\},\\	
	     \cC' &:=\{x \in \R^k: \exists Q' \in \cQ': \forall i \in [k], x_i = \E_{X \sim Q'}[X^i]\}.	
	 \end{align*}	
	Note that there is a bijection between $\cC'$ and $\cQ'$.	
	We now recall the following classical result stating convexity properties of $\cC$ and its relation with $\cC'$. We say a set $M$ is a convex hull of a set $M'$ if every $x \in M$ can be written as 	
	$x = \sum_{i=1}^j\lambda_i y_i$, where  $j \in \Z_+ $, $\sum_{i=1}^j \lambda_i = 1$, and for all $i \in [j]$: $\lambda_i \geq 0$, $y_i \in M'$.	
	\begin{lemma}[Theorem 7.2 and 7.3 of~\cite{karlin1953geometry}]
	\label{thm:convexHullMoments}	
	$\cC$ is convex, closed, and bounded. Moreover, $\cC$ is a convex hull of $\cC'$.	
	\end{lemma}	
	    Let $x^*= (x_1^*,\dots,x^*_k)$ be the first $k$ moments of $Q$. Since $x^* \in \cC$, Caratheodory theorem and Lemma~\ref{thm:convexHullMoments} implies that $x^*$ can be written as a convex combination of at most $k+1$  elements of $\cC'$.
	    This implies that there is a distribution, which is a convex combination of at most $k+1$ Dirac delta distributions in $\cQ'$, that matches the first $k$ moments with $x^*$. This completes the proof.
	\end{proof}

We can now prove the main result of this section.
\begin{proof}[Proof of Proposition~\ref{cor:duality}]
Let $X \sim \cN(0,1)$. We note that $F_y$ should have the moment sequence $x=(x_1,\dots,x_{2m})$ where $x_i = (\E_{X \sim \cN(0,1)}[X^i] - \alpha_yy^i)/(1 - \alpha_y)$ for $i\in[2m]$.
Theorem~\ref{ThmDualityBdd} implies that this happens if and only if for all $p=(p_0,\dots,p_{2m}) \in \cP^{\geq 0}(2m,B)$, we have that $\sum_{i=0}^{2m}x_ip_i \geq 0$.
The desired expression follows by noting that $    \sum_{i=0}^{2m}x_ip_i =  (\sum_{i=0}^{2m}  p_i \E_{X \sim \cN(0,1)}[X^i] - \alpha_y p_i y^i)/(1 - \alpha_y) = (\E_{X \sim \cN(0,1)}[p(X)] - \alpha_y p(y))/(1 - \alpha_y)$.
The result that $F_y$ is discrete follows from Lemma~\ref{lem:LPDualityDiscrete}. 
\end{proof}

\subsection{Proof of Lemma~\ref{LemGaussPositive}} \label{sec:proof_of_LemGaussPositive}
The proof of Lemma~\ref{LemGaussPositive} is a relatively straightforward application of Hölder's inequality and the Gaussian Hypercontractivity Theorem (stated below). 
For $p\in (0,\infty)$, we define the $L^p$-norm of a random variable $X$ to be $\snorm{L^p}{X} := (\E[|X|^p])^{1/p}$.
\begin{fact}[Gaussian Hypercontractivity \cite{Bog:98,nelson1973free}] \label{lem:hypercontractivity}
Let $X \sim \cN(0,1)$. If $p \in \cP(d)$ and $t\geq2$, then $\snorm{L^t}{p(X)} \leq (t-1)^{d/2}\snorm{L^2}{p(X)}$.
\end{fact}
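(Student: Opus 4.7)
The plan is to derive the stated polynomial hypercontractivity inequality from Nelson's hypercontractivity theorem for the Ornstein--Uhlenbeck semigroup, which asserts that for $1 \le p \le q$ and $\rho \in (0,1)$ satisfying $\rho^2 \le (p-1)/(q-1)$, the operator $U_\rho$ is a contraction from $L^p(\cN(0,1))$ to $L^q(\cN(0,1))$; that is, $\snorm{L^q}{U_\rho f} \le \snorm{L^p}{f}$. I would take this semigroup statement as the black-box analytic input (the same input Bogachev/Nelson use) and then use the eigenfunction structure of $U_\rho$ on Hermite polynomials to get the degree-$d$ polynomial consequence.

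First I would set $\rho = 1/\sqrt{t-1}$, so that Nelson's criterion is saturated with $p=2$ and $q=t$. Writing an arbitrary $p \in \cP(d)$ in the normalized Hermite basis as $p = \sum_{k=0}^{d} a_k h_k$, I would define the auxiliary polynomial $g := \sum_{k=0}^{d} \rho^{-k} a_k h_k$. By Fact~\ref{fact:eigenfunction}, $U_\rho h_k = \rho^k h_k$, hence $U_\rho g = p$ identically. Applying Nelson with exponents $(2,t)$ then yields
\begin{equation*}
\snorm{L^t}{p(X)} \;=\; \snorm{L^t}{(U_\rho g)(X)} \;\le\; \snorm{L^2}{g(X)}.
\end{equation*}

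Next I would control $\snorm{L^2}{g(X)}$ by the $L^2$ norm of $p$ using orthonormality of $\{h_k\}$ under $\cN(0,1)$:
\begin{equation*}
\snorm{L^2}{g(X)}^2 \;=\; \sum_{k=0}^{d} \rho^{-2k} a_k^2 \;\le\; \rho^{-2d} \sum_{k=0}^{d} a_k^2 \;=\; \rho^{-2d}\, \snorm{L^2}{p(X)}^2,
\end{equation*}
where the inequality is just the monotonicity of $\rho^{-2k}$ in $k$ for $\rho \in (0,1)$. Substituting $\rho = 1/\sqrt{t-1}$ gives $\rho^{-d} = (t-1)^{d/2}$ and combining with the previous display yields $\snorm{L^t}{p(X)} \le (t-1)^{d/2}\snorm{L^2}{p(X)}$, which is the claim.

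The only nontrivial step is invoking Nelson's hypercontractivity for $U_\rho$; the rest is bookkeeping about Hermite coefficients. The most common route to Nelson is via a two-point hypercontractive inequality on $\{-1,1\}$ (Bonami--Beckner), tensorization to $\{-1,1\}^n$, and the Gaussian central limit theorem applied to sums $n^{-1/2}\sum \epsilon_i$; alternatively one may prove it analytically via the logarithmic Sobolev inequality for the Gaussian measure together with Gross's equivalence between log-Sobolev and hypercontractivity. Either derivation is standard, so I would simply cite it rather than reproduce it, and concentrate the writeup on the Hermite-basis reduction above, which is what delivers the $(t-1)^{d/2}$ factor in terms of the degree.
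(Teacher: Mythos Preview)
The paper does not prove this statement at all: it is recorded as a ``Fact'' with citations to Bogachev and Nelson and then used as a black box. Your proposal supplies the standard derivation of the polynomial-degree form from Nelson's semigroup hypercontractivity, and the argument is correct: with $\rho=1/\sqrt{t-1}$ you invert $U_\rho$ on the finite Hermite expansion, apply the $L^2\to L^t$ contraction, and pay at most $\rho^{-d}=(t-1)^{d/2}$ when comparing Hermite coefficients. The edge case $t=2$ is trivial, and for $t>2$ you have $\rho\in(0,1)$ as required.

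One small notational point worth flagging in your writeup: the paper defines $U_\rho$ as an operator on \emph{distributions} (pushforward of $F$ under $x\mapsto \rho x+\sqrt{1-\rho^2}\,Z$), and Fact~\ref{fact:eigenfunction} is stated in that distributional form. Your argument uses the dual operator on \emph{functions}, $(U_\rho f)(x)=\E_Z[f(\rho x+\sqrt{1-\rho^2}\,Z)]$, for which $U_\rho h_k=\rho^k h_k$ holds pointwise. Since the Ornstein--Uhlenbeck semigroup is self-adjoint in $L^2(\cN(0,1))$, these are the same object and the eigenfunction identity you need follows; just make the identification explicit so the reader is not confused by the paper's distributional convention.
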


\begin{proof}[Proof of Lemma~\ref{LemGaussPositive}]
    Let $X \sim \cN(0,1)$.
	We can assume that $q$ is a non-zero polynomial. Then it suffices to bound $B$ from above by $\sqrt{2}$ times the following expression:
	\begin{align*}
	\sup_{q \in \cP(m),  q\not\equiv  0} \sqrt{\frac{\E[X^2 q^2(X)]}{\E [q^2(X)]}} 
	&\leq \sup_{q \in \cP(m),  q\not\equiv 0} \sqrt{\frac{\left(\E [(X^{2})^{m+1}]\right)^{1/(m+1)}\left(\E [(q^2(X))^{\frac{m+1}{m}}] \right)^{\frac{m}{m+1}}}{ \E [q^2(X)]}} \\
	&= \sup_{q \in \cP(m), q\not\equiv 0} \frac{\|X\|_{L^{2m+2}}  \|q(X)\|_{L^{\frac{2m+2}{m}}}}{\|q(X)\|_{L^2}},
	\end{align*}
	where the first step uses Hölder's inequality.
	Using  standard concentration bounds for the standard Gaussian (or Fact~\ref{lem:hypercontractivity} with $p(x) = x$), we get that $\snorm{L^{2m+2}}{X}= O(\sqrt{m})$. 
	Gaussian Hypercontractivity (Fact~\ref{lem:hypercontractivity}) implies that for any polynomial of degree at most $m$ and $r > 2$, $\snorm{L^r}{q(X)} \leq (r-1)^{m/2} \snorm{L^2}{q(X)}$. For $r = (2m+2)/m$, we get that
	\begin{align*}
	\frac{\|q(X)\|_{L^\frac{2m+2}{m}}}{ \|q(X)\|_{L^2}  } \leq \left(\frac{2m+2}{m} -1 \right)^{\frac{m}{2}} = \left( 1 + \frac{2}{m}\right)^{\frac{m}{2}} \leq \exp(1).
	\end{align*}
	 Therefore, $B \geq C \sqrt{m}$ suffices for a sufficiently large constant $C$. 
\end{proof}

\subsection{Proof of Lemma~\ref{LemSupRatio}} %
\label{sec:auxiliary_details}

We first recall the result on the tails of Hermite polynomials.

\begin{lemma}[\cite{Krasikov2004}]
\label{PropTailBoundHermite}
Let $h_k$ be the $k$-th normalized probabilist's Hermite polynomial. Then $\max_{x \in \R} h_k^2(x) e^{-x^2/2} = O(k^{-1/6})$.
\end{lemma}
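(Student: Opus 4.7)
The plan is to deduce the claim directly from Krasikov's uniform sup-norm bound for the physicist's Hermite polynomials; once one fixes the normalization, the statement becomes a one-line change of variables. First I would recall from the preliminaries the identity $H_{e_k}(x) = 2^{-k/2} H_k(x/\sqrt{2})$ between the probabilist's polynomial $H_{e_k}$ and the physicist's polynomial $H_k$, together with $h_k = H_{e_k}/\sqrt{k!}$. Squaring and multiplying by $e^{-x^2/2}$ gives
\[
h_k^2(x)\, e^{-x^2/2} \;=\; \frac{1}{2^k\, k!}\, H_k^2\!\left(\tfrac{x}{\sqrt{2}}\right) e^{-x^2/2},
\]
and substituting $u = x/\sqrt{2}$ (so that $u$ ranges over all of $\R$ and $x^2/2 = u^2$) yields
\[
\sup_{x\in\R} h_k^2(x)\, e^{-x^2/2} \;=\; \frac{1}{2^k\, k!}\, \sup_{u\in\R} H_k^2(u)\, e^{-u^2}.
\]

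Next I would invoke the form of Krasikov's (2004) theorem that bounds the $L^2(\R)$-orthonormal Hermite functions $\phi_k(u) := (2^k k!\sqrt{\pi})^{-1/2} H_k(u)\, e^{-u^2/2}$ by $\max_{u\in\R} \phi_k^2(u) = O(k^{-1/6})$. Unpacking this normalization, Krasikov's bound is equivalent to $\sup_{u\in\R} H_k^2(u)\, e^{-u^2} = O(2^k\, k!\, k^{-1/6})$. Plugging this into the identity above, the factor $2^k k!$ cancels and we obtain $\sup_{x\in\R} h_k^2(x)\, e^{-x^2/2} = O(k^{-1/6})$, which is exactly the claim.

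Since the entire analytic content is imported from the cited paper, there is no real obstacle to overcome here; the only subtle point is verifying that Krasikov's constant is genuinely absolute (independent of $k$ and of $u$), which it is in the standard form of the bound. All of the deep work, namely establishing the uniform sup-norm estimate on Hermite functions via a monotone Sonine-type energy in the oscillatory region $u^2 \le 2k+1$ and a decay estimate in the exponential region, is done in Krasikov's original paper; the remaining step in our setting is purely bookkeeping between the two Hermite conventions used in this manuscript.
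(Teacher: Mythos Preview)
Your proposal is correct and follows essentially the same route as the paper: both arguments reduce the probabilist's statement to the physicist's one via the identity $h_k(x)=(2^k k!)^{-1/2}H_k(x/\sqrt{2})$ and then invoke Krasikov's sup bound on $H_k^2(u)e^{-u^2}$. The only cosmetic difference is that the paper quotes Krasikov's Theorem~1 in the form $\max_u H_k^2(u)e^{-u^2}=O\bigl(\sqrt{k}\,k^{-1/6}\binom{k}{k/2}k!\bigr)$ and then simplifies using $\binom{k}{k/2}2^{-k}=O(1/\sqrt{k})$, whereas you quote the equivalent orthonormal-Hermite-function version directly, which makes the $2^k k!$ cancellation immediate; the paper also separates out small $k$ explicitly, which you fold into the implied constant.
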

For completeness, we give an explicit calculation that translates the result of \cite{Krasikov2004} in our setting. 
\begin{proof}[Proof of Lemma~\ref{PropTailBoundHermite}]
We will split the analysis in two cases. First suppose the case when $k < 6$. As $h_k(\cdot)$ is a constant degree polynomial, we get that $\max_{x \in \R}h_k^2(x)\exp(-x^2/2)$ is a constant. For the rest of the proof, we will assume that $k \geq 6$.

For brevity, we will only consider the case where $k$ is even. The case  where $k$ is odd is similar.
Let $H_k(\cdot)$ be the physicist's Hermite polynomial.
Recall that we can relate $h_k(\cdot)$ with $H_k(\cdot)$ with the following change of variable: $H_k(x) = \sqrt{2^k k!}h_k(\sqrt{2} x)$.

\cite[Theorem 1]{Krasikov2004} implies the following:
\begin{align} \label{EqKrasikovEven}
\max_{x \in \R} \left((H_{k}(x))^2 e^{-x^2}\right) = O \left(\sqrt{k} k^{-1/6} \binom{k}{0.5k} k!\right) \;.
\end{align}
From Equation~\eqref{EqKrasikovEven} we obtain: 
\begin{align*}
\max_{x \in \R} 2^{k} k! h_{k}^2(\sqrt{2} x) e^{-x^2} = \max_{x \in \R} 2^{k}k! h_{k}^2(x) e^{-x^2/2} =O \left(\sqrt{k} k^{-1/6} \binom{k}{0.5k} k!\right)  \;\;.
\end{align*}
This implies the following:
\begin{align*}
\max_{x \in \R} h_{k}^2(x) e^{-x^2/2} = O \left( k^{-1/6} \sqrt{k} \binom{k}{0.5k} 2^{-k}\right)= O( k^{-1/6}),
\end{align*}
where we use that $\binom{k}{0.5k}2^{-k}/\sqrt{k} = O(1)$.
\end{proof}

\begin{proof}[Proof of Lemma~\ref{LemSupRatio}]
Let $h_i$ be the $i$-th normalized probabilist's Hermite polynomial. Since $r$ is a polynomial of degree at most $m$ and $\{h_i, i \in [m]\}$ form a basis for $\cP(m)$, we can represent $r(x) = \sum_{i=1}^m a_ih_i(x)$ for some $a_i \in \R$.
Using orthonormality of $h_i$ under the Gaussian measure, we get that $\E_{X\sim \cN(0,1)}[r^2(X)] = \sum_{i=1}^m a_i^2$. Since $r$ is a non-zero polynomial, we have that $\sum_{i=1}^m a_i^2 > 0$.
We thus have that
\begin{align*}
\sup_{r \in \cP(m), r\not\equiv 0} \frac{r^2(y)}{\E_{X\sim \cN(0,1)} [r^2(X)]} &= \sup_{a_1,\dots,a_m \in \R, \sum_{i=1}^m a_i^2 > 0 }\frac{\sum_{i=1}^m\sum_{j=1}^m a_ia_jh_i(y)h_j(y)}{\sum_{i=1}^m a_i^2}\\
&= \sup_{a_1,\ldots,a_m \in \R, \sum_{i=1}^m a_i^2 > 0 } \frac{\sqrt{\sum_{i=1}^m\sum_{j=1}^m a_i^2a_j^2} \sqrt{\sum_{i=1}^m\sum_{j=1}^m h_i^2(y)h_j^2(y)}}{\sum_{i=1}^m a_i^2} \\
&=  \sum_{i=1}^mh_i^2(y). 
\end{align*}
Therefore, we need to show that, for all $y\in\R$, $\sum_{i=1}^m \alpha_y h_i^2(y) \leq 1/2$ whenever $m \leq C/\sqrt{\alpha}$ for a sufficiently small constant $C>0$.
We will now split the analysis in two cases:

\paragraph{Case 1: $|y| \leq 1/\sqrt{\alpha}$.}
Using Lemma~\ref{PropTailBoundHermite} and the assumption that $|y|^2 \alpha \leq 1$, we can bound the desired expression as follows:
\begin{align*}
\max_{|y| \leq 1/\sqrt{\alpha}} \alpha_y	 h_i^2(y) &= \max_{|y| \leq 1/\sqrt{\alpha}} \sqrt{\alpha} \exp(y^2 \alpha/2 )\exp(-y^2/2)  h_i^2(y) \\
&\leq \sqrt{\alpha e} \sup_{y \in \R} \exp(-y^2/2)  h_i^2(y) \\
&= O(\sqrt{\alpha} i^{-1/6}).
\end{align*}
Therefore, we get the following bound on $\sum_{i}h_i^2(y)$. 
\begin{align*}
\sum_{i = 1}^m  \alpha_y h_i^2(y) = O\left(\sqrt{\alpha} \sum_{i=1}^m i^{-1/6} \right) = O (\sqrt{\alpha}m^{5/6})  \;.
\end{align*}
The last expression is less than $1/2$ when $m = O(1/\alpha^{3/5})$.

\paragraph{Case 2: $|y| \geq 1/\sqrt{\alpha}$.}

We will use rather crude bounds here.
We have the following explicit expression of $h_i(\cdot)$ (see, for example, \cite{andrews_askey_roy_1999,Szego:39}):
\begin{align*}
|h_i(x)| &= \left|\frac{He_i(x)}{\sqrt{i!}}\right| = \left|\sqrt{i!} \sum_{j=0}^{\lfloor i/2 \rfloor } \frac{(-1)^j}{j!(i - 2j)!} \frac{x^{i-2j}}{2^j}\right| = \left|\sqrt{i!}x^i \sum_{j=0}^{\lfloor i/2 \rfloor } \frac{(-1)^j}{(2j)!(i - 2j)!} x^{-2j} \frac{(2j)!}{j!2^j}\right|\\
		&\leq \sqrt{i!} |x|^i \sum_{k=0}^{i } \frac{i!}{k!(i - k)!} |x|^{-k} \leq (i|x|)^i (1 + |x|^{-1})^i = i^i(1 + |x|)^i.
\end{align*}
Therefore, we get the following relation for all $|y| > 1$, $\alpha< 0.5$, and $i \in \N$:
\begin{align*}
 \alpha_y	h_i^2(y) &=   \sqrt{\alpha}\exp(-y^2(1 - \alpha)/2)h_i^2(y)\\
				 &\leq  \sqrt{\alpha}\exp(-y^2/4)(2i)^i|y|^i\\
				 &= \sqrt{\alpha}\exp( -y^2/4 + i \log(2i|y|)).
\end{align*}
The expression above is at most $C'\sqrt{\alpha}$ for a constant $C'>0$ for all $|y| \geq c'\sqrt{i \log i}$ for a constant $c'>0$. The latter condition holds whenever $1/\sqrt{\alpha} \geq c'\sqrt{i \log i}$. It suffices that $i = O (1/\alpha^{0.9})$.
Overall, we get the following bound when $m =O(1/\alpha^{0.9})$:
\begin{align*}
\sup_{|y| > 1/\sqrt{\alpha}}\sum_{i=1}^m \alpha_yh_i^2(y) = O (  \sqrt{\alpha} m).
\end{align*}
The last expression is less than $1/2$ when $m \leq C/\sqrt{\alpha}$ for some constant $C > 0$. This completes the proof of Lemma~\ref{LemSupRatio}.
\end{proof}

\section{Hypothesis Testing Version of List-Decodable Linear Regression}\label{sec:appendix-reduction-statement}

\paragraph{Organization} 
We introduce Problem~\ref{prob:hard-hypothesis-testing}, which is a hypothesis testing problem
related to the search problem we discussed in Section~\ref{sec:proof}.
We first show the SQ-hardness of Problem~\ref{prob:hard-hypothesis-testing} in Theorem~\ref{thm:sq-hardness-hypothesis-testing}. 
In Section~\ref{sec:reduction_from_regression}, we give an efficient reduction from Problem~\ref{prob:hard-hypothesis-testing} 
to list-decodable linear regression, showing that Problem~\ref{prob:hard-hypothesis-testing} is indeed not harder 
than list-decodable linear regression. In Section~\ref{sec:hardness_again_low_degree_poly}, 
we also show the hardness of Problem~\ref{prob:hard-hypothesis-testing} against low-degree polynomial tests.

We begin by formally defining a hypothesis problem. 
\begin{definition}[Hypothesis testing] \label{def:hypothesis-testing-general}
    Let a distribution $D_0$ and a set $\mathcal{S} = \{ D_u \}_{u \in S}$ of distributions on $\R^d$. Let $\mu$ be a prior distribution on the indices $S$ of that family. We are given access (via i.i.d.\ samples or oracle) to an \emph{underlying} distribution where one of the two is true:
    \begin{itemize}
        \item $H_0$: The underlying distribution is $D_0$.
        \item $H_1$: First $u$ is drawn from $\mu$ and then the underlying distribution is set to be  $D_u$.
    \end{itemize}
    We say that a (randomized) algorithm solves the hypothesis testing problem if it succeeds with non-trivial probability (i.e.,  greater than $0.9$). 
\end{definition}

We now introduce the following hypothesis testing variant of the $(1{-\alpha})$-contaminated linear regression problem:
    \begin{problem} \label{prob:hard-hypothesis-testing}
        Let $\alpha \in (0,1/2)$, $\rho \in (0,1)$. Let $S$ be the set of $d$-dimensional nearly orthogonal vectors from Lemma~\ref{lem:orthogonal_vectors}. We are given access (via i.i.d.\ samples or oracle) to an \emph{underlying} distribution where one of the two is true:
        \begin{itemize}
            \item $H_0$: The underlying distribution is  $R = \cN(0,I_d) \times \cN(0,1/\alpha)$.
            \item $H_1$: First, a vector $v$ is chosen uniformly at random from $S$. The underlying distribution is set to be $E_v$, i.e., the ($1-\alpha$)-additively corrupted linear model of Definition~\ref{def:glr} with $\beta = \rho v$, $\sigma^2 = 1-\rho^2$, and a fixed noise distribution $N_v$ as specified in Lemma~\ref{lem:check_properties}.
        \end{itemize}
    \end{problem}
Using the reduction outlined in Lemma~\ref{lem:reduction}, it follows that $O(d/\alpha^3)$ samples suffice to solve Problem~\ref{prob:hard-hypothesis-testing} when $\sigma \leq O( \alpha/\sqrt{\log(1/\alpha)})$.
On the other hand, the following result shows an SQ lower bound of $d^{\poly{(1/\alpha)}}$. 

\begin{theorem}[SQ Hardness of Problem~\ref{prob:hard-hypothesis-testing}] \label{thm:sq-hardness-hypothesis-testing}
    Let $0<c<1/2$, $m \in \Z_+$ with $m \leq c_1/\sqrt{\alpha}$ for some sufficiently small constant $c_1>0$ and $d = m^{\Omega(1/c)}$. Every SQ algorithm that solves Problem~\ref{prob:hard-hypothesis-testing} either performs $2^{\Omega(d^{c/4})}$ queries or performs at least one query to $\mathrm{STAT}\left(\Omega(d)^{-(2m+1)(1/4-c/2)} e^{O(m)}/\sigma\right)$.
\end{theorem}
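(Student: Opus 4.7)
The strategy is to directly exploit the pairwise-correlation bounds already proven in the proof of Theorem~\ref{thm:main-formal}, combined with a standard SQ hypothesis-testing argument. In that proof we established that the family $\cD_R = \{E_v\}_{v \in S}$, with $|S| = 2^{\Omega(d^c)}$ coming from Lemma~\ref{lem:orthogonal_vectors}, is $(\gamma, b)$-correlated with respect to $R = \cN(0, I_d) \times \cN(0, 1/\alpha)$, where
$$\gamma = \Omega(d)^{-(2m+1)(1/2-c)} \, e^{O(m)}/\sigma^2, \qquad b = e^{O(m)}/\sigma^2.$$
In Problem~\ref{prob:hard-hypothesis-testing} the alternative distributions $E_v$ are exactly those constructed in Section~\ref{sec:proof} via Lemma~\ref{lem:check_properties}, so these bounds apply verbatim.

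The key per-query estimate I will use is this: for any bounded query $q:\R^{d+1}\to[-1,1]$, setting $\delta_v(q) := \E_{E_v}[q] - \E_R[q]$,
$$\E_{v \sim \mathrm{Unif}(S)}\big[\delta_v(q)^2\big] \;\leq\; \gamma + \frac{b}{|S|} \;\leq\; 2\gamma.$$
This follows from a standard Gram-matrix argument: writing $f_v := E_v/R - 1$ one has $\delta_v(q) = \dotp{q}{f_v}_{L^2(R)}$ and $\dotp{f_v}{f_{v'}}_{L^2(R)} = \chi_R(E_v, E_{v'})$, so $\sum_v \delta_v(q)^2$ is at most $\|q\|_{L^2(R)}^2$ times the operator norm of the $|S|\times|S|$ Gram matrix of the $f_v$'s, which is bounded by $b + |S|\gamma$ via Gershgorin. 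The second inequality uses that $|S|=2^{\Omega(d^c)}$ dwarfs $b/\gamma$ for $d$ sufficiently large. Chebyshev then yields $\pr_v[|\delta_v(q)|>\tau] \leq 2\gamma/\tau^2$ for every fixed $q$ and every $\tau>0$.

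To turn this into the claimed SQ lower bound I will model the algorithm as a decision tree of depth $Q$, each node labelled by a query and each edge by an admissible oracle response. Under $H_0$ a response at a node with query $q$ must lie in $[\E_R[q]-\tau, \E_R[q]+\tau]$; under $H_1$ (with hidden $v$) it must lie in $[\E_{E_v}[q]-\tau, \E_{E_v}[q]+\tau]$. These intervals overlap whenever $|\delta_v(q)|\leq 2\tau$, in which case an adversarial oracle can return a common value and the two transcripts are indistinguishable. Applying the per-query estimate and a union bound across the (at most) $Q$ queries on the realized root-to-leaf path, the probability over $v$ that the $H_0$- and $H_1$-transcripts diverge anywhere is at most $Q\gamma/(2\tau^2)$. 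For the algorithm to succeed with probability $\geq 0.9$ this must exceed a constant, giving either $\tau = O(\sqrt{\gamma}) = \Omega(d)^{-(2m+1)(1/4-c/2)}e^{O(m)}/\sigma$ (the tolerance branch) or $Q \geq \Omega(\tau^2/\gamma)$. Under the hypothesis $d = m^{\Omega(1/c)}$ together with $m\leq c_1/\sqrt{\alpha}$, a short exponent calculation shows $1/\gamma \geq d^{\Omega(m)} \geq 2^{\Omega(d^{c/4})}$, yielding the stated query lower bound.

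The principal obstacle I expect is the adaptive portion of the decision-tree argument. The per-query Chebyshev bound is immediate, but converting it to a bound on the full adaptive transcript requires either a careful coupling between the $H_0$- and $H_1$-transcripts (so that the union bound remains valid when later queries depend on earlier responses) or a direct black-box appeal to an existing adaptive hypothesis-testing SQ lemma, in the spirit of the arguments of~\cite{FeldmanGRVX17} and~\cite{DKS17-sq}. A secondary, purely computational issue is producing the precise exponent $c/4$ in the query bound, which requires tracking how the choice $d = m^{\Omega(1/c)}$ balances the quantities $\log(1/\gamma) \sim m\log d$ and $d^{c/4}$.
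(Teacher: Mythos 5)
Your overall strategy (reduce to the pairwise-correlation bounds from the proof of Theorem~\ref{thm:main-formal} and run a decision-tree/transcript argument) is different from the paper's route, which instead invokes the statistical dimension $\mathrm{SDA}$ of~\cite{brennan2020statistical} (Definition~\ref{def:SDA}), proves the $\mathrm{SDA}$ lower bound in Lemma~\ref{lem:SDA-bound} and Corollary~\ref{cor:sda-hyp-testing}, and then applies Theorem~\ref{thm:Sq_hardness_from_SDA} as a black box. That difference would be fine, but your final counting step has a genuine gap. Your per-query estimate averages over \emph{all} of $S$: it yields $\E_{v}[\delta_v(q)^2]\leq \gamma + b/|S|$, hence the number of $v\in S$ ``detected'' by a single query of tolerance $\tau$ is at most $|S|\gamma/\tau^2 + b/\tau^2$. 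At the tolerance threshold claimed in the theorem, $\tau \approx \sqrt{\gamma}$, this is $\approx |S|$ --- a single query may, for all this bound shows, detect essentially every $v$. Concretely, your union bound gives only the disjunction ``either some query has tolerance $O(\sqrt{\gamma})$ or $Q \geq \Omega(\tau_{\min}^2/\gamma)$,'' and when $\tau_{\min}$ is just above $\sqrt{\gamma}$ the second branch reads $Q\geq \Omega(1)$, which is vacuous. To get $Q \geq 2^{\Omega(d^{c/4})}$ out of your inequality you would have to raise the tolerance threshold to $\sqrt{\gamma}\cdot 2^{\Omega(d^{c/4})}$, which is far weaker than the stated $\Omega(d)^{-(2m+1)(1/4-c/2)}e^{O(m)}/\sigma$; note $e^{O(m)}$ cannot absorb $2^{\Omega(d^{c/4})}$ since $d^{c/4}\gg m$ under $d=m^{\Omega(1/c)}$. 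Your closing computation ``$1/\gamma \geq 2^{\Omega(d^{c/4})}$'' does not rescue this: your argument bounds $Q$ by $\tau^2/\gamma$, not by $1/\gamma$.

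The missing idea is the self-improving count used in~\cite{FeldmanGRVX17} (behind Lemma~\ref{cor:cor3-12-FGR}) and, in the form the paper actually uses, in the conditional expectation of Definition~\ref{def:SDA}. If $N$ vectors satisfy $\delta_v(q)^2 > \gamma+\gamma'$, one restricts the Gram matrix to \emph{that subset}: $N(\gamma+\gamma') < \sum_{v\in \text{subset}}\delta_v(q)^2 \leq b + (N-1)\gamma$, so $N < (b-\gamma)/\gamma'$, a bound with no $|S|$ factor. Each query then eliminates at most $\approx b/\gamma = d^{\Omega(m)}$ candidates rather than $\approx|S|$, and $Q \geq |S|\gamma/b = 2^{\Omega(d^c)}d^{-O(m)} = 2^{\Omega(d^{c/4})}$ while keeping the tolerance threshold at $\sqrt{2\gamma}$. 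Equivalently, in the paper's $\mathrm{SDA}$ formulation, the conditioning on events $\cE$ with $\pr[\cE]\geq 1/q^2$ in Lemma~\ref{lem:SDA-bound} caps the diagonal contribution at $q^2 b/|S|$ instead of $b/|S|$ blowing up to $b$ on small sets, which is exactly what lets both branches of the disjunction hold simultaneously. Your treatment of adaptivity (the concern you flagged as the principal obstacle) is actually the unproblematic part --- simulating the $H_0$ transcript handles it --- whereas the counting, which you treated as routine, is where the argument breaks.
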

We note that the lower bound on the (appropriate) statistical dimension implies SQ hardness of the (corresponding) hypothesis testing problem. As the Problem~\ref{prob:hard-hypothesis-testing} differs slightly from the kind of hypothesis testing problems considered in \cite{FeldmanGRVX17}, we provide the proof of Theorem~\ref{thm:sq-hardness-hypothesis-testing} in Section~\ref{sec:hardness_hyp_testing_sq}, where we introduce the relevant statistical dimension  from \cite{brennan2020statistical} (Definition~\ref{def:SDA} in this paper). %

\subsection{Hardness of Problem~\ref{prob:hard-hypothesis-testing} in the SQ Model}
\label{sec:hardness_hyp_testing_sq}

We  need the following variant of the statistical dimension from \cite{brennan2020statistical}, which is closely related to the hypothesis testing problems considered in this section. Since this is a slightly different definition from the statistical dimension ($\mathrm{SD}$) used so far, we will assign the distinct notation ($\mathrm{SDA}$) for it.

\paragraph{Notation} For $f:\R \to \R$, $g:\R \to \R$ and a distribution $D$, we define the inner product $\langle f,g \rangle_D = \E_{X \sim D}[f(X)g(X)]$ and the norm $\snorm{D}{f} = \sqrt{\langle f,f \rangle_D }$.

\begin{definition}[Statistical Dimension] \label{def:SDA}
    For the hypothesis testing problem of Definition~\ref{def:hypothesis-testing-general}, we define the \emph{statistical dimension} $\mathrm{SDA}(\mathcal{S},\mu, n)$ as follows:
    \begin{align*}
        \mathrm{SDA}(\mathcal{S},\mu, n) = \max \left\lbrace q \in \mathbb{N}  :  \E_{u,v \sim \mu}[| \langle  \bar{D}_u,\bar{D}_v  \rangle_{D_0} - 1 | \; | \; \cE] \leq \frac{1}{n} \; \text{for all events $\cE$ s.t. } \pr_{u,v \sim \mu}[\cE] \geq \frac{1}{q^2} \right\rbrace \;.
    \end{align*}
We will omit writing $\mu$ when it is clear from the context.
\end{definition}

\begin{theorem}[Theorem A.5 of~\cite{brennan2020statistical}]
\label{thm:Sq_hardness_from_SDA}
    Let $\cS = \{D_u\}_{u \in S}$ vs. $D_0$ be a hypothesis testing problem with prior $\mu$ on $\cS$. If $\mathrm{SDA}(\cS,\mu,3/t) > q$, then every SQ algorithm that solves the hypothesis testing problem
    either makes at least $q$ queries, or makes at least one query to $\mathrm{STAT}(\sqrt{t})$.
\end{theorem}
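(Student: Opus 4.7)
The plan is a classical SQ adversary argument: against any algorithm $\mathcal{A}$ that purportedly solves the testing problem using $q_0$ queries, each of tolerance strictly greater than $\sqrt{t}$, the adversary responds to every query $f \colon \R^d \to [-1,1]$ with the ``null'' value $\E_{D_0}[f]$, and I will show this is simultaneously a valid $\mathrm{STAT}(\sqrt{t})$ response under $D_u$ for most $u \sim \mu$, forcing $q_0 \geq \Omega(q)$. One may assume $\mathcal{A}$ is deterministic (by a standard conditioning on internal randomness), so the sequence of queries $f_1,\dots,f_{q_0}$ elicited by the null responses is a fixed sequence determined by $\mathcal{A}$ alone.

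The heart of the argument is a per-query decorrelation bound: for any query $f$ the ``informative'' sets
\[
A_f^{\pm} := \bigl\{u \in S : \pm(\E_{D_u}[f] - \E_{D_0}[f]) > \sqrt{t}\bigr\}
\]
satisfy $\mu(A_f^\pm) < 1/q$. Suppose toward contradiction that $\mu(A_f^+) \geq 1/q$. Then the product event $\cE := A_f^+ \times A_f^+ \subseteq S \times S$ has $(\mu\otimes\mu)(\cE) \geq 1/q^2$, so the hypothesis $\mathrm{SDA}(\cS,\mu,3/t) > q$ yields
\[
\E_{(u,v) \sim \mu\otimes\mu}\!\left[\bigl|\langle \bar D_u, \bar D_v \rangle_{D_0} - 1\bigr| \,\Big|\, \cE\right] \leq t/3.
\]
Introducing the averaged relative density $\bar D_{A^+} := \E_{u \in A_f^+}[\bar D_u]$, the (non-negative) quantity $\E_{D_0}[(\bar D_{A^+} - 1)^2]$ equals $\E_{u,v \in A_f^+}[\langle \bar D_u, \bar D_v\rangle_{D_0} - 1]$ and is thus at most $t/3$. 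Cauchy--Schwarz with $\|f\|_{D_0}^2 \leq \|f\|_\infty^2 \leq 1$ then gives
\[
\bigl(\E_{u \in A_f^+}[\E_{D_u}[f] - \E_{D_0}[f]]\bigr)^2 = \langle \bar D_{A^+} - 1, f\rangle_{D_0}^2 \leq t/3,
\]
contradicting the pointwise bound $\E_{D_u}[f] - \E_{D_0}[f] > \sqrt{t}$ that holds on $A_f^+$; the case of $A_f^-$ is symmetric.

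A union bound then produces $\mu\bigl(\bigcup_{i\le q_0}(A_{f_i}^+ \cup A_{f_i}^-)\bigr) \leq 2q_0/q$. For any $u$ outside this union, the null response sequence is a valid $\mathrm{STAT}(\sqrt{t})$ transcript under $D_u$ (the tolerance strictly exceeds $\sqrt{t}$ by hypothesis), so the algorithm's output distribution conditional on such $u$ coincides with its output distribution under $H_0$, where by correctness it answers ``$H_1$'' with probability at most $0.1$. Averaging over $u \sim \mu$ yields
\[
0.9 \leq \pr[\mathcal{A} \text{ outputs } H_1 \mid H_1] \leq 0.1 + 0.9\cdot\frac{2 q_0}{q},
\]
which forces $q_0 \geq (4/9)\,q$, matching the theorem up to an absolute constant that can be absorbed into a constant rescaling of $q$. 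The main technical subtlety is invoking $\mathrm{SDA}$ through a \emph{pair} event: the definition is phrased on $\mu\otimes\mu$, but a single-variable statement is needed for the union bound. The product choice $\cE = A_f^+ \times A_f^+$ (and the matching threshold $1/q$ on the marginal mass) is precisely what converts the pairwise-correlation control of $\mathrm{SDA}$ into the required single-variable decorrelation bound.
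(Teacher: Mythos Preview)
The paper does not prove this statement: it is quoted verbatim as Theorem~A.5 of \cite{brennan2020statistical} and used as a black box, so there is no in-paper proof to compare against. What you have written is essentially the standard SQ adversary argument behind that result, and the core of it is correct: the product-event trick $\cE = A_f^{+}\times A_f^{+}$ is exactly how one converts the pairwise SDA control into the single-coordinate bound $\mu(A_f^{\pm})<1/q$, and the Cauchy--Schwarz step is clean.

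Two small points are worth noting. First, the line ``one may assume $\mathcal{A}$ is deterministic'' is slightly at odds with your final inequality $0.9 \le 0.1 + 0.9\cdot 2q_0/q$, which is really an average over internal randomness (a truly deterministic algorithm would output $H_0$ with probability~$1$ under the null transcript, giving the cleaner bound $0.9 \le 2q_0/q$); either way the conclusion is the same up to constants. Second, as you yourself flag, you obtain $q_0 \ge (4/9)q$ rather than $q_0 \ge q$. This constant loss is genuine for the argument as written (splitting into $A_f^{+}$ and $A_f^{-}$ costs a factor of~$2$, and the success threshold $0.9$ costs another); it cannot literally be ``absorbed into a rescaling of $q$'' without also changing the SDA hypothesis, so what you have actually proved is the theorem with $q$ replaced by $\Theta(q)$. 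For how the result is used in this paper (where $q$ is exponentially large and constants are irrelevant), that is entirely adequate.
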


In order to prove Theorem~\ref{prob:hard-hypothesis-testing}, we will prove a lower bound on the $\mathrm{SDA}$ of Problem~\ref{prob:hard-hypothesis-testing}.  
As we will show later, Problem~\ref{prob:hard-hypothesis-testing} is a special case of the following hypothesis testing problem:
\begin{problem}[Non-Gaussian Component Hypothesis Testing] \label{prob:non-gaussian-component-testing}
     Let $R$ be the joint distribution $R$ over the pair $(X,y) \in \R^{d+1}$ where $X\sim \mathcal{N}(0,I_d)$ and $y \sim R(y)$ independently of $X$. Let $E_v$ be the joint distribution over pairs $(X,y) \in \R^{d+1}$ where the marginal on $y$ is again $R(y)$ but the conditional distribution $E_v(x|y)$ is of the form $P_{A_y,v}$ (with $P_{A_y,v}$ as in Definition~\ref{def:high-dim-distribution}). Define $\cS = \{E_v \}_{v \in S}$ for $S$ being the set of $d$-dimensional nearly orthogonal vectors from Lemma~\ref{lem:orthogonal_vectors} and let the hypothesis testing problem  be distinguishing between $R$ vs. $\mathcal{S}$ with prior $\mu$ being the uniform distribution on $S$.
\end{problem}

The following lemma translates the $(\gamma,\beta)$-correlation of $\cS$ to a lower bound for the statistical dimension of the hypothesis testing problem.
The proof is very similar to that of Corollary 8.28 of~\cite{brennan2020statistical} but it is given below for completeness.
\begin{lemma} \label{lem:SDA-bound}
    Let $0<c<1/2$ and $d,m \in \Z_+$ such that $d = m^{\Omega(1/c)}$. Consider the hypothesis testing problem of Problem~\ref{prob:non-gaussian-component-testing} where for every $y \in \R$ the distribution $A_y$ matches the first $m$ moments with $\cN(0,1)$ and $\E_{y \sim R(y)}[\chi^2(A_y,\mathcal{N}(0,1))] < \infty$. Then, for any $q \geq 1$,
    \begin{align*}
            \mathrm{SDA}\left( \mathcal{D}, \frac{\Omega(d)^{(m+1)(1/2-c)}}{\E_{y \sim R(y)}[\chi^2(A_y,\mathcal{N}(0,1))]\left(\frac{q^2}{2^{\Omega(d^{c/2})}} + 1\right)}  \right) \geq q \;.
    \end{align*}
\end{lemma}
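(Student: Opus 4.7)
Let $b := \E_{y \sim R(y)}[\chi^2(A_y,\mathcal{N}(0,1))]$ and let $\mu$ be the uniform distribution on $S$. The plan is to bound the pairwise correlation $\chi_R(E_u,E_v)$ separately on the diagonal $\{u=v\}$ and off-diagonal $\{u\ne v\}$, and then argue that, under any event $\cE$ of probability at least $1/q^2$, the diagonal contribution is negligible because the prior is uniform on the exponentially large set $S$.

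First, I would combine the tools already established in the paper to control $|\chi_R(E_u,E_v)|$. By Lemma~\ref{lem:form_of_conditional_distr} (which applies since both $E_u$ and $E_v$ share the marginal $R(y)$ by Property~\ref{prop:E1}) together with the fact that $E_v(x\mid y)=P_{A_y,v}$ (Property~\ref{prop:E2}), we have
\[
|\chi_R(E_u,E_v)| \;=\; \E_{y\sim R(y)}\bigl[\chi_{\cN(0,I_d)}(P_{A_y,u},P_{A_y,v})\bigr].
\]
For $u\ne v$ in $S$, Lemma~\ref{lem:lemma-3-4-DKS17} and the near-orthogonality bound $|u^T v|\le O(d^{c-1/2})$ from Lemma~\ref{lem:orthogonal_vectors} give
\[
|\chi_R(E_u,E_v)| \;\le\; |u^T v|^{m+1}\, b \;\le\; \Omega(d)^{-(m+1)(1/2-c)}\, b.
\]
For $u=v$, the same decomposition yields $|\chi_R(E_v,E_v)|=\E_{y\sim R(y)}[\chi^2(P_{A_y,v},\cN(0,I_d))] = b$.

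Next, fix any event $\cE$ with $\pr_{u,v\sim\mu}[\cE]\ge 1/q^2$. I would split the conditional expectation as
\[
\E_{u,v}\bigl[|\chi_R(E_u,E_v)|\,\big|\,\cE\bigr] \;\le\; \frac{1}{\pr[\cE]}\Bigl(\E[|\chi_R(E_u,E_v)|\Ind[\cE\cap\{u\ne v\}]] + \E[|\chi_R(E_u,E_v)|\Ind[\cE\cap\{u=v\}]]\Bigr).
\]
The off-diagonal term is at most $\Omega(d)^{-(m+1)(1/2-c)}\,b\cdot\pr[\cE]$, and the diagonal term is at most $b\cdot\pr[u=v]=b/|S|\le b\cdot 2^{-\Omega(d^c)}$ since $\mu$ is uniform on $|S|\ge 2^{\Omega(d^c)}$ vectors (Lemma~\ref{lem:orthogonal_vectors}). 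Dividing by $\pr[\cE]\ge 1/q^2$ yields
\[
\E_{u,v}\bigl[|\chi_R(E_u,E_v)|\,\big|\,\cE\bigr] \;\le\; \Omega(d)^{-(m+1)(1/2-c)}\,b \;+\; \frac{b\,q^2}{2^{\Omega(d^c)}}.
\]

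Finally, I would match this to the target bound $1/n$ with $n=\Omega(d)^{(m+1)(1/2-c)}/\bigl(b(q^2/2^{\Omega(d^{c/2})}+1)\bigr)$. Factoring out $b\,\Omega(d)^{-(m+1)(1/2-c)}$, the claim reduces to showing
\[
\Omega(d)^{(m+1)(1/2-c)}\cdot 2^{-\Omega(d^c)} \;\le\; 2^{-\Omega(d^{c/2})},
\]
which, after taking logarithms, becomes $(m+1)(1/2-c)\log d \le \Omega(d^c)-\Omega(d^{c/2})$. This is where the hypothesis $d=m^{\Omega(1/c)}$ enters: it forces $d^c$ to dominate $m\log d$ by a polynomial factor, so $d^c$ absorbs both $(m+1)(1/2-c)\log d$ and the smaller $d^{c/2}$ term. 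The main (minor) obstacle is choosing the implicit constants consistently so that this last inequality holds with room to spare; everything else is a direct application of already-stated lemmas.
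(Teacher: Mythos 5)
Your proposal is correct and follows essentially the same route as the paper's proof: the same diagonal/off-diagonal split of the conditional expectation, the same use of Lemmas~\ref{lem:form_of_conditional_distr}, \ref{lem:lemma-3-4-DKS17}, and \ref{lem:orthogonal_vectors} to get the $(\gamma,\beta)$-correlation bounds, the same bound $\pr[u=v\mid\cE]\le q^2/|S|$ for the diagonal term, and the same final algebraic reduction in which $d=m^{\Omega(1/c)}$ lets $2^{\Omega(d^{c/2})}$ absorb the factor $\Omega(d)^{(m+1)(1/2-c)}$.
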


    \begin{proof}
    The first part is to calculate the correlation of the set $\cS$ exactly as we did in the proof of Theorem~\ref{thm:main-formal}. By Lemma~\ref{lem:orthogonal_vectors}, Lemma~\ref{lem:lemma-3-4-DKS17} and Lemma~\ref{lem:form_of_conditional_distr} we know that the set $\mathcal{S}$ is $(\gamma,\beta)$-correlated with $\gamma = \Omega(d)^{-(m+1)(1/2-c)} \E_{y \sim R(y)}[\chi^2(A_y,\mathcal{N}(0,1))]$ and $\beta = \E_{y \sim R(y)}[\chi^2(A_y,\mathcal{N}(0,1))]$. 
    
    We next calculate the SDA according to Definition~\ref{def:SDA}. We denote by $\bar{E}_{v}$ the ratios of the density of $E_v$ to the density of $R$. Note that the quantity  $\langle \bar{E}_{u},\bar{E}_{,v} \rangle - 1$ used there is equal to $\langle \bar{E}_{u} - 1,\bar{E}_{v}  - 1\rangle$. Let $\cE$ be an event that has $\pr_{u,v \sim \mu}[\cE] \geq 1/q^2$. For $d$ sufficiently large we have that
    \begin{align*}
        \E_{u,v \sim \mu} [| \langle \bar{E}_{u} ,\bar{E}_{v} \rangle - 1 | \cE ] &\leq \min\left( 1,\frac{1}{|\mathcal{S}_{}| \pr[\cE]} \right) \E_{y \sim R(y)}[\chi^2(A_y,\mathcal{N}(0,1))] \\
        &+\max\left( 0,1-\frac{1}{|\mathcal{S}_{}| \pr[\cE]} \right) \frac{\E_{y \sim R(y)}[\chi^2(A_y,\mathcal{N}(0,1))]}{\Omega(d)^{(m+1)(1/2-c)}} \\
         &\leq \E_{y \sim R(y)}[\chi^2(A_y,\mathcal{N}(0,1))] \left( \frac{q^2}{2^{\Omega(d^c)}}  + \frac{1}{\Omega(d)^{(m+1)(1/2-c)}}\right) \\
         &= \E_{y \sim R(y)}[\chi^2(A_y,\mathcal{N}(0,1))] \frac{q^2 \Omega(d)^{(m+1)(1/2-c)} + 2^{\Omega(d^c)}   }{2^{\Omega(d^c)} \Omega(d)^{(m+1)(1/2-c)}}\\
         &= \E_{y \sim R(y)}[\chi^2(A_y,\mathcal{N}(0,1))] \left(  \frac{\Omega(d)^{(m+1)(1/2-c)}}{q^2 \Omega(d)^{(m+1)(1/2-c)} /2^{\Omega(d^c)} + 1} \right)^{-1} \\
         &= \E_{y \sim R(y)}[\chi^2(A_y,\mathcal{N}(0,1))] \left( \frac{\Omega(d)^{(m+1)(1/2-c)}}{q^2 /2^{\Omega(d^{c/2})} + 1} \right)^{-1} 
        \;,
    \end{align*}
     where the first inequality uses that $\pr[u=v | \cE] = \pr[u=v , \cE]/\pr[\cE]$ and bounds the numerator in two different ways: $\pr[u=v , \cE]/\pr[\cE] \leq \pr[u=v ]/\pr[\cE] = 1/(|\mathcal{D}_{}| \pr[\cE])$ and $\pr[u=v , \cE]/\pr[\cE] \leq \pr[\cE]/\pr[\cE] =1$.
\end{proof}    
We note that the lemma above and Theorem~\ref{thm:Sq_hardness_from_SDA} show SQ hardness of Problem~\ref{prob:non-gaussian-component-testing}. In the remainder of this section, we will apply these results to Problem~\ref{prob:hard-hypothesis-testing}.

\begin{corollary}
\label{cor:sda-hyp-testing}
    Let $0<c<1/2$, $m \in \Z_+$ with $m \leq c_1/\sqrt{\alpha}$ for some sufficiently small constant $c_1>0$ and $d = m^{\Omega(1/c)}$. Consider the hypothesis testing problem of Problem~\ref{prob:hard-hypothesis-testing}. Then, for any $k<{d^{c/4}}$:
    \begin{align*}
            \mathrm{SDA}\left( \mathcal{D}, \frac{\Omega(d)^{(2m+1)(1/2-c)}}{e^{O(m)}/(1-\rho^2)}  \right) \geq 100^k \;.
    \end{align*}
\end{corollary}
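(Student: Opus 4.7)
The plan is to show that Problem~\ref{prob:hard-hypothesis-testing} fits the template of Problem~\ref{prob:non-gaussian-component-testing} and then apply Lemma~\ref{lem:SDA-bound} with carefully chosen parameters. Recall from the construction in Section~\ref{sec:proof} that the reference distribution is $R = \cN(0,I_d) \times \cN(0,1/\alpha)$, and under $E_v$ the marginal $E_v(y)$ coincides with $R(y)$ (Property~\ref{prop:E1}), while the conditional distribution $E_v(x \mid y)$ takes the form $P_{A_y, v}$, where $A_y$ is the univariate distribution matching the first $2m$ moments of $\cN(0,1)$ constructed via Theorem~\ref{ThmHardDist_1} and Lemma~\ref{lem:existence_of_A_y}. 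Thus the family $\{E_v\}_{v \in S}$ of Problem~\ref{prob:hard-hypothesis-testing} is an instance of Problem~\ref{prob:non-gaussian-component-testing} in which the distributions $A_y$ match $2m$ moments with $\cN(0,1)$.

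Next I would quantify the hypothesis $\E_{y \sim R(y)}[\chi^2(A_y, \cN(0,1))] < \infty$ needed by Lemma~\ref{lem:SDA-bound}. Invoking Lemma~\ref{lem:existence_of_A_y} with $B = C_2 \sqrt{m}$ and integrating the resulting pointwise bound against $R(y) = \cN(0, 1/\alpha)$ exactly as in the proof of Theorem~\ref{thm:main-formal} yields
\[
\E_{y \sim R(y)}[\chi^2(A_y, \cN(0,1))] \;\leq\; \frac{O(e^{B^2/2})}{1-\rho^2} + \int_{\R} \frac{\alpha\, O(e^{y^2(\alpha-1/2)})}{1-\rho^2}\,\sqrt{\alpha}\, e^{-y^2 \alpha /2}\,\dy \;\leq\; \frac{e^{O(m)}}{1-\rho^2},
\]
where the Gaussian integral in the second term is $O(\sqrt{\alpha})$ and hence negligible. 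The condition $m \leq c_1/\sqrt{\alpha}$ is precisely what is needed for Theorem~\ref{ThmHardDist_1} (and hence Lemma~\ref{lem:existence_of_A_y}) to apply.

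Now I would apply Lemma~\ref{lem:SDA-bound} with the moment parameter set to $2m$ (since our $A_y$ matches $2m$ moments rather than $m$) and with $q = 100^k$. This yields
\[
\mathrm{SDA}\!\left(\mathcal{D}, \frac{\Omega(d)^{(2m+1)(1/2-c)}}{(e^{O(m)}/(1-\rho^2))\left(\dfrac{100^{2k}}{2^{\Omega(d^{c/2})}} + 1\right)}\right) \;\geq\; 100^k.
\]
The final step, which is where the restriction $k < d^{c/4}$ enters, is to observe that $100^{2k} = 2^{O(k)} = 2^{O(d^{c/4})}$ is dominated by $2^{\Omega(d^{c/2})}$ once $d$ is large enough (which is ensured by $d = m^{\Omega(1/c)}$). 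Consequently $100^{2k}/2^{\Omega(d^{c/2})} = o(1)$, the parenthetical factor collapses to $O(1)$, and the denominator simplifies to $e^{O(m)}/(1-\rho^2)$, giving the claimed bound.

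The main technical checkpoint is bookkeeping, not difficulty: one must be careful to (i) replace $m$ by $2m$ everywhere when invoking Lemma~\ref{lem:SDA-bound}, since our family matches $2m$ moments with $\cN(0,1)$, and (ii) verify that the $\chi^2$-divergence estimate from Lemma~\ref{lem:existence_of_A_y} indeed integrates to $e^{O(m)}/(1-\rho^2)$ under $R(y) = \cN(0, 1/\alpha)$; this is exactly the calculation already carried out in the proof of Theorem~\ref{thm:main-formal}. Once these two points are in place, the corollary is an immediate specialization of Lemma~\ref{lem:SDA-bound}.
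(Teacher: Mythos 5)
Your proposal is correct and follows essentially the same route as the paper: identify Problem~\ref{prob:hard-hypothesis-testing} as an instance of Problem~\ref{prob:non-gaussian-component-testing}, plug the $2m$-moment matching and the bound $\E_{y \sim R(y)}[\chi^2(A_y,\cN(0,1))] \leq e^{O(m)}/(1-\rho^2)$ from Lemma~\ref{lem:existence_of_A_y} into Lemma~\ref{lem:SDA-bound}, and use $k < d^{c/4}$ to make the parenthetical factor $O(1)$. The only cosmetic difference is your choice of $q = 100^k$ versus the paper's $q = \sqrt{2^{\Omega(d^{c/2})}}$; both collapse the denominator the same way.
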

\begin{proof}
    We note that Problem~\ref{prob:hard-hypothesis-testing} is a special case of Problem~\ref{prob:non-gaussian-component-testing} (see Fact~\ref{fact:inliers_conditional_dist} and Lemma~\ref{lem:check_properties} which show that the conditional distributions are of the form $P_{A_y,v}$).  In Lemma~\ref{lem:SDA-bound} we use $q = \sqrt{2^{\Omega(d^{c/2})} (n/n')}$ with $n'=n  = \frac{\Omega(d)^{(2m+1)(1/2-c)}}{\E_{y \sim R(y)}[\chi^2(A_y,\mathcal{N}(0,1))]}$ to get that $\mathrm{SDA}(\cD, n) > 100^k$ for $k<{d^{c/4}}$. The first part of Lemma~\ref{lem:existence_of_A_y} states that the distributions $A_y$'s match the first $2m$ moments with $\cN(0,1)$ for $m\leq c_1/\sqrt{\alpha}$ and the second part implies that $\E_{y \sim R(y)}[\chi^2(A_y,\mathcal{N}(0,1))] = O(e^{m})/(1-\rho^2)$. This completes the proof.
\end{proof}

We conclude by noting the hardness of Problem~\ref{prob:non-gaussian-component-testing} and thus Problem~\ref{prob:hard-hypothesis-testing} in the SQ model. The proof of Theorem~\ref{thm:sq-hardness-hypothesis-testing} follows from Corollary~\ref{cor:sda-hyp-testing} and Theorem~\ref{thm:Sq_hardness_from_SDA}.

    \subsection{Reduction of Hypothesis Testing to List-Decodable Linear Regression}
\label{sec:reduction_from_regression}
    We now show that any list-decoding algorithm for robust linear regression can be efficiently used to solve Problem~\ref{prob:hard-hypothesis-testing}, that is, hypothesis testing efficiently reduces to list-decodable estimation.
    For a list $\cL$ and $i \in [|\cL|]$, we use $\cL(i)$ to denote the $i$-th element of $\cL$. 
    \begin{lemma} \label{lem:reduction}
        Let $d \in \Z_+$ with $d = 2^{\Omega(1/(1/2-c))}$. Consider the $(1{-\alpha})$-corrupted linear regression model of Definition~\ref{def:glr} with $\beta= \rho v$ for $v \in \cS^{d-1}$,
         $\rho \in (0,1)$, $\sigma^2 = 1-\rho^2$. 
         There exists an algorithm \textsc{List\_Regression\_To\_Testing} that, given a 
        list-decoding algorithm $\cA$ with the guarantee of returning a list $\cL$ of candidate vectors such that for some $i \in \{1,\ldots, |\cL| \}$, $\|\cL(i) -\beta \|_2 \leq \rho/4$, solves the hypothesis testing Problem~\ref{prob:hard-hypothesis-testing}
        with probability at least $1 - |\cL|^2 e^{-\Omega(d^{2c})}$. The running time of this reduction is quadratic in $|\cL|$.
    \end{lemma}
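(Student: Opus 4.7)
The plan is to use the list-decoding algorithm $\cA$ as a black box inside a hypothesis test, exploiting its guarantee that under $H_1$ the returned list $\cL$ must contain a candidate $\widehat\beta$ within $\rho/4$ of the true regressor $\beta=\rho v$. The reduction first partitions the input samples into two halves $T_1,T_2$; it runs $\cA$ on $T_1$ to produce $\cL$, and then, for each $\widehat\beta\in\cL$ whose norm lies in $[3\rho/4,5\rho/4]$, it performs a verification test on $T_2$ to check whether $\widehat\beta$ is compatible with the conjectured regression structure. The algorithm outputs $H_1$ iff at least one candidate passes the verification.

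For the verification I would use a statistic $T(\widehat\beta)$ whose expectation under the two hypotheses is provably separated and which can be computed in polynomial time. A natural choice is a truncated empirical likelihood ratio $T(\widehat\beta) = \tfrac{1}{|T_2|}\sum_{i\in T_2}\min\{M,\, E_{\widehat v}(X_i,y_i)/R(X_i,y_i)\}$ with $\widehat v=\widehat\beta/\|\widehat\beta\|$; its expectation under $H_0$ is (after mild truncation correction) at most $1$, while under $H_1$ with correct $\widehat v$ it is approximately $1+\chi^2(E_v,R)$, and the latter gap is controlled from below by $e^{O(m)}/(1-\rho^2)$ via Lemma~\ref{lem:existence_of_A_y}. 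Equivalently, one could use a Hermite-polynomial test of degree strictly exceeding the moment-matching degree $2m$, exploiting that $A_y$ agrees with $\cN(0,1)$ only in the first $2m$ moments. In either case, after truncation the statistic is bounded, so Hoeffding/Bernstein concentration yields $\pr_{H_0}[T(\widehat\beta)>\tau]\le e^{-\Omega(|T_2|)}$ for any fixed $\widehat\beta$ independent of $T_2$. Taking $|T_2|=\Theta(d^{2c})$ gives failure $e^{-\Omega(d^{2c})}$ per fixed candidate; a union bound over the $|\cL|$ candidates (with subleading $|\cL|$ factors absorbed from pairing in the concentration analysis) yields the claimed $|\cL|^2 e^{-\Omega(d^{2c})}$ overall failure probability.

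For correctness under $H_1$, the guaranteed $\widehat\beta\in\cL$ with $\|\widehat\beta-\beta\|_2\le\rho/4$ has $\widehat v^T v\ge 1-O(1)$, so that by the continuous dependence of the density $E_{\widehat v}$ (and of the Hermite test statistic) on $\widehat v$, the verification still fires on this candidate; under $H_0$ the test fails for every candidate, so the algorithm correctly outputs $H_0$. The main obstacles I anticipate are twofold. First, establishing quantitative stability of the verification statistic under a perturbation of direction of size $\rho/4$, so that the false-negative probability under $H_1$ is small even when $\widehat\beta$ is only approximately $\beta$; this can be handled by exploiting the explicit form of $A_y$ together with the bounded support of $Q_y$ given by Theorem~\ref{ThmHardDist_1}. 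Second, choosing the truncation level $M$ so that truncation does not destroy the mean separation yet keeps the statistic bounded enough for Hoeffding-type concentration with the required tail $e^{-\Omega(d^{2c})}$; here the bound $\chi^2(E_v,R)\le e^{O(m)}/(1-\rho^2)$ from Lemma~\ref{lem:existence_of_A_y} and the Hermite tail control of Lemma~\ref{PropTailBoundHermite2} provide the needed leverage. The running time is dominated by $O(|\cL|)$ verification computations, each costing $O(|T_2|)$ evaluations, plus a single invocation of $\cA$; this is polynomial and in particular quadratic in $|\cL|$ as claimed.
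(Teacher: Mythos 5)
Your proposal diverges from the paper's argument in an essential way, and the divergence introduces genuine gaps. The paper's reduction (Algorithm~\ref{alg:reduction}) never attempts to \emph{verify} a candidate against the hard distribution: it runs $\cA$ twice, once on the raw first half of the data and once on the second half with the covariates multiplied by a fresh random rotation $A$, and declares $H_1$ iff some pair $\cL_1(i)$, $A^T\cL_2(j)$ of candidates with norms in $[3\rho/4,5\rho/4]$ lies within distance $\rho/2$. Under $H_1$ both lists must contain a $\rho/4$-approximation of $\beta$ (the rotated data is just the model with regressor $A\beta$), so the triangle inequality fires the test; under $H_0$ the rotational invariance and the independence of $X$ and $y$ make $\cL_2$ independent of $A$, so each $A^T\cL_2(j)$ is a uniformly random direction independent of $\cL_1(i)$, and a union bound over the $|\cL_1|\cdot|\cL_2|$ pairs with the near-orthogonality of random vectors gives the $1-|\cL|^2e^{-\Omega(d^{2c})}$ guarantee. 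No property of the planted distribution beyond the model definition is used.

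Your single-run-plus-verification scheme, by contrast, must distinguish $E_{v}$ from $R$ given an approximate direction, and this is where it breaks down. First, the statistic you propose is not computable by the reduction: the density $E_{\widehat v}/R$ depends on $A_y$, hence on $F_y$, whose existence is established only non-constructively via LP duality (Theorem~\ref{ThmHardDist_1}); the same objection applies to choosing which Hermite coefficients beyond degree $2m$ carry signal and with what sign. Second, you assert that the mean separation ``is controlled from below by $e^{O(m)}/(1-\rho^2)$ via Lemma~\ref{lem:existence_of_A_y},'' but that lemma gives an \emph{upper} bound on $\chi^2(A_y,\cN(0,1))$; nothing in the paper lower-bounds the separation, so neither the threshold $\tau$ nor the number of samples needed for your Hoeffding step can be fixed. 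Third, even granting a separation for the exact direction $v$, replacing $v$ by $\widehat v$ with $v^T\widehat v=1-\Omega(1)$ attenuates every degree-$k$ Hermite coefficient by $(v^T\widehat v)^{k}=e^{-\Omega(m)}$ for $k>2m$, so the ``continuity'' you invoke degrades the signal exponentially in $m$ rather than preserving it. These three issues compound; without resolving all of them the verification step does not go through, whereas the paper's rotation-consistency trick avoids them entirely.
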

    \begin{proof}
        The reduction is described in Algorithm~\ref{alg:reduction}.
        \begin{algorithm}[ht]  \label{alg:reduction}
            \caption{Reduction from Hypothesis Testing to List-Decodable Linear Regression.} 
            \begin{algorithmic}[1]  
              \Statex  
              $\cA(\rho,(X_1,y_1),\ldots,(X_{n},y_{n}))$: List-decoding algorithm returning a list $L$ such that $\|\cL(i) - \beta\|_2 \leq \rho/4$ for some $i \in \{1,\ldots, |\cL| \}$.
              \Function{List\_Regression\_To\_Testing}{$\rho, (X_1,y_1),\ldots,(X_{2n},y_{2n})$} 
              \State Split dataset into two equally sized parts $\{(X_i,y_i)\}_{i=1}^n,\{(X'_i,y'_i)\}_{i=1}^n$.
              \State Let $A$ be a random rotation matrix independent of data. 
              \State $\cL_1  \gets \cA(\rho,(X_1,y_1),\ldots,(X_n,y_n))$.
              \State $\cL_2 \gets \cA(\rho,(A X'_{1},y'_{1}),\ldots,(A X'_{2n},y'_{n}))$.
              \For{$i \gets 1$ to $|\cL_1|$ }
                \For{$j \gets 1$ to $|\cL_2|$ }
                    \If{$\|\cL_1(i)\|_2,\| \cL_2(j)\|_2 \in [3\rho/4,5\rho/4]$ and $\|\cL_1(i) -  A^T\cL_2(j) \|_2 \leq \rho/2$}
                        \State \Return $H_1$ %
                    \EndIf
                \EndFor
              \EndFor
              \State \Return $H_0$
              \EndFunction  
            \end{algorithmic}  
          \end{algorithm}
    To see correctness, first assume that the alternative hypothesis holds. We note that the rotated points $(A X'_1,y'_1),\ldots,(A X'_n,y'_n)$ come from the Gaussian linear regression model of Definition~\ref{def:glr} having $\beta'=A\beta$ as the regressor. Thus $\cA$ finds lists $\cL_1,\cL_2$ such that there exist $i^* \in \{1,\ldots, |\cL_1|\}$  with $\| \cL_1(i^*) -  \beta\|_2 \leq \rho/4$ and  $j^* \in \{1,\ldots, |\cL_2|\}$ with $\|A^T\cL_2(j^*) -  \beta\|_2 \leq \rho/4$, where we use that $A^TA=I$. Moreover, since we are considering the regression model with $\|\beta\|_2 =\rho$, $\cL_1(i^*)$ and $A^T \cL_2(j^*)$ must have norms belonging in $[3 \rho /4, 5\rho/4]$. By the triangle inequality we get that $\|\cL_1(i^*) - A^T\cL_2(j^*)\|_2 \leq \rho/2$ and thus the algorithm correctly outputs $H_1$.

    Now assume that the null hypothesis holds, where the marginal on points is $\cN(0,I_d)$ and labels are independently distributed as $\cN(0,1/\alpha)$. Fix a pair $i \in [|\cL_1|]$, $j \in [|\cL_2|]$ for which $\|\cL_1(i)\|_2,\| \cL_2(j)\|_2 \in [3\rho/4,5\rho/4]$. Note that, by rotation invariance of the standard Gaussian distribution and the independence between covariates and response under the null distribution, the input $\{(AX'_i,y'_i)\}_{i=1}^n$ for the second execution of the list-decoding algorithm is independent of $A$. Thus the list $\cL_2$ is independent of $A$ (and also independent of $\cL_1$). Thus, $A^T\cL_2(j)$ is a random vector selected uniformly from the sphere of radius $\|\cL_2(j)\|_2$
    and independently of $\cL_1(i)$. Recall that two random vectors are almost orthogonal with high probability.
    \begin{lemma}[see, e.g., \cite{CaiFanJiang}]
        Let $\theta$ be the angle between two random unit vectors uniformly distributed over $\cS^{d-1}$. Then we have that $\pr[|cos \theta | \geq \Omega(d^{c-1/2})] \leq e^{-\Omega(d^{2c})}$ for any $0<c<1/2$.
    \end{lemma}
    Taking a union bound over the $|\cL_1|\cdot |\cL_2|$ possible pairs of candidate vectors, we have that with probability at least $1-|\cL_1|\cdot |\cL_2| e^{-\Omega(d^{2c})}$, for all $i \in [|\cL_1|], j \in[|\cL_2|]$ we have that
    \begin{align*}
        \|\cL_1(i) - A^T\cL_2(j)\|_2 &= \sqrt{\|\cL_1(i)\|^2_2 + \|A^T \cL_2(j)\|^2_2 - 2(\cL_1(i))^T ( A^T\cL_2(j))} \\
        &\geq \sqrt{ 2(3\rho/4)^2(1 - \Omega(d^{c-1/2}))}
        > \rho \;,
    \end{align*}
    where in the last inequality we used that $d = 2^{\Omega({1/(1/2-c)})}$. This concludes correctness for the case of the null hypothesis.
    \end{proof} 

We note that the Algorithm~\ref{alg:reduction} can be implemented in both of the models of computation that we consider: SQ model and low-degree polynomial test~(Section~\ref{sec:hardness_again_low_degree_poly}). For the SQ model, we can simulate the queries on the rotated $X$ by modifying the queries to explicitly perform the rotation on $X$ by a matrix $A$.
For the low-degree polynomial model, Remark~\ref{remark:test-in-low-degree-poly} shows that this reduction can be implemented as a low-degree polynomial algorithm.

\section{Hardness Against Low-Degree Polynomial Algorithms}
\label{sec:hardness_again_low_degree_poly}
In this section, we recall the recently established connection between the statistical query framework 
and low-degree polynomials, shown in~\cite{brennan2020statistical}, and deduce 
hardness results in the latter model. Section~\ref{sec:appendix-low-degree-basics} and Section~\ref{appendix:hardness-of-hypothesis-testing} are dedicated to the hypothesis problem. In Section~\ref{sec:appendix-reduction-in-low-degree}, 
we show that the reduction of Section~\ref{sec:reduction_from_regression} can be expressed as a low-degree polynomial test.

\subsection{Preliminaries: Low-Degree Method} \label{sec:appendix-low-degree-basics}

We begin by recording the necessary notation, definitions, and facts. This section mostly follows~\cite{brennan2020statistical}.
\paragraph{Notation} For a distribution $D$, we denote by $D^{\otimes n}$ the joint distribution of $n$ independent samples from $D$. For $f:\R \to \R$, $g:\R \to \R$ and a distribution $D$, we define the inner product $\langle f,g \rangle_D = \E_{X \sim D}[f(X)g(X)]$ and the norm $\snorm{D}{f} = \sqrt{\langle f,f \rangle_D }$.  We will omit the subscripts when they are clear from the context.

\paragraph{Low-Degree Polynomials} A function $f : \R^a \to \R^b$ is a polynomial of degree at most $k$ if it can be written in the   
form 
\begin{align*}
    f(x) = (f_1(x), f_2(x), \ldots, f_b(x) )\;,
\end{align*}
where each $f_i : \R^a \to \R$ is a polynomial of degree at most $k$. We allow polynomials to have random coefficients as long as they are independent of the input $x$. When considering \emph{list-decodable estimation} problems, an algorithm in this model of computation is a polynomial $f: \R^{d_1 \times n} \to \R^{d_2 \times \ell}$, where $d_1$ is the dimension of each sample, $n$ is the number of samples, $d_2$ is the dimension of the output hypotheses, and $\ell$ is the number of hypotheses returned. On the other hand, \cite{brennan2020statistical} focuses on \emph{binary hypothesis testing} problems defined in Definition~\ref{def:hypothesis-testing-general}. 

A degree-$k$ polynomial test for Definition~\ref{def:hypothesis-testing-general} is a degree-$k$ polynomial $f : \R^{d \times n} \to \R$ and a threshold $t \in \R$. The corresponding algorithm consists of evaluating $f$ on the input $x_1, \ldots, x_n$ and returning $H_0$ if and only if $f(x_1, \ldots, x_n) > t$.
\begin{definition}[$n$-sample $\eps$-good distinguisher]
    We say that the polynomial  $p : \R^{d\times n} \to \R$ is an $n$-sample $\eps$-distinguisher for the hypothesis testing problem in Definition~\ref{def:hypothesis-testing-general} if $\abs{\E_{X \sim D_0^{ \otimes n }} [p(X)]  - \E_{u \sim \mu} \E_{X \sim D_{u}^{\otimes n}} [p(X)] } \geq \eps \sqrt{ \var_{X \sim D_0^{\otimes n}} [p(X)]}$. We call $\eps$ the \emph{advantage} of the distinguisher.
\end{definition}
Let $\mathcal{C}$ be the linear space of polynomials with  degree at most $k$. The best possible advantage is given by the \emph{low-degree likelihood ratio}%
\begin{equation*}
    \max_{\substack{p \in \cC \\  \E_{X \sim D_0^{\otimes n}}[p^2(X)] \leq 1}}  \abs[\Big]{\E_{u \sim \mu} \E_{X \sim D_{u}^{\otimes n}} [p(X)] - \E_{X\sim D_0^{ \otimes n }} [p(X)]} 
    = \snorm{D_0^{ \otimes n }}{\E_{u \sim \mu}\left[(\bar{D}_u^{\otimes n})^{\leq k}\right]  -  1}\;,
\end{equation*}
where we denote $\bar{D}_u = D_u/D_0$ and the notation $f^{\leq k}$ denotes the orthogonal projection of $f$ to $\cC$.%

Another notation we will use regarding a finer notion of degrees is the following:  We say that the polynomial $f(x_1,\ldots, x_n) : \R^{d \times n} \to \R$ has \emph{samplewise degree} $(r,k)$ if it is a polynomial, where each monomial uses at most $k$ different samples from $x_1,\ldots, x_n$ and uses degree at most $d$ for each of them.  In analogy to what was stated for the best degree-$k$ distinguisher, the best distinguisher of samplewise degree $(r,k)$-achieves advantage $\snorm{D_0^{ \otimes n }}{\E_{u \sim \mu} [(\bar{D}_u^{\otimes n})^{\leq r,k}] - 1}$ the notation $f^{\leq r,k}$
now means the orthogonal projection of $f$ to the space of all samplewise degree-$(r,k)$ polynomials with unit norm.

\subsection{Hardness of Hypothesis Testing Against Low-Degree Polynomials} \label{appendix:hardness-of-hypothesis-testing}

In this section, we show the following result:
\begin{theorem}\label{thm:hypothesis-testing-hardness}
        Let $0<c<1/2$ and $m\in \Z_+$ with $m \leq c_1/\sqrt{\alpha}$ for some sufficiently small constant $c_1>0$. Consider the hypothesis testing problem of Problem~\ref{prob:hard-hypothesis-testing}. For $d \in \Z_+$ with  $d = m^{\Omega(1/c)}$, any $n \leq \Omega(d)^{(2m+1)(1/2-c)}e^{-O(m) }(1-\rho^2)$ and any even integer $k < d^{c/4}$, we have that
        \begin{align*}
            \snorm{R^{ \otimes n }}{\E_{u \sim \mu} \left[(\bar{E}_u^{\otimes n})^{\leq \infty,\Omega(k)}\right] - 1}^2 \leq 1\;.
        \end{align*}
    \end{theorem}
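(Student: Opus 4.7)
The plan is to derive Theorem~\ref{thm:hypothesis-testing-hardness} as a direct consequence of the SDA lower bound already established in Corollary~\ref{cor:sda-hyp-testing}, by invoking the SDA-to-low-degree transfer machinery of~\cite{brennan2020statistical}. The point is that Corollary~\ref{cor:sda-hyp-testing} provides exactly the statistical-dimension input that this machinery requires, and Problem~\ref{prob:hard-hypothesis-testing} is a clean hypothesis-testing instance to which it applies.

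\textbf{Step 1: Invoke the SDA bound.} For any even integer $k < d^{c/4}$ and $m \leq c_1/\sqrt{\alpha}$, Corollary~\ref{cor:sda-hyp-testing} gives
\begin{equation*}
    \mathrm{SDA}\!\left(\cD, N\right) \,\geq\, 100^k, \qquad N = \frac{\Omega(d)^{(2m+1)(1/2-c)}}{e^{O(m)}/(1-\rho^2)}.
\end{equation*}
Note that $N$ coincides with the upper bound on $n$ in the statement of Theorem~\ref{thm:hypothesis-testing-hardness}, up to the implicit constants absorbed into the $\Omega(\cdot)$ and $O(\cdot)$ notation. The prior $\mu$ here is the uniform distribution on the set $S$ of nearly orthogonal vectors from Lemma~\ref{lem:orthogonal_vectors}.

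\textbf{Step 2: Apply the SDA-to-low-degree transfer.} The main technical result of~\cite{brennan2020statistical} is the companion to Theorem~\ref{thm:Sq_hardness_from_SDA} for low-degree polynomials: if $\mathrm{SDA}(\cS,\mu,n) \geq 100^k$ for an even integer $k$, then for every $n' \leq n$,
\begin{equation*}
    \snorm{D_0^{\otimes n'}}{\E_{u \sim \mu}\!\left[(\bar{D}_u^{\otimes n'})^{\leq \infty,\,\Omega(k)}\right] - 1}^2 \;\leq\; 1.
\end{equation*}
Problem~\ref{prob:hard-hypothesis-testing} fits the template of a single-sample hypothesis testing problem against the reference $R$, with alternative family $\{E_v\}_{v\in S}$, so this theorem applies directly with $D_0 = R$ and $D_u = E_v$. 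Taking any $n \leq \Omega(d)^{(2m+1)(1/2-c)} e^{-O(m)} (1-\rho^2) \leq N$ yields exactly the bound stated in Theorem~\ref{thm:hypothesis-testing-hardness}.

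\textbf{Main obstacle.} Everything nontrivial is already absorbed into Corollary~\ref{cor:sda-hyp-testing}: the moment-matching construction (Theorem~\ref{ThmHardDist_1}), the $\chi^2$ bound (Lemma~\ref{lem:existence_of_A_y}), and the reduction of the conditional chi-squared to $|u^\top v|^{2m+1}$ via Lemmas~\ref{lem:form_of_conditional_distr} and~\ref{lem:lemma-3-4-DKS17}. The remaining potential obstacle is purely a bookkeeping one: verifying that the parameter ranges in the SDA-to-low-degree transfer theorem of~\cite{brennan2020statistical} line up with ours, i.e., that $k$ is allowed to be as large as $d^{c/4}$ (which it is, since $100^k$ must only be at most the stated SDA) and that the transfer preserves the samplewise-$(\infty,\Omega(k))$ form of the degree restriction. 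Once these match, no new probabilistic or analytic work is needed; the result follows by direct quotation.
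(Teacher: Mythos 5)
Your overall route is the paper's route --- combine the SDA lower bound with the SDA-to-low-degree transfer of~\cite{brennan2020statistical} --- but there is a genuine gap in Step 2: you have mis-stated the hypothesis of the transfer theorem. Theorem 4.1 of~\cite{brennan2020statistical} (Theorem~\ref{thm:sdaldlr} in the paper) does \emph{not} say that $\mathrm{SDA}(\cS,\mu,n)\geq 100^k$ alone implies the low-degree bound. Its hypothesis is that for \emph{all} $0\leq n'\leq n$ one has $\mathrm{SDA}(\cS,n')\geq 100^k(n/n')^k$ --- i.e., a whole family of statistical-dimension bounds whose strength must grow like $(n/n')^k$ as the threshold $n'$ decreases. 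A single bound at the single threshold $N$ is strictly weaker and does not trigger the theorem; and the conclusion of the theorem is stated at sample size $n$, not ``for every $n'\leq n$'' as you wrote.

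Consequently, quoting Corollary~\ref{cor:sda-hyp-testing} is not enough: that corollary is obtained by instantiating Lemma~\ref{lem:SDA-bound} with $n'=n$, so it only certifies $\mathrm{SDA}(\cD,N)\geq 100^k$ and was designed to feed Theorem~\ref{thm:Sq_hardness_from_SDA} (the SQ statement), which indeed needs only one threshold. To get the low-degree statement you must go back to the \emph{parametrized} bound of Lemma~\ref{lem:SDA-bound} and, for each $n'\leq n$, choose $q=\sqrt{2^{\Omega(d^{c/2})}(n/n')}$ so that the SDA threshold becomes (roughly) $n'$ while the guaranteed dimension is $\sqrt{2^{\Omega(d^{c/2})}(n/n')}$; one then checks that this exceeds $100^k(n/n')^k$ for all $k<d^{c/4}$. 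This is exactly what the paper does in deriving Corollary~\ref{cor:low-deg-hardness-general-problem} for the general Problem~\ref{prob:non-gaussian-component-testing}, after which the specialization to Problem~\ref{prob:hard-hypothesis-testing} via Lemma~\ref{lem:existence_of_A_y} (matching $2m$ moments and $\E_{y\sim R(y)}[\chi^2(A_y,\cN(0,1))]=e^{O(m)}/(1-\rho^2)$) gives the stated theorem. The missing verification is not mere bookkeeping --- it is the step where the $(n/n')^k$ scaling of the SDA is established, and your write-up skips it entirely.
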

We prove Theorem~\ref{thm:hypothesis-testing-hardness} by using the lower bound on SDA in Corollary~\ref{cor:sda-hyp-testing} and the relation between SDA and low-degree polynomials established in \cite{brennan2020statistical}.
In~\cite{brennan2020statistical}, the following relation between $\mathrm{SDA}$ and low-degree likelihood ratio is established. 
\begin{theorem}[Theorem 4.1 of~\cite{brennan2020statistical}] \label{thm:sdaldlr}
    Let $\mathcal{D}$ be a hypothesis testing problem on $\R^d$ with respect to null hypothesis $D_0$. Let $n,k \in \mathbb{N}$ with $k$ even. Suppose that for all $0\leq n' \leq n$, $\mathrm{SDA}(\mathcal{S},n') \geq 100^k(n/n')^k$. Then, for all $r$, $\snorm{D_0^{ \otimes n }}{\E_{u \sim \mu} \left[ (\bar{D}_u^{\otimes n})^{\leq r,\Omega(k)} \right] -1}^2 \leq 1$.
\end{theorem}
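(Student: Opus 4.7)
The plan is to reduce the claim to a pure samplewise-degree statement (eliminating the per-sample degree $r$), expand the low-degree likelihood ratio as a Taylor series in the pairwise correlations, and bound each term using the SDA hypothesis.

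First, observe that the space of samplewise-degree-$(r,k')$ polynomials sits inside the space of samplewise-degree-$(\infty,k')$ polynomials, and both contain the constant function $1$. Writing $f_r := \E_{u \sim \mu}[(\bar{D}_u^{\otimes n})^{\leq r, k'}]$, so that $f_r - 1 = P_{(r,k')}(f_\infty - 1)$, contractivity of orthogonal projection gives $\snorm{D_0^{\otimes n}}{f_r - 1} \leq \snorm{D_0^{\otimes n}}{f_\infty - 1}$. So it suffices to handle the case $r = \infty$, where I write $k' = \Omega(k)$ for the samplewise-degree bound to be chosen at the end.

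The second step is to expand $\|f_\infty - 1\|^2$ explicitly. Set $L_u := \bar{D}_u - 1$, so $\E_{D_0}[L_u] = 0$. Expanding $\bar{D}_u^{\otimes n}(x_1,\ldots,x_n) = \prod_i(1 + L_u(x_i)) = \sum_{S \subseteq [n]} \prod_{i \in S} L_u(x_i)$, the samplewise-degree-$\leq k'$ projection just restricts to $|S| \leq k'$. Using independence of the coordinates under $D_0^{\otimes n}$ together with the mean-zero property, all cross-terms with $S \neq T$ vanish and
\[
\snorm{D_0^{\otimes n}}{f_\infty - 1}^2 \;=\; \sum_{j=1}^{k'}\binom{n}{j}\,\E_{u,v \sim \mu}[\xi^{j}], \qquad \xi := \langle \bar{D}_u,\bar{D}_v\rangle_{D_0} - 1,
\]
so the problem reduces to controlling the moments of the quantity $\xi$ appearing in the SDA definition.

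The third step converts the SDA assumption into a tail estimate on $|\xi|$. Apply the definition of SDA to the event $\cE_t := \{|\xi| > t\}$: whenever $\pr[\cE_t] \geq 1/\mathrm{SDA}(\cS,n')^2$ the SDA bound gives $\E[|\xi|\mid \cE_t] \leq 1/n'$, but trivially $\E[|\xi|\mid \cE_t] \geq t$, forcing $t \leq 1/n'$. Contrapositively, for $t > 1/n'$ one has $\pr[|\xi|>t] < 1/\mathrm{SDA}(\cS,n')^2 \leq 100^{-2k}(n'/n)^{2k}$. Choosing $n' \approx 1/t$ (valid for $t \geq 1/n$) yields the polynomial tail
\[
\pr_{u,v}[|\xi|>t] \;\leq\; 100^{-2k}\,(tn)^{-2k} \qquad \text{for all } t \geq 1/n.
\]

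The final step is to integrate this tail via the layer-cake formula and sum. For any $j < 2k$,
\[
\E[|\xi|^j] \;=\; \int_0^\infty j t^{j-1}\pr[|\xi|>t]\,dt \;\leq\; (1/n)^j + 100^{-2k}\cdot\tfrac{j}{2k-j}\cdot (1/n)^j,
\]
and combined with $\binom{n}{j}(1/n)^j \leq 1/j!$, summing over $j \leq k' \leq \Omega(k)$ gives $\|f_\infty - 1\|^2 \leq 1$ once the constant in $\Omega(k)$ is chosen appropriately. The main obstacle is the integration in the final step: convergence of the tail integral already requires $j < 2k$, which is exactly why the conclusion restricts the samplewise degree to $\Omega(k)$ and not $k$ itself; moreover, since the $j=1$ contribution $n\,\E[|\xi|] \leq 1$ is already essentially saturated by SDA, the argument must exploit the full $100^k$ slack in the hypothesis to make the higher-order terms negligible and keep the truncated exponential sum under $1$.
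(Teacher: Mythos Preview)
The paper does not prove this statement; it is quoted as Theorem~4.1 of \cite{brennan2020statistical}, so there is no in-paper argument to compare against. I can only assess the soundness of your plan on its own.

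Your reduction to $r=\infty$, the multiplicative expansion of $\bar D_u^{\otimes n}$, and the identity
\[
\bigl\|f_\infty-1\bigr\|_{D_0^{\otimes n}}^2=\sum_{j=1}^{k'}\binom{n}{j}\,\E_{u,v}[\xi^j],
\qquad \xi=\langle \bar D_u,\bar D_v\rangle_{D_0}-1,
\]
are all correct, and your conversion of the SDA hypothesis into the tail bound $\pr[|\xi|>t]\le 100^{-2k}(tn)^{-2k}$ for $t\ge 1/n$ is the right mechanism.

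The gap is in the final summation. Your layer-cake bound gives
\[
\E[|\xi|^j]\le (1/n)^j\Bigl(1+\tfrac{j}{2k-j}\,100^{-2k}\Bigr),
\qquad\text{hence}\qquad
\binom{n}{j}\,\E[|\xi|^j]\le \frac{1+o(1)}{j!},
\]
so summing over $j\ge 1$ yields at best $e-1+o(1)\approx 1.718$, not $\le 1$. The $100^k$ slack only kills the $o(1)$ correction; it does nothing to the main term $\sum_{j\ge 1}1/j!$. This is not a looseness in your estimates but a genuine obstruction: take $\mathcal S$ to be a single alternative $D_1$ with $\chi^2(D_1,D_0)=1/n$. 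Then $\xi\equiv 1/n$, so $\mathrm{SDA}(\mathcal S,n')=\infty$ for every $n'\le n$ and the hypothesis holds for all $k$; yet $\|f_\infty-1\|^2=\sum_{j=1}^{k'}\binom{n}{j}(1/n)^j$, which exceeds $1$ already at $k'=2$, $n\ge 2$. So under the hypothesis \emph{as transcribed here} (SDA controlled only for $n'\le n$), the conclusion $\le 1$ cannot hold in general, and no proof along your lines will close the gap. The correct statement in \cite{brennan2020statistical} has a slightly different quantification; with the SDA assumption available at scales beyond $n$, the crossover in your tail bound moves from $t=1/n$ down to $t=1/(100n)$, your layer-cake then gives $\binom{n}{j}\E[|\xi|^j]\le \tfrac{2k}{2k-j}\cdot 100^{-j}/j!$, and the sum is comfortably below $1$ for $k'\le c\,k$.
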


    We first apply Theorem~\ref{thm:sdaldlr} to the more general Problem~\ref{prob:non-gaussian-component-testing}. In Lemma~\ref{lem:SDA-bound} we set $n = \frac{\Omega(d)^{(m+1)(1/2-c)}}{\E_{y \sim R(y)}[\chi^2(A_y,\mathcal{N}(0,1))]}$ and $q = \sqrt{2^{\Omega(d^{c/2})} (n/n')}$. Then, $\mathrm{SDA}(\mathcal{S},n') \geq  \sqrt{2^{\Omega(d^{c/2})} (n/n')} \geq (100n/n')^k$ for $k < d^{c/4}$. Thus, we have shown the following.
    \begin{corollary}\label{cor:low-deg-hardness-general-problem} 
        Let $0<c<1/2$ and the hypothesis testing problem of Problem~\ref{prob:non-gaussian-component-testing} where for every $y\in R$ the distribution $A_y$ matches the first $m$  moments with $\cN(0,1)$. For any $d \in \Z_+$ with $d = m^{\Omega(1/c)}$, any $n \leq \Omega(d)^{(m+1)(1/2-c)}/\E_{y \sim R(y)}[\chi^2(A_y,\mathcal{N}(0,1))]$ and any even integer $k < d^{c/4}$, we have that
        \begin{align*}
            \snorm{R^{ \otimes n }}{\E_{u \sim \mu} \left[(\bar{D}_{u}^{\otimes n})^{\leq \infty,\Omega(k)}\right] - 1}^2 \leq 1\;.
        \end{align*}
    \end{corollary}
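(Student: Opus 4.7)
The plan is to derive Corollary~\ref{cor:low-deg-hardness-general-problem} from Theorem~\ref{thm:sdaldlr} by verifying its $\mathrm{SDA}$-hypothesis for Problem~\ref{prob:non-gaussian-component-testing}, drawing the SDA lower bound from Lemma~\ref{lem:SDA-bound}. Writing $n^{\ast} := \Omega(d)^{(m+1)(1/2-c)}/\E_{y\sim R(y)}[\chi^2(A_y,\cN(0,1))]$ and $D := 2^{\Omega(d^{c/2})}$, that lemma states $\mathrm{SDA}(\cS, n^{\ast}/(q^2/D+1)) \geq q$ for every $q \geq 1$. Since $\mathrm{SDA}(\cS, \cdot)$ is non-increasing in its second argument (larger $n$ imposes a stricter constraint on admissible $q$), this is equivalent to: $\mathrm{SDA}(\cS, n') \geq q$ whenever $q^2 \leq D(n^{\ast}/n' - 1)$.

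For the corollary I would take $n \leq n^{\ast}/2$ (absorbing the constant factor $2$ into the implicit $\Omega(\cdot)$ in the statement). Then for every $0 < n' \leq n$, $n^{\ast}/n' \geq 2$ and hence $n^{\ast}/n' - 1 \geq n^{\ast}/(2n') \geq n/n'$, so the choice $q := \sqrt{D \cdot n/n'}$ is admissible and gives $\mathrm{SDA}(\cS, n') \geq \sqrt{D \cdot n/n'}$. The hypothesis of Theorem~\ref{thm:sdaldlr} requires $\mathrm{SDA}(\cS, n') \geq 100^k (n/n')^k$ for every $0 \leq n' \leq n$ and $k$ even, so it suffices to check that $D \geq 100^{2k}(n/n')^{2k-1}$ uniformly in this range.

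Taking logarithms, this inequality becomes $\Omega(d^{c/2}) \geq 2k \log 100 + (2k-1)\log(n/n')$, which is tightest at $n' = 1$, where $\log(n/n') \leq \log n^{\ast} = O(m \log d)$. With $k < d^{c/4}$, the right-hand side is $O(d^{c/4} \cdot m \log d)$; the hypothesis $d = m^{\Omega(1/c)}$ with sufficiently large implicit constant (say $d \geq m^{C/c}$ for $C > 8$) ensures $m \log d = o(d^{c/4})$, so the right-hand side is $o(d^{c/2})$ and the inequality holds for $d$ sufficiently large. This verifies the SDA-hypothesis of Theorem~\ref{thm:sdaldlr} at arbitrary $r$, whose conclusion (taking $r = \infty$) is precisely what the corollary asserts.

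The only real subtlety in the proof is the final parameter check, which balances $\log D$, $k$, and $\log n^{\ast}$; it is structurally the same calculation already invoked in the proof of Corollary~\ref{cor:sda-hyp-testing} for the SQ model, now repeated in the more general setting of Problem~\ref{prob:non-gaussian-component-testing} with $A_y$ only required to match $m$ moments (rather than the specific $A_y$'s constructed in Theorem~\ref{ThmHardDist_1}), and with the SDA-to-low-degree conversion in place of the SDA-to-SQ conversion in Theorem~\ref{thm:Sq_hardness_from_SDA}.
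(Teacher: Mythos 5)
Your proposal is correct and follows the same route as the paper: invoke Lemma~\ref{lem:SDA-bound} with $q=\sqrt{2^{\Omega(d^{c/2})}(n/n')}$ to certify $\mathrm{SDA}(\cS,n')\geq (100n/n')^k$ for all $n'\leq n$ and $k<d^{c/4}$, then apply Theorem~\ref{thm:sdaldlr}. Your treatment is somewhat more careful than the paper's (making the monotonicity of $\mathrm{SDA}$ in its second argument and the $q^2/D+1$ bookkeeping explicit), but the argument is the same.
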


     \begin{proof}[Proof of Theorem~\ref{thm:hypothesis-testing-hardness}]
      We now apply the Corollary~\ref{cor:low-deg-hardness-general-problem} to Problem~\ref{prob:hard-hypothesis-testing}, which is a special case of Problem~\ref{prob:non-gaussian-component-testing}. The first part of Lemma~\ref{lem:existence_of_A_y} states that the distributions $A_y$'s match the first $2m$ moments with $\cN(0,1)$ for $m\leq c_1/\sqrt{\alpha}$ and the second part implies that $\E_{y \sim R(y)}[\chi^2(A_y,\mathcal{N}(0,1))] = O(e^{m})/(1-\rho^2)$. An application of Corollary~\ref{cor:low-deg-hardness-general-problem} completes the proof.
        \end{proof}

\subsection{Low-Degree Polynomial Reduction to List-Decodable Regression} \label{sec:appendix-reduction-in-low-degree}

    \begin{remark}
    \label{remark:test-in-low-degree-poly}
        We note that the reduction of Lemma~\ref{lem:reduction} is an algorithm that can be expressed in the low-degree polynomials model. The modification of the algorithm is the following: First note that  the $\ell_2$-norm of a vector is indeed a polynomial of degree two in each coordinate. Second, one can check whether there exists a pair $i \in [|\cL_1|], j \in  [|\cL_2|]$ with $\|\cL_1(i) \|_2,\|\cL_2(j) \|_2 \in [3\rho/4,5\rho/4]$ for which $\|\cL_1(i) - A^T \cL_2(j) \|_2 \leq \rho/2$ using the condition 
        \begin{align*}
           \sum_{i \in 1}^{|\cL_1|} \sum_{j \in 1}^{|\cL_2|} \mathbf{1}(\|\cL_1(i) \|_2^2\geq (3\rho/4)^2) \cdot \mathbf{1}(\|A^T\cL_2(j) \|_2^2\leq(5\rho/4)^2)  \cdot \mathbf{1}(\|\cL_1(i) - A^T \cL_2(j) \|_2^2 \leq \rho^2/4) = 0 \;,
        \end{align*}
        and use a polynomial approximation for the step function in order to express each term as a polynomial. The degree needed for a uniform $\eps$-approximation has been well-studied~\cite{ganzburg2008limit,ganzburg2002limit,eremenko2007uniform}.
        \begin{lemma}[\cite{eremenko2007uniform}]
            Let $f: \R \to \R$ be the step function defined as $f(x) = 1$ for all $x \geq 0$ and $f(x)=0$ otherwise. The minimum $k \in \Z_+$ for which there exists a degree-$k$ polynomial $p : \R \to \R$ such that $\max_{x \in [-1,1]} | f(x) - p(x) | \leq \eps$ is $k = \Theta(1/\eps^2)$.
        \end{lemma}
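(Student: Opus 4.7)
The plan is to establish matching upper and lower bounds on the minimal polynomial degree needed for uniform $\eps$-approximation of the step function, with the understanding that the statement is really about approximation on $[-1,1]$ minus a small transition zone around the discontinuity (any polynomial approximation that is truly uniform on all of $[-1,1]$ is impossible, since continuity of $p$ at $0$ forces $|f(0^-)-p(0^-)|\geq 1/2$ or $|f(0^+)-p(0^+)|\geq 1/2$). With this understanding, the proof reduces to a pair of well-studied questions in approximation theory: the rate of polynomial approximation of $\operatorname{sgn}(x)$.

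For the upper bound, the approach is to exhibit an explicit polynomial of degree $O(1/\eps^2)$. The cleanest way to do this is to first replace the step function by a smooth sigmoidal surrogate $s_\delta(x)$ (for example a rescaled error function $\tfrac{1}{2}(1+\operatorname{erf}(x/\delta))$) whose modulus of continuity is controlled by $\delta$, and which agrees pointwise with $f$ up to error $\eps/2$ outside an $O(\delta\sqrt{\log(1/\eps)})$ window around $0$. A Jackson-type theorem (or a direct Chebyshev expansion of $s_\delta$) then produces a polynomial of degree $k=O(1/\delta^2)$ approximating $s_\delta$ within $\eps/2$ on $[-1,1]$. Tuning $\delta = \Theta(\eps)$ balances the two error sources and yields the $O(1/\eps^2)$ degree. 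An equivalent route is to truncate the Chebyshev expansion of $\operatorname{sgn}(x)$ directly: its Chebyshev coefficients decay like $1/k$, and a truncation at order $O(1/\eps^2)$, after a light regularization near $0$, yields the claimed bound.

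For the lower bound, the plan is to use a Markov-Bernstein-type argument. Suppose $p$ is a degree-$k$ polynomial with $\sup_{[-1,1]\setminus(-\eps,\eps)}|p(x)-f(x)|\leq \eps$. Then $p$ must transition from a value at most $\eps$ to a value at least $1-\eps$ across an interval of length $O(\eps)$ around the origin, while remaining bounded by $1+\eps$ on $[-1,1]$. Markov's inequality gives $\|p'\|_{\infty,[-1,1]} \leq k^2 \|p\|_{\infty,[-1,1]} = O(k^2)$. Near the endpoints of the transition region, a sharper Bernstein-type bound $|p'(x)|\leq k/\sqrt{1-x^2}\cdot \|p\|_{\infty}$ is available; combining these with the requirement that $p$ jumps by $1-2\eps$ over a region of length $O(\eps)$ forces $k=\Omega(1/\eps^2)$.

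The main obstacle is locking down the constants so that the upper and lower bounds match up to $\Theta(\cdot)$, which boils down to careful tracking of the interplay between the smoothing scale $\delta$, the width of the excluded transition window, and the approximation error $\eps$. Another subtle point is making the precise statement of the lemma consistent: the natural reading (uniform approximation on all of $[-1,1]$) is vacuous for $\eps<1/2$, so one must interpret the bound as approximation on $[-1,1]$ away from an $O(\eps)$ neighborhood of $0$, which is the standard formulation in the cited reference \cite{eremenko2007uniform} and is the version that gets invoked in the polynomial-test reduction (Remark~\ref{remark:test-in-low-degree-poly}), where one only needs the approximating polynomial to take correct values when the input is bounded away from the indicator threshold.
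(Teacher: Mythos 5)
First, a point of reference: the paper does not prove this lemma at all --- it is quoted verbatim from \cite{eremenko2007uniform} --- so there is no internal proof to compare against, and the evaluation here is of your argument on its own terms. Your observation that the statement as written is vacuous for $\eps<1/2$ (a continuous $p$ cannot uniformly approximate a jump on an interval containing the jump) is correct and worth making explicit; the excluded-window reading is the right repair. Your upper bound is also sound: a sigmoidal surrogate with Lipschitz constant $O(1/\delta)$ combined with Jackson's theorem gives degree $O(1/(\delta\eps))$, which is $O(1/\eps^2)$ once $\delta=\Theta(\eps)$ (your intermediate claim ``degree $O(1/\delta^2)$'' is only the special case $\delta=\Theta(\eps)$, and the direct Chebyshev-truncation route needs more care because $\sum_{k>N}1/k$ diverges, but the Jackson route works).

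The genuine gap is the lower bound, and it is not a matter of ``tracking constants.'' Quantitatively: the transition forces a point $\xi$ in the window with $|p'(\xi)|=\Omega(1/\eps)$. Markov's inequality $\|p'\|_{\infty,[-1,1]}\le k^2\|p\|_{\infty,[-1,1]}$ then gives only $k=\Omega(\eps^{-1/2})$, and the pointwise Bernstein bound $|p'(\xi)|\le k\|p\|_\infty/\sqrt{1-\xi^2}$ (with $\xi$ near $0$, and even granting $\|p\|_\infty=O(1)$) gives only $k=\Omega(1/\eps)$. Neither reaches $\Omega(1/\eps^2)$; the proposal simply asserts that conclusion. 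Worse, under your interpretation no argument can reach it, because the lower bound is false: the asymptotics of \cite{eremenko2007uniform} (and the standard ``soft sign'' construction) show that the best uniform error of degree-$n$ polynomials on $[-1,-\delta]\cup[\delta,1]$ decays exponentially in $n\delta$ up to polynomial factors, so degree $O(\eps^{-1}\log(1/\eps))$ already achieves error $\eps$ with window $\delta=\Theta(\eps)$. The true threshold under this reading is $\Theta(\eps^{-1}\log(1/\eps))$, so your two bounds cannot be made to meet at $\Theta(1/\eps^2)$. (A $\Theta(1/\eps^2)$ threshold would become correct if the error were measured in $L^2[-1,1]$, where the tail of the Legendre expansion of the step function is $\Theta(n^{-1/2})$; and for the paper's application in Remark~\ref{remark:test-in-low-degree-poly} only an upper bound on the degree is ever used, so the slack direction of the lemma is the harmless one.)
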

        For our purpose, it suffices to approximate the step function up to error $\eps = \Theta(1/(|\cL_1|\cdot |\cL_2|))$, thus the resulting polynomial test has degree $\Theta(|\cL_1|^2\cdot |\cL_2|^2)$.
    \end{remark}

\clearpage

\bibliographystyle{alpha}
\bibliography{allrefs}

\newpage

\appendix

\section*{Appendix}

    \section{Additional Technical Facts} \label{sec:appendix-extras}
    
    Our bounds in Lemma~\ref{lem:existence_of_A_y} required the fact below. Here we provide its proof for completeness.
    \begin{fact} 
		For any one-dimensional distribution $P$ that matches the first $m$ moments with $\cN(0,1)$ and has $\chi^2(P,\cN(0,1)) < \infty$ the following identity is true
		\begin{align*}
			\chi^2(P,\cN(0,1)) = \sum_{i=m+1}^\infty \left(\E_{X \sim P}[h_i(X)]\right)^2 \;.
		\end{align*}
	\end{fact}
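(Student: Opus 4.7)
The plan is to view the identity as a straightforward application of Parseval's theorem applied to the Radon--Nikodym density $f(x) = P(x)/\phi(x)$, expanded in the orthonormal basis of Hermite polynomials $\{h_i\}_{i \geq 0}$ under the inner product $\langle f, g \rangle = \E_{X \sim \cN(0,1)}[f(X) g(X)]$. The hypothesis $\chi^2(P,\cN(0,1)) < \infty$ is precisely the statement that $f \in L^2(\R, \cN(0,1))$, so $f$ admits a convergent Hermite expansion.

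First I would compute the Hermite coefficients $c_i = \langle f, h_i \rangle$ explicitly. Using that $\phi$ is the density of $\cN(0,1)$, we get
\[
c_i = \E_{X \sim \cN(0,1)}\!\left[\frac{P(X)}{\phi(X)} h_i(X)\right] = \int_{\R} h_i(x) P(x)\, dx = \E_{X \sim P}[h_i(X)].
\]
Next I would invoke Parseval's identity, which, together with the definition of $\chi^2$, gives
\[
\chi^2(P, \cN(0,1)) + 1 = \E_{X \sim \cN(0,1)}\!\left[\left(\frac{P(X)}{\phi(X)}\right)^{\!2}\right] = \sum_{i=0}^\infty c_i^2 = \sum_{i=0}^\infty (\E_{X \sim P}[h_i(X)])^2.
\]

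Finally, I would trim the sum using the moment-matching assumption. Since $h_0 \equiv 1$, the $i=0$ term equals $1$ and cancels the $+1$ on the left. For $1 \leq i \leq m$, the polynomial $h_i$ has degree at most $m$, so by the moment-matching hypothesis $\E_{X \sim P}[h_i(X)] = \E_{X \sim \cN(0,1)}[h_i(X)] = \langle h_i, h_0 \rangle = 0$. Therefore
\[
\chi^2(P, \cN(0,1)) = \sum_{i=m+1}^\infty (\E_{X \sim P}[h_i(X)])^2,
\]
as claimed. No step looks subtle: the only thing to be careful about is using the correct normalization of the Hermite polynomials (the paper's $h_i$ are orthonormal with respect to $\cN(0,1)$, not with respect to $e^{-x^2/2}\,dx$), so that Parseval applies with unit weights.
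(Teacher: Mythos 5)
Your proof is correct and is essentially the paper's argument: the paper's Appendix proof expands $P(x)=\sum_i a_i h_i(x)\phi(x)$ with $a_i=\E_{X\sim P}[h_i(X)]$ in the weighted space with inner product $\int fg/\phi$, which is just your Parseval computation for $P/\phi$ multiplied through by $\phi$. Both proofs identify the coefficients, use orthonormality, and kill the terms $0\leq i\leq m$ via $h_0\equiv 1$ and moment matching, so there is nothing substantive to add.
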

	\begin{proof}
	    Let $\phi$ denote the pdf of the standard one-dimensional Gaussian. For this proof, we use a slightly different definition of the space $L^2(\R,\cN(0,1))$. We define it as the space of functions for which $\int_{\R}f^2(x)/\phi(x)\dx < \infty$ with the inner product $\langle f,g \rangle := \int_{\R}f(x)g(x)/\phi(x) \dx$ (note the similarity with the definition of $\chi^2$-divergence). The \emph{Hermite functions} (or often called \emph{Hermite-Gauss functions}) $h_i(x) \phi(x)$ for  $i=0,1,\ldots$ form a complete orthonormal basis of the space $L^2(\R,\cN(0,1))$
	    with respect to that inner product.
	    It is easy to check that this statement is equivalent to the statement that Hermite polynomials $\{h_i\}_{\N}$ form a complete orthonormal basis of the space of all functions $f:\R \to \R$ for which $\E_{x \sim \cN(0,1)}[f^2(x)] < \infty$ (i.e., our old definition of $L^2(\R,\cN(0,1))$).
	    Since $\chi^2(P,\cN(0,1))<\infty$ we have $P \in L^2(\R,\cN(0,1))$ and thus  we can write $P(x) = \sum_{i=0}^\infty a_i h_i(x) \phi(x)$, where $a_i = \E_{X \sim P}[h_i(X)]$. Using the fact that $P$ agrees with the first $m$ moments of $\cN(0,1)$ and the property of Hermite polynomials $\E_{X \sim \cN(0,1)}[h_i(X)] = \mathbf{1}(i=0)$ we get that $a_0 = \E_{X \sim \cN(0,1)}[h_0(X)] = 1$ and $a_i = \E_{X \sim \cN(0,1)}[h_i(X)] = 0$ for $0<i\leq m$. Thus
	    \begin{align*}
	        P(x) = \phi(x) + \sum_{i=m+1}^\infty a_i h_i(x) \phi(x) \;.
	    \end{align*}
	    The $\chi^2$-divergence can then be written as
	    \begin{align*}
	        \chi^2(P,\cN(0,1)) = \int_{\R} \frac{(P(x) - \phi(x))^2}{\phi(x)} \dx 
	        = \int_{\R} \frac{1}{\phi(x)}\left( \sum_{i=m+1}^\infty a_i h_i(x) \phi(x)\right)^2 \dx 
	        = \sum_{i=m+1}^\infty a_i^2 \;,
	    \end{align*}
	    where the last part uses orthonormality of the functions $h_i(x) \phi(x)$.
	\end{proof}
	
	We now turn to Claim~\ref{claim:finite-chi-square} which is restated below.
	\begin{claim} \label{claim:finite-chi-square-restated} 
	If $P = \sum_{i=1}^k \lambda_i N(\mu_i,\sigma_i^2)$ with $\mu_i \in \R$, $\sigma_i < \sqrt{2}$ and $\lambda_i \geq 0$ such that $\sum_{i=1}^k\lambda_i = 1$, we have that $\chi^2(P,\cN(0,1)) < \infty$.
	\end{claim}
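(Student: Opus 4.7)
The plan is to reduce the problem to a finite collection of single-Gaussian $\chi^2$ computations and then verify each one directly. Writing $\chi^2(P,\cN(0,1)) = \int_\R P^2(x)/\phi(x)\,\dx - 1$, it suffices to show that $\int_\R P^2(x)/\phi(x)\,\dx$ is finite.

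First, since $\sum_{i=1}^k \lambda_i = 1$ with $\lambda_i \geq 0$ and the map $t \mapsto t^2$ is convex, Jensen's inequality gives pointwise
\[
P^2(x) = \left(\sum_{i=1}^k \lambda_i\, \cN(x;\mu_i,\sigma_i^2)\right)^{\!2} \leq \sum_{i=1}^k \lambda_i\, \cN(x;\mu_i,\sigma_i^2)^2,
\]
where $\cN(x;\mu,\sigma^2)$ denotes the density of $\cN(\mu,\sigma^2)$ at $x$. Dividing by $\phi(x)$ and integrating, it suffices to show that $I_i := \int_\R \cN(x;\mu_i,\sigma_i^2)^2/\phi(x)\,\dx$ is finite for each $i$.

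For the main step, fix $i$ and expand
\[
\frac{\cN(x;\mu_i,\sigma_i^2)^2}{\phi(x)} = \frac{1}{2\pi\sigma_i^2}\exp\!\left(-\frac{(x-\mu_i)^2}{\sigma_i^2}\right)\cdot \sqrt{2\pi}\exp\!\left(\frac{x^2}{2}\right).
\]
Collecting the exponents, the quadratic coefficient in $x$ is $\tfrac12 - \tfrac{1}{\sigma_i^2}$, which is \emph{strictly negative} precisely because $\sigma_i < \sqrt{2}$. Completing the square in $x$ therefore expresses the integrand as a constant times a Gaussian density in $x$, whose integral over $\R$ is finite. Explicitly, one obtains a closed-form value
\[
I_i = \frac{1}{\sigma_i\sqrt{2-\sigma_i^2}}\exp\!\left(\frac{\mu_i^2}{2-\sigma_i^2}\right) < \infty,
\]
and summing $\sum_i \lambda_i I_i$ yields a finite upper bound on $\int_\R P^2(x)/\phi(x)\,\dx$, completing the proof.

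The only ``obstacle'' here is really the bookkeeping of the quadratic exponent; the substantive content is the observation that $\sigma_i < \sqrt{2}$ is precisely the threshold at which $\exp(-(x-\mu_i)^2/\sigma_i^2)\exp(x^2/2)$ is integrable, and this is exactly the hypothesis given.
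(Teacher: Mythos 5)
Your proof is correct. It takes a slightly different route from the paper's: the paper expands $\chi^2\bigl(\sum_i \lambda_i P_i,\cN(0,1)\bigr)$ exactly as the bilinear form $\sum_{i,j}\lambda_i\lambda_j\,\chi_{\cN(0,1)}(P_i,P_j)$ (Fact~\ref{fact:chi_square_for_gaussians1}) and then invokes a closed-form expression for the cross-correlation of two arbitrary Gaussians against $\cN(0,1)$ (Fact~\ref{fact:chi_square_for_gaussians3}), checking that the denominator $\sigma_i^2+\sigma_j^2-\sigma_i^2\sigma_j^2$ is positive when both variances are below $2$. You instead discard the cross terms up front via the pointwise Jensen bound $P^2(x)\le\sum_i\lambda_i\,\cN(x;\mu_i,\sigma_i^2)^2$, which reduces the problem to the $k$ diagonal integrals; each of these is a single complete-the-square computation, and your closed form $I_i=\exp(\mu_i^2/(2-\sigma_i^2))/(\sigma_i\sqrt{2-\sigma_i^2})$ checks out. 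Your argument is somewhat more elementary (no two-Gaussian correlation formula needed) and isolates the same threshold $\sigma_i<\sqrt 2$; the trade-off is that it only yields an upper bound on $\chi^2(P,\cN(0,1))$, whereas the paper's expansion gives the exact value, though for the finiteness claim at hand the bound is all that is required.
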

	For that we need the following two facts about $\chi^2$-distance between Gaussians. Their proofs can be done by direct calculations.
	\begin{fact} \label{fact:chi_square_for_gaussians1}
	        Let $k \in \Z_+$, distributions $P_i$ and $\lambda_i\geq 0$, for $i \in [k]$ such that  $\sum_{i=1}^k \lambda_i=1$. We have that $\chi^2\left(\sum_{i=1}^k \lambda_i P_i,D  \right)= \sum_{i=1}^k \sum_{j=1}^k \lambda_i \lambda_i \chi_D(P_i,P_j)$.
	\end{fact}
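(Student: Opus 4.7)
The plan is a direct algebraic expansion from the definition of $\chi^2$-divergence, which should amount to little more than expanding a square and regrouping terms, exploiting the fact that the mixing weights $\lambda_i$ sum to $1$.

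Concretely, I would start from Definition~\ref{def:pc}, which gives $\chi^2(Q,D) = \int Q^2/D\, \dx - 1$ and, for two distributions, $\chi_D(P_i,P_j) = \int P_iP_j/D\, \dx - 1$. Setting $Q = \sum_{i=1}^k \lambda_i P_i$ and expanding the square in the integrand yields
\begin{align*}
\chi^2(Q,D) + 1 = \int \frac{\bigl(\sum_{i=1}^k \lambda_i P_i(x)\bigr)^2}{D(x)}\,\dx = \sum_{i=1}^k\sum_{j=1}^k \lambda_i\lambda_j \int \frac{P_i(x)P_j(x)}{D(x)}\,\dx = \sum_{i=1}^k\sum_{j=1}^k \lambda_i\lambda_j \bigl(\chi_D(P_i,P_j)+1\bigr).
\end{align*}
The constant terms contribute $\sum_{i,j}\lambda_i\lambda_j = \bigl(\sum_i\lambda_i\bigr)^2 = 1$, which cancels the $+1$ on the left-hand side, leaving the claimed identity $\chi^2(Q,D) = \sum_{i,j}\lambda_i\lambda_j\,\chi_D(P_i,P_j)$ (after fixing the apparent typo $\lambda_i\lambda_i \to \lambda_i\lambda_j$).

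There is essentially no obstacle here: the only point worth flagging is that one should interpret the statement under the mild standing assumption that the supports of all $P_i$ are contained in the support of $D$, so that the integrals $\int P_iP_j/D$ are well defined (finiteness being a separate matter handled in the preceding Claim~\ref{claim:finite-chi-square-restated}). The use of Fubini/linearity of integration to swap the finite double sum with the integral is justified because $k$ is finite and all integrands are nonnegative. I would present the computation as a three-line display and be done; no auxiliary lemma is required.
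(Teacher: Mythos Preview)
Your proposal is correct and essentially identical to the paper's own proof: both add $1$ to $\chi^2(Q,D)$, expand the square under the integral, identify each cross term with $\chi_D(P_i,P_j)+1$, and use $\sum_{i,j}\lambda_i\lambda_j=(\sum_i\lambda_i)^2=1$ to cancel the constant. You also correctly spotted the typo $\lambda_i\lambda_i\to\lambda_i\lambda_j$ in the statement.
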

	\begin{proof}
	    \begin{align*}
	        \chi^2\left(\sum_{i=1}^k \lambda_i P_i,D  \right) + 1 &= \int_{\R} \left( \sum_{i=1}^k \lambda_i P_i(x)\right)^2/D(x) \dx 
	        = \sum_{i=1}^k \sum_{j=1}^k \lambda_i \lambda_j \int_\R  P_i(x)P_j(x)/D(x)\dx \\
	        &= \sum_{i=1}^k \sum_{j=1}^k \lambda_i \lambda_j \left(  \chi_D(P_i,P_j) + 1\right) 
	        =  \sum_{i=1}^k \sum_{j=1}^k \lambda_i \lambda_j  \chi_D(P_i,P_j) + \left(\sum_{i=1}^k \lambda_i\right)^2\\
	        &=  \sum_{i=1}^k \sum_{j=1}^k \lambda_i \lambda_j  \chi_D(P_i,P_j) + 1 \;.
	    \end{align*}
	\end{proof}
	\begin{fact}\label{fact:chi_square_for_gaussians3}
	\begin{align*}
	    \chi_{\cN(0,1)}\left(\cN(\mu_1,\sigma_1^2), \cN(\mu_2,\sigma_2^2)\right) = \frac{\exp\left(-\frac{\mu_1^2(\sigma_2^2-1) +2\mu_1 \mu_2 + \mu_2^2(\sigma_1^2-1)}{2\sigma_1^2(\sigma_2^2-1) - 2\sigma_2^2} \right)}{\sqrt{\sigma_1^2 + \sigma_2^2 - \sigma_1^2\sigma_2^2}} - 1\;.
	\end{align*}
	\end{fact}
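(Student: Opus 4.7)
The plan is to compute the integral defining the pairwise correlation directly: by definition,
\[
\chi_{\cN(0,1)}(\cN(\mu_1,\sigma_1^2),\cN(\mu_2,\sigma_2^2)) \;=\; \int_{\R} \frac{\phi_{\mu_1,\sigma_1^2}(x)\,\phi_{\mu_2,\sigma_2^2}(x)}{\phi(x)}\,\dx \;-\; 1,
\]
where each factor is an exponential of a quadratic in $x$, so the integrand is of the form $K \exp(-A x^2 + B x + C)$ with explicit constants. Completing the square and applying the standard Gaussian integral $\int e^{-Ax^2+Bx+C}\dx = \sqrt{\pi/A}\,e^{B^2/(4A)+C}$ (valid provided $A>0$, i.e.\ $\sigma_1^2+\sigma_2^2-\sigma_1^2\sigma_2^2>0$) then yields a closed form. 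The whole proof is a single direct computation.

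In detail, I would first read off, from the expansion of $-(x-\mu_1)^2/(2\sigma_1^2) - (x-\mu_2)^2/(2\sigma_2^2) + x^2/2$, the coefficients
\[
A \;=\; \frac{\sigma_1^2+\sigma_2^2-\sigma_1^2\sigma_2^2}{2\sigma_1^2\sigma_2^2}, \qquad B \;=\; \frac{\mu_1\sigma_2^2+\mu_2\sigma_1^2}{\sigma_1^2\sigma_2^2}, \qquad C \;=\; -\frac{\mu_1^2}{2\sigma_1^2} - \frac{\mu_2^2}{2\sigma_2^2}.
\]
The Gaussian integral together with the leading $(2\pi)^{-1/2}(\sigma_1\sigma_2)^{-1}$ collapses to the clean prefactor $1/\sqrt{\sigma_1^2+\sigma_2^2-\sigma_1^2\sigma_2^2}$, since $\sigma_1\sigma_2\sqrt{2A}=\sqrt{\sigma_1^2+\sigma_2^2-\sigma_1^2\sigma_2^2}$.

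What remains is the bookkeeping for the exponent $B^2/(4A)+C$. The plan is to place everything over the common denominator $2\sigma_1^2\sigma_2^2(\sigma_1^2+\sigma_2^2-\sigma_1^2\sigma_2^2)$, expand $(\mu_1\sigma_2^2+\mu_2\sigma_1^2)^2 - \mu_1^2\sigma_2^2(\sigma_1^2+\sigma_2^2-\sigma_1^2\sigma_2^2) - \mu_2^2\sigma_1^2(\sigma_1^2+\sigma_2^2-\sigma_1^2\sigma_2^2)$, and watch the $\mu_1^2\sigma_2^4$ and $\mu_2^2\sigma_1^4$ terms cancel, leaving a common factor $\sigma_1^2\sigma_2^2$ that cancels from the denominator. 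The remaining quotient simplifies to
\[
\frac{\mu_1^2(\sigma_2^2-1)+2\mu_1\mu_2+\mu_2^2(\sigma_1^2-1)}{2(\sigma_1^2+\sigma_2^2-\sigma_1^2\sigma_2^2)},
\]
and finally I would rewrite the denominator as $-(2\sigma_1^2(\sigma_2^2-1)-2\sigma_2^2)$ to match the sign convention in the stated formula, producing the displayed expression minus the $1$. The only substantive work is the algebraic cancellation in the numerator; there is no real obstacle, but care with signs (particularly flipping $2(\sigma_1^2+\sigma_2^2-\sigma_1^2\sigma_2^2)$ into $-(2\sigma_1^2(\sigma_2^2-1)-2\sigma_2^2)$) is the easiest place to slip up.
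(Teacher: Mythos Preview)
Your proposal is correct and is precisely the ``direct calculation'' the paper alludes to; the paper does not actually spell out a proof for this fact, and your Gaussian-integral computation with completion of the square is the intended argument.
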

	The proof of Claim~\ref{claim:finite-chi-square-restated} then consists of applying Fact~\ref{fact:chi_square_for_gaussians1} and using Fact~\ref{fact:chi_square_for_gaussians3}  for each one of the generated terms.

\end{document}